\let\isExtended\undefined
\setlist{nolistsep}
\newcommand{\squishlist}{
   \begin{list}{$\bullet$}
    { \setlength{\itemsep}{0pt}      \setlength{\parsep}{3pt}
      \setlength{\topsep}{3pt}       \setlength{\partopsep}{0pt}
      \setlength{\leftmargin}{1.0em} \setlength{\labelwidth}{1em}
      \setlength{\labelsep}{0.5em} } }
\newcommand{\squishend}{
    \end{list}  }
\numberwithin{equation}{section}
\newcommand{\allnotes}[1]{\textit{#1}}
\newcommand{\fc}[1]{\allnotes{{\it\color{purple}[FC: #1]}}}
\newcommand{\jj}[1]{\allnotes{{\it\color{green}[JJ: #1]}}}
\newcommand{\neil}[1]{}
\newcommand{\SecNS}[1]{\S\ref{sec:#1}\xspace}
\newcommand{\Sec}[1]{\xspace\S\ref{sec:#1}\xspace}
\newcommand{\App}[1]{Appendix~\ref{app:#1}}
\newcommand{\Eqtn}[1]{Equation~\ref{eq:#1}}
\newcommand{\Tab}[1]{Table~\ref{tab:#1}}
\newcommand{\ie}{i.e.,\xspace}
\newcommand{\eg}{e.g.,\xspace}
\newcommand{\Para}[1]{\smallskip\noindent\textbf{#1}}
\newcommand{\nop}[1]{}
\newcommand{\Fig}[1]{Fig.~\ref{fig:#1}}
\newcommand{\system}{$\textsc{Privid}$\xspace}
\newcommand{\System}{$\textsc{Privid}$\xspace}
\providecommand{\abs}[1]{\lvert#1\rvert} 
\DeclareMathAlphabet{\mathpzc}{OT1}{pzc}{m}{it}
\theoremstyle{definition}
\newtheorem{defn}{\normalfont\bfseries\scshape Definition}[section]
\newtheorem{theorem}{Theorem}[section]
  \newtheorem{lemma}[theorem]{Lemma}
\newcommand{\tightcaption}[1]{\vspace{-15pt}\caption{\normalfont {\small #1}}
\vspace{-13pt}
}
\newcommand{\auburn}{\texttt{campus}}
\newcommand{\hampton}{\texttt{highway}}
\newcommand{\shibuya}{\texttt{urban}}
\newcommand{\pkbounded}{$(\rho, K)$-bounded}
\newcommand{\pkneighboring}{$(\rho, K)$-neighboring}
\newcommand{\pkeprivacylong}{$(\rho, K, \epsilon)$-event-duration privacy}
\newcommand{\pkeprivacy}{$(\rho, K, \epsilon)$-privacy\xspace}
\newcommand{\pkeprivate}{$(\rho, K, \epsilon)$-private}
\def\indvs{individuals}
\def\Indv{Individual}
\newcommand{\tightsection}[1]{\vspace{-0.15cm}\section{#1}\vspace{-0.05cm}}
\newcommand{\tightsubsection}[1]{\vspace{-0.05cm}\subsection{#1}\vspace{-0.05cm}}
\newcommand{\extendedonly}[1]{#1}
\newcommand{\submissiononly}[1]{}
\newcommand{\extendedonly}[1]{}
\newcommand{\submissiononly}[1]{#1}
\newcommand{\fps}{\texttt{fps}}
\newcommand{\avgf}{\text{Avg}}
\newcommand{\pk}{(\rho,K)}
\newcommand{\policy}{\mathpzc{P}}
\newcommand{\rangecons}{\tilde{C_r}}
\newcommand{\sizecons}{\tilde{C_s}}
\newcommand{\Dp}[1]{\Delta_{\policy}(#1)}
\newcommand{\D}[1]{\Delta(#1)}
\newcommand{\Cr}[2]{\tilde{C_r}(#1, #2)}
\newcommand{\Cs}[1]{\tilde{C_s}(#1)}
\newcommand{\groupby}[2]{\prescript{}{#1}{\gamma}_{#2}}
\newcommand{\maxrows}{\mathsf{max\_rows}}
\newcommand{\maxchunks}[1]{\mathsf{max\_chunks}(#1)}
\newcommand{\query}{\mathbb{Q}}
\newcommand{\selectcmd}{\texttt{SELECT}}
\newcommand{\splitcmd}{\texttt{SPLIT}}
\newcommand{\processcmd}{\texttt{PROCESS}}
\titlespacing*{\section}{0pt}{4pt}{4pt}
\titlespacing*{\subsection}{0pt}{4pt}{4pt}
\titlespacing*{\subsubsection}{0pt}{0pt}{0pt}
\begin{document}
\newcommand{\beaver}{\includegraphics[width=12pt]{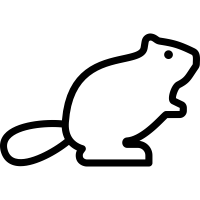}}
\newcommand{\tiger}{\includegraphics[width=12pt]{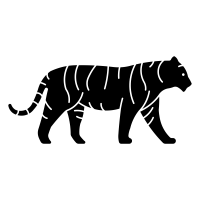}}
\newcommand{\bean}{\includegraphics[width=10pt]{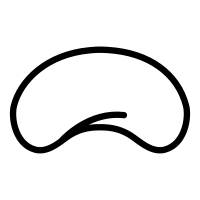}}
\newcommand{\knight}{\includegraphics[width=8pt]{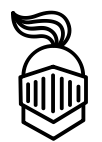}}

\title{\vspace{-20mm}\Large \bf Privid: Practical, Privacy-Preserving Video Analytics Queries\vspace{-7mm}}
\def\refMIT{\beaver}
\def\refPrinceton{\tiger}
\def\refChicago{\bean}
\def\refRutgers{\knight}
\author{
    \small
    Frank Cangialosi\refMIT{}, 
    Neil Agarwal\refPrinceton{}, 
    Venkat Arun\refMIT{}, 
    Junchen Jiang\refChicago{},
    Srinivas Narayana\refRutgers{}, 
    Anand Sarwate\refRutgers{},
    Ravi Netravali\refPrinceton{}
    \vspace{-1mm}
    \\
    \small
    \refMIT{} MIT CSAIL
    \refPrinceton{} Princeton University
    \refChicago{} University of Chicago
    \refRutgers{} Rutgers University
    \vspace{-1mm}
    \\
    \small
    privid@csail.mit.edu
\vspace{-0.1cm}
}
\date{\vspace{-8mm}}
\maketitle

\begin{abstract}

Analytics on video recorded by cameras in public areas have the potential to fuel many exciting applications, but also pose the risk of intruding on individuals' privacy. Unfortunately, existing solutions fail to practically resolve this tension between utility and privacy, relying on perfect detection of all private information in each video frame---an elusive requirement. 
This paper presents:
(1) a new notion of differential privacy (DP) for video analytics, \pkeprivacylong{}, which protects all private information visible for less than a particular duration, rather than relying on perfect detections of that information, and
(2) a practical system called \system{} that enforces duration-based privacy even with the (untrusted) analyst-provided deep neural networks that are commonplace for video analytics today. Across a variety of videos and queries, we show that \system{} achieves accuracies within 79-99\% of a non-private system.

\end{abstract}

\begin{sloppypar}
\tightsection{Introduction}

High-resolution video cameras are now pervasive in public settings~\cite{LondonCamera,infowatch-surveillance,paris-hospital, beijing-cameras,ChicagoCamera}, with deployments throughout city streets, in our doctor's offices and schools, and in the places we shop, eat, or work. Traditionally, these cameras were monitored manually, if at all, and used for security purposes, such as providing evidence for a crime or locating a missing person.
However, steady advances in computer vision~\cite{pedestrian-detection-iccv15,cnn-face-cvpr15,pyramid-network-cvpr17,facial-point-cvpr13,imagenet-classification-cacm17}
have made it possible to automate video-content analytics (both live and retrospective) at a massive scale across entire networks of cameras.
While these trends enable a variety of important applications~\cite{are-we-ready-for-ai-powered-security-cameras,powering-the-edge-with-ai-in-an-iot-world,vision-zero,smart-mall} and fuel much work in the systems community~\cite{blazeit,miris,noscope,reducto,rocket,spatula,chameleon,focus,videostorm}, they also enable privacy intrusions at an unprecedented level~\cite{epic_surveillance, stanley2019dawn}.

As a concrete example, consider the operator for a network of city-owned cameras. Different organizations (i.e., ``analysts'') want access to the camera feeds for a range of needs: (1) health officials want to measure the fraction of people wearing masks and following COVID-19 social distancing orders~\cite{analyzing-social-distancing}, (2) the transportation department wants to monitor the density and flow of vehicles, bikes, and pedestrians to determine where to add sidewalks and bike lanes~\cite{traffic-analysis}, and (3) businesses are willing to pay the city to understand shopping behaviors for better planning of promotions~\cite{retail_example}.


Unfortunately,
freely sharing the video with these parties may enable them to violate the privacy of individuals in the scene by tracking where they are, and when.
For example, the ``local business'' may actually be a bank or insurance company that wants to track individuals' private lives for their risk models, while well-known companies~\cite{google-mission-creep} or government agencies may succumb to mission creep~\cite{theyarewatching, streetlights-mission-creep}. Further, any organizations with good intentions could have employees with malicious intent who wish to spy on a friend or co-worker~\cite{aclu-video-surveillance, camera-spy-abuses}.

There is an \emph{inherent tension between utility and privacy}.
In this paper, we ask: is it possible to enable these (untrusted) organizations to use the collected video for analytics, while also guaranteeing citizens that their privacy will be protected? Currently, the answer is no. 
As a consequence, many cities have outright banned analytics on public videos, even for law enforcement purposes~\cite{camera-ban-sf,camera-ban-oak}. 

While a wide variety of solutions have been proposed (\S\ref{sec:prior}), ranging from computer vision (CV)-based obfuscation~\cite{denaturing-survey,openface,i-pic,pecam} (\eg{} blurring faces) to differential privacy (DP)-based methods~\cite{verro,videodp},
they all use some variant of the same basic strategy: 
find {\em all} private information in the video, then hide it.
Unfortunately, the first step alone can be unrealistic 
in practice (\SecNS{prior:denaturing}); it requires:
(1) an explicit specification of all private information that could be used to identify an individual (\eg{} their backpack), and then
(2) the ability to spatially \emph{locate} all of that information in \emph{every} frame of the video---a near impossible task even with state-of-the-art CV algorithms~\cite{coco_scoreboard}.
Further, 
if these approaches cannot find some private information, they fundamentally cannot \emph{know} that they missed it. Taken together, they can provide, at best, a conditional and brittle privacy guarantee such as the following:
if an individual is only identifiable by their face, and their face is detectable in every frame of the video by the implementation's specific CV model in the specific conditions of this video, then their privacy will be protected.

This paper takes a pragmatic stance and aims to provide
a definitively achievable privacy guarantee that captures the 
aspiration of prior approaches
(\ie{} individuals cannot be identified in any frame or tracked across frames)
despite the limitations that plague them.
To do this, we leverage two key observations: 
(1) a large body of video analytics queries are aggregations~\cite{blazeit,tasti}, and 
(2) they typically aggregate over durations of video (\eg{} hours or days) that far exceed the duration of any one individual in the scene (\eg{} seconds or minutes)~\cite{blazeit}.
Building on these observations, we make three contributions
by jointly designing a new notion of duration-based privacy for video analytics, a system implementation to realize it, and a series of optimizations to make it practical.

\Para{Duration-based differential privacy.}
To remove the dependence on spatially locating all private information in each video frame,
we reframe the approach to privacy 
to instead focus on the temporal aspect of private information in video data, i.e., {\em how long} something is visible to a camera. 
More specifically, building on the differential privacy (DP) framework, we propose a new notion of privacy for video,
\pkeprivacylong{} (formalized in \SecNS{definition:formal}):
\emph{anything} visible to a camera less than $K$ times for less than $\rho$ seconds each time (``\pkbounded{}'') is protected with $\epsilon$-DP~\cite{original-dp}.
Regardless of the video owner's underlying privacy policy, they express it only through an appropriate $\pk$ that captures information they deem private.
For example, if they choose $\pk$ such that \emph{all} individuals are visible for less time, then \pkeprivacy{} prevents an analyst from determining whether or not \emph{any} single individual appeared at any time, which in turn precludes tracking them. We discuss other policies in \Sec{definition:usage}.

This notion of privacy has three benefits. 
First, it decouples the definition of privacy from its enforcement. The enforcement mechanism does not need to make any decisions about what is private or find private information to protect it; everything (private or not) captured by the bound is protected. 
Second, a $\pk$ bound that captures a set of individuals implicitly captures and thus protects any information visible for the same (or less) time without specifying it (\eg{} an individual's backpack, or even their gait).
Third, protecting all individuals in a video scene requires only their maximum duration, and estimating this value is far more robust to the imperfections of CV algorithms than precisely locating those individuals and their associated objects in each frame. 
For example, even if a CV algorithm misses individuals in some frames (or entirely), it can still capture a representative sample and piece together trajectories well enough to estimate their duration (\SecNS{definition:usage}).

\Para{Privid: a differentially-private video analytics system.}
Realizing \pkeprivacy{} (or more generally, any DP mechanism) in today's video analytics pipelines faces several challenges. 
In traditional database settings, implementing DP requires adding random noise proportional to the \emph{sensitivity} of a query, \ie{} the maximum amount that any one piece of private information could impact the query output.
However, bounding the sensitivity
is difficult in video analytics pipelines because (1) pipelines typically operate as bring-your-own-query-implementation to support the wide-ranging applications described earlier~\cite{cv-frameworks,rocket,amazon-rekognition,google-cloud-vision,msft-cv,azure-face,ibm-maximo}, and (2) these implementations involve video processing algorithms that increasingly rely on deep neural networks (DNNs), which are notoriously hard to introspect or vet (and thus, trust). 


To bound the sensitivity necessary for \pkeprivacy{} while supporting untrusted analyst-provided DNNs,
\system{} accepts analyst queries structured in a {\em split-process-aggregate} format.
In particular, (i) videos are split temporally into contiguous chunks, (ii) each chunk of video is processed by an arbitrary analyst-provided DNN to produce an (untrusted) table, (iii) an aggregation over the table computes a result, and (iv) noise is added to the result before release. 
In \Sec{system:sensitivity}, we show how this structure enables \System to relate the duration (number of chunks) where an entity is visible to the appropriate amount of noise to add to the aggregate result.
This in turn allows \system{} to bound noise without relying on a trusted table, something prior DP query mechanisms have not had to do~\cite{privatesql,flex,google-dp}.

\Para{Optimizations for improved utility.} 
To further enhance utility, \system{} provides two video-specific optimizations to lower the required noise while preserving an equivalent level of privacy: (i) the ability to mask regions of the video frame, (ii) the ability to split frames spatially into different regions, and aggregate results from these regions. These optimizations result in limiting the portion of the aggregate result that any individual’s presence can impact, enabling a ``tighter'' $\pk$ bound and in turn a higher quality query result.

\Para{Evaluation.}
We evaluate \system{} using a variety of public videos and a diverse range of queries inspired by recent work in this space.
\system{} achieves accuracy within 79-99\% of a non-private system, while satisfying an instantiation of \pkeprivacy{} that protects all individuals in the video.
We discuss ethics in \Sec{ethics}, and will open-source \system{} upon publication.




\section{Problem Statement}
\label{sec:problem-statement}

\tightsubsection{Video Analytics Background} 
\label{sec:video-analytics}

Video analytics pipelines are employed to answer high-level questions about segments of video captured
from one or more cameras and across a variety of time ranges. Example questions include ``how many people entered store X each hour?'' or ``which roads housed the most accidents in 2020?'' A question is expressed as a \emph{query}, which encompasses all of the computation necessary to answer that question.\footnote{Our definition is distinct from related work, which defines a query as returning intermediate results (\eg{} bounding boxes) rather than the final answer to the high-level question.} For example, to answer the question ``what is the average speed of red cars traveling along road Y?'', the query would include an object detection algorithm to recognize cars, an object tracking algorithm to group them into trajectories, an algorithm for computing speed from a trajectory, and logic to filter only the red cars and average their speeds.

%

\tightsubsection{Problem Definition}
\label{sec:problem}

Video analytics pipelines broadly involve four logical roles (though any combination may pertain to the same entity):
\begin{itemize}
    \item \textbf{\Indv{}s}, whose behavior and activity are observed by the camera.
    \item \textbf{Video Owner}, who operates the camera and thus the video data it captures. 
    \item \textbf{Analyst}, who wishes to run queries over the video. 
    \item \textbf{Compute Provider},
    who executes the analyst's query.
\end{itemize}


\noindent In this work, we are concerned with the dilemma of a video owner. 
The owner would like to enable a variety of (untrusted) analysts to answer questions about its videos (such as those in~\Sec{video-analytics}), as long as the results do not infringe on the privacy of the individuals who appear in the videos.
Informally, privacy ``leakage'' occurs when an analyst can learn something about a specific individual that they did not know before executing a query. To practically achieve these properties, a system must meet three concrete goals:
\begin{enumerate}
    \item \textbf{Formal notion of privacy}. The system's privacy policies should formally describe the type and amount of privacy that could be lost through a query. Given a privacy policy, the system should be able to provide a \emph{guarantee} that it will be enforced, regardless of properties of the data or query implementation.

    \item \textbf{Maximize utility for analysts.} The system should support queries whose final \emph{result} does not infringe on the privacy of any individuals. Further, if accuracy loss is introduced to achieve privacy for a given query, it should be possible to bound that loss (relative to running the same query over the original video, without any privacy preserving mechanisms). Without such a bound, analysts would be unable to rely on any provided results.

    
    
    \item \textbf{``Bring Your Own Model''}. Computer vision models are at the heart of modern video processing. However, there is not one or even a discrete set of models for all tasks and videos. Even the same task may require different models, parameters, or post-processing steps when applied to different videos. In many cases, analysts will want to use models that they trained themselves, especially when training involves proprietary data. Thus, a system must allow analysts to provide their own video-processing models. 
\end{enumerate}

It is important to note that the analytics setting we seek to enable is distinct from \emph{video security} (\eg{} finding a stolen car or missing child), which \emph{requires} identification of a particular individual, and is thus directly at odds with individual privacy. 
In contrast, analytics queries involve searching for patterns and trends in large amounts of data; intermediate steps may operate over the data of specific individuals, but they do not distinguish individuals in their final aggregated result (\SecNS{video-analytics}).

\tightsubsection{Threat Model}
\label{sec:threat-model}
The video owner employs a privacy-preserving system to handle queries about a set of cameras it manages; the system retains full control over the video data, analysts can only interact with it via the query interface.
The video owner does not trust the analysts. Any number of analysts may be malicious and may collude to violate the privacy of the same individual.
However, analysts trust the video owner to be honest.
Analysts are also willing to share their query implementation (so that the video owner can execute it). 

Analysts pose queries adaptively (\ie{} the full set of queries is not known ahead of time, and analysts may utilize the results of prior queries when posing a new one). 
A single query may operate over video from multiple cameras, but only those under the control of the same owner. We the video owner has sufficient computing resources to execute the query, either via resources that they own, or through the secure use of third-party resources~\cite{visor}.

In summary, we focus entirely on managing the leakage of information about individuals from the video owner to the analyst, and consider any other privacy or trust concerns between the parties to be orthogonal. 


\tightsection{Limitations of Related Work}
\label{sec:prior}
Before presenting our solution, we consider prior privacy-preserving mechanisms (both for video and in general). Unfortunately, each fails to satisfy at least one of the goals in~\S\ref{sec:problem}.

\nop{
\tightsubsection{Challenges}
 \label{sec:prior}

Protecting the privacy of \indvs{} in video is fundamentally difficult because videos are made up of pixels, which describe how a scene visually \emph{appeared} at a given time, but not list which objects the scene semantically contained (which is up to interpretation, and is the subject of computer vision). Further, videos do not describe how the objects in separate frames are connected, much less the actions or behaviors those objects exhibit across the sequence of frames. In other words, there is no annotated connection between the entities we semantically wish to protect (\indvs{} across time) and their representation in the data (groups of pixels across frames). 

Unfortunately, generating accurate and comprehensive annotations is practically challenging. Manual annotation is impractical due to the massive scale of video data (one camera generates 2.6 million frames per day). On the other hand, object detection algorithms, although automated (and scalable), are imperfect~\cite{!!!}. Even in ideal conditions, state-of-the-art algorithms miss objects or produce erroneous classification labels; performance steeply degrades with more challenging conditions such as poor lighting, distant objects, and low resolution, all of which are common in public video. Worse, linking objects across time into unique \indvs{} (i.e., \emph{recognition}) is even more challenging and error-prone.
}

\tightsubsection{Denaturing}
\label{sec:prior:denaturing}

The predominant approach to privacy preservation with video data is \emph{denaturing}~\cite{denaturing-survey,openface,i-pic,pecam,prisurv,privacycam}, whereby systems aim to obscure (\eg{} via blurring~\cite{i-pic} or blocking~\cite{openface} as in~\Fig{denature:fails}) any private information in the video before releasing it for analysis. In principle, if nothing private is left in the video, then privacy concerns are eliminated.

\par
The fundamental issue is that denaturing approaches require \emph{perfectly} accurate and comprehensive knowledge of the spatial locations of private information in \emph{every frame} of a video. Any private object that goes undetected, even in just a single frame, will not be obscured and thus directly leads to a leakage of private information.

To detect private information, one must first semantically define \emph{what} is private, \ie{} what is the full set of information linked, directly or indirectly, to the privacy of each individual? While some information is obviously linked (\eg{} an individual's face), it is difficult to determine \emph{all} such information for all individuals in all scenarios. For instance, a malicious analyst may have prior information that a video owner does not, such as knowledge that a particular individual carries a specific bag or rides a unique bike (\eg{}~\Fig{denature:fails}-B). Further, even with a semantic definition, detecting private information is difficult. State-of-the-art computer vision algorithms commonly miss objects or produce erroneous classification labels in favorable video conditions~\cite{zhu2017flow}; performance steeply degrades in more challenging conditions such as poor lighting, distant objects, and low resolution, all of which are common in public video. Taken together, the problem is that denaturing systems cannot guarantee whether or not a private object was left in the video, and thus fail to provide a formal notion of privacy (violating Goal 1).

Denaturing also falls short from the analyst's perspective. 
First, it inherently precludes (safe) queries that aggregate over private information (violating Goal 2). For example, an urban planner may wish to count the number of people that walk in front of camera A and then camera B. Doing so requires identifying and cross-referencing individuals between the cameras (which is not possible if they have been denatured), but the aggregate count may be large and safe to release.\footnote{As a workaround, the video owner could annotate denatured objects with query-specific information, but this would conflict with Goal 3.} Second, obfuscated objects are not naturally occurring and thus video processing pipelines are not designed to handle them. If the analyst's processing code and models have not been trained explicitly on the type of obfuscation the video owner is employing, it may behave in unpredictable and unbounded ways (violating Goal 2). 

\begin{figure}
  \centering
  \includegraphics[width=0.87\columnwidth]{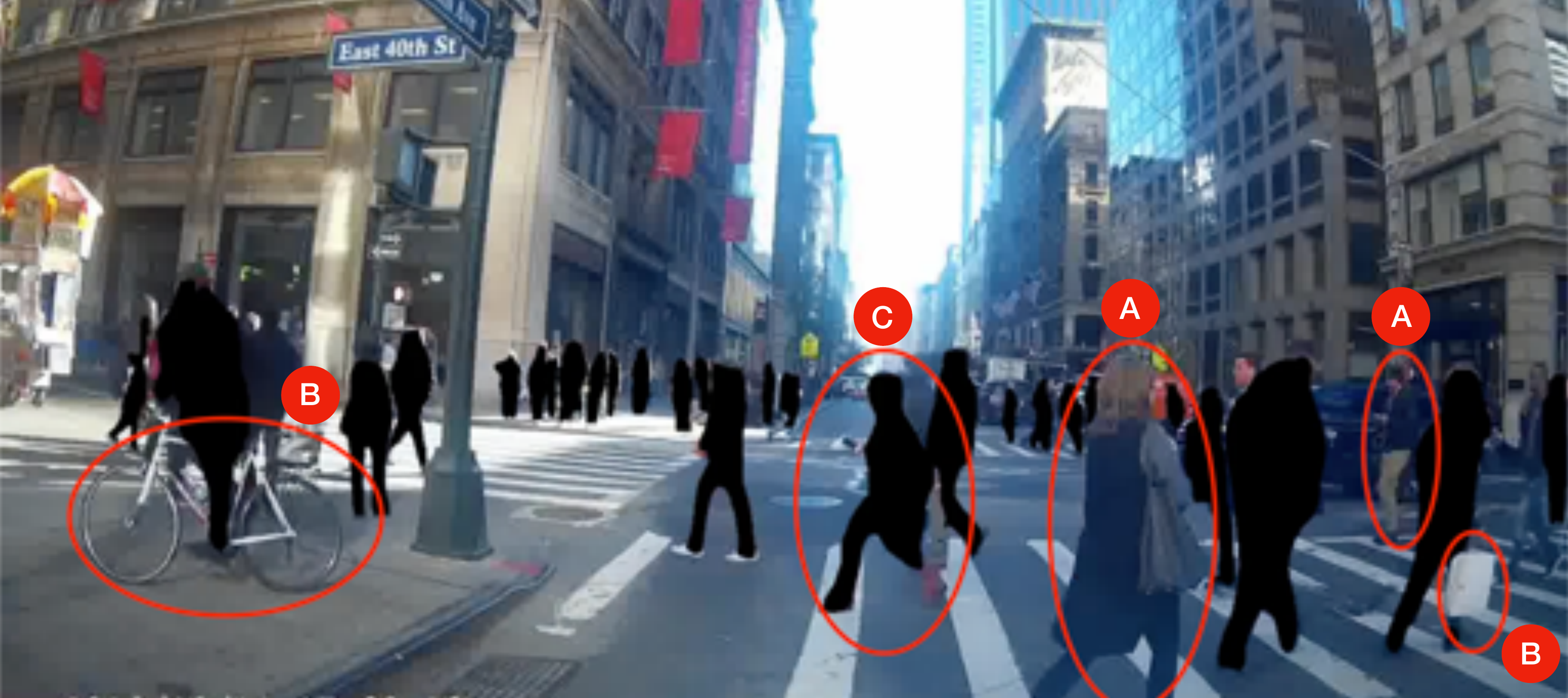}
  \vspace{5pt}
  \tightcaption{
      A video clip after (silhouette) denaturing exemplifying some of its shortcomings: (A) entirely missed detections, (B) potentially-identifying objects not incorporated in privacy definition, (C) silhouette may reveal gait.
    }
  \label{fig:denature:fails}
\end{figure}

\tightsubsection{Differential Privacy}
\label{sec:prior:dp}

Differential Privacy (DP) is a strong formal definition of privacy from the databases community~\cite{original-dp}. 
It enables analysts to compute aggregate statistics over a database, while protecting the presence of any individual entry in the database. 
DP is not a privacy-preserving mechanism itself, but rather a goal that an algorithm can aim to satisfy.
Informally speaking, an algorithm satisfies DP if adding or removing an individual from the input database does not noticeably change the output of computation, almost as if any given individual were not present in the first place. More precisely,


\begin{defn}
Two databases $D$ and $D'$ are \emph{neighboring} if they differ in the data of only a single user (typically, a single row in a table).
\end{defn}
\begin{defn}
A randomized algorithm $\mathcal{A}$ is \emph{$\epsilon$-differentially private} if, for all pairs of neighboring databases $(D, D')$ and all $S \subseteq Range(\mathcal{A})$: 
\begin{equation}
\vspace{-5pt}
\mathrm{Pr}[\mathcal{A}(D) \in S] \le e^\epsilon \mathrm{Pr}[\mathcal{A}(D') \in S]
\label{eq:classic-dp}
\end{equation}
\end{defn}


A non-private computation (\eg{} computing the sum of all bank balances) is typically made differentially private by adding random noise sampled from a Laplace distribution to the final result of the computation~\cite{original-dp}. The scale of noise is set proportional to the \emph{sensitivity} ($\Delta$) of the computation, or the maximum amount by which the computation's output could possibly change due to the presence/absence of any one individual. 
For instance, suppose a database contains a value $v_i \in V$ for each user $i$, where $l \le v_i \le u$. If a query seeks to sum all values in $V$, any one individual's value can influence that sum by at most $\Delta = u - l$, and thus adding noise with scale $u-l$ would satisfy DP. \extendedonly{From an analyst's perspective, given a pair of outputs generated from neighboring databases, they cannot determine how much, if any, of the difference is due to the true data as opposed to the noise.}

\Para{Challenges.} Determining the sensitivity of a computation is the key ingredient of satisfying DP. It requires understanding (a) how individuals are delineated in the data, and (b) how the aggregation incorporates information about each individual. In the tabular data structures that DP was designed for, these are straightforward. Each row (or a set of rows sharing a unique key) typically represents one individual, and queries are expressed in relational algebra, which describes exactly how it aggregates over these rows. However, these answers do not translate to video data; we next discuss the challenges in the context of several applications of DP to video analytics.

Regarding \emph{requirement (a)}, as described in~\Sec{prior:denaturing}, it is difficult and error-prone to determine the full set of pixels in a video that correspond to each user (including all potentially identifying objects). Accordingly, prior attempts of applying DP concepts to video analytics~\cite{verro, videodp} that rely on perfectly defined and detected private information (via computer vision) fall short in the same way as denaturing approaches (violating Goal 1). 

Regarding \emph{requirement (b)},
arbitrary video processing executables (such as deep neural networks) are not transparent about how they incorporate information about private objects into their results. Thus, without a specific query interface, the ``tightest'' possible bound on the sensitivity of an arbitrary computation over a video is simply the entire range of the output space. In this case, satisfying DP would add noise greater than or equal to any possible output, precluding any utility (violating Goal 2).

Given that DP is well understood for tables, a natural idea would be for the video owner to use their own (trusted) model to first convert the video into a table (\eg{} of objects in the video), then provide a DP interface over \emph{that table}\footnote{Note: this strawman is analogous to the video owner adding DP on top of an existing video analytics system such as~\cite{blazeit, miris}, and the same arguments for lack of flexibility apply.} (instead of directly over the video itself).
However, in order to provide a guarantee of privacy, the video owner would need to completely trust the model that creates the table. This would be difficult even for a model created by the video owner themselves (any inaccurate outputs could undermine the guarantee), but more importantly it entirely precludes using a model created by the untrusted analyst (violating Goal 3).

\nop{

\section{Our Approach}
\label{sec:approach}



In order to address the challenges and shortcomings of the approaches in the previous section, we co-design a new notion of privacy for the video analytics setting, \pkeprivacylong{}, and a practical video analytics system which implements it, \system{}. Together, they address requirements (a) and (b) above, and ultimately provide a solution that satisfies all three goals from \Sec{problem}. 
In this section we provide an overview of the key insights and justify our design decisions. In the following sections, we formalize this new notion of privacy (\SecNS{definition}) and then describe the design of \system{} (\SecNS{system}).

\pkeprivacy{} directly addresses requirement (a): informally, it says an object's contribution to the output is directly proportional to the amount of time it is visible in the video, and provides $\epsilon$-DP for anything visible for less than a given duration.
\footnote{To clarify, \pkeprivacy{} is \emph{not} a departure from DP, but rather a concrete decision about what to protect in the context of video, where the foundation of DP, an ``individual'', is not well-defined.}
We choose this definition and delinate information in a video temporally (rather than spatially) for three reasons.

\emph{1. Unconditional guarantee.} 
In contrast to the traditional notion of DP which aims to protect all ``individuals'' and internalizes the definition of how much they can contribute to the database, \pkeprivacy{} asks the video owner to define how individuals contribute to the video via the duration parameters $\pk$.
This frees the mechanism from making any decisions about what is private, and thus the guarantee is not conditional on the specific system implementation or target video data.

\emph{2. Practical given today's CV algorithms.} 
To capture the traditional notion of DP (indistinguishability for any individual), the parameters $\pk$ must be set to the maximum duration an individual could appear in the video.
In practice, today's CV algorithms are capable\footnote{We choose conservative parameters to aid this, detailed in~\App{cv-params}.}
of producing a good estimate of this maximum duration, despite their imperfections which impede prior approaches.
Intuitively, this makes sense: a single upper-bound is coarser and amenable to seeing only a sample of individuals, compared to prior approaches which depend on finding every instance of all information tied to each individual.

We provide some validation of this intuition over three representative videos from our evaluation. 
For each video, we choose a 10-minute segment and manually annotate the duration of each individual\footnote{Any object with class \texttt{person}, \emph{or} a vehicle class, explained in \Sec{definition:usage}.}
(``Ground Truth''),
then use state-of-the-art object detection and tracking to estimate the durations and report the maximum (``CV'').
Our results, summarized in Table~\ref{tab:cv-effectiveness}, show that, while object detection misses a non-trivial fraction of bounding boxes in each video, when combined with tracking, it conservatively estimates the maximum duration. In other words, for each video, using these algorithms to parameterize a \pkeprivate{} system would successfully protect the privacy of \emph{all} individuals, while using them to implement a prior approach would not. 

\begin{figure}
  \centering
  \includegraphics[width=\columnwidth]{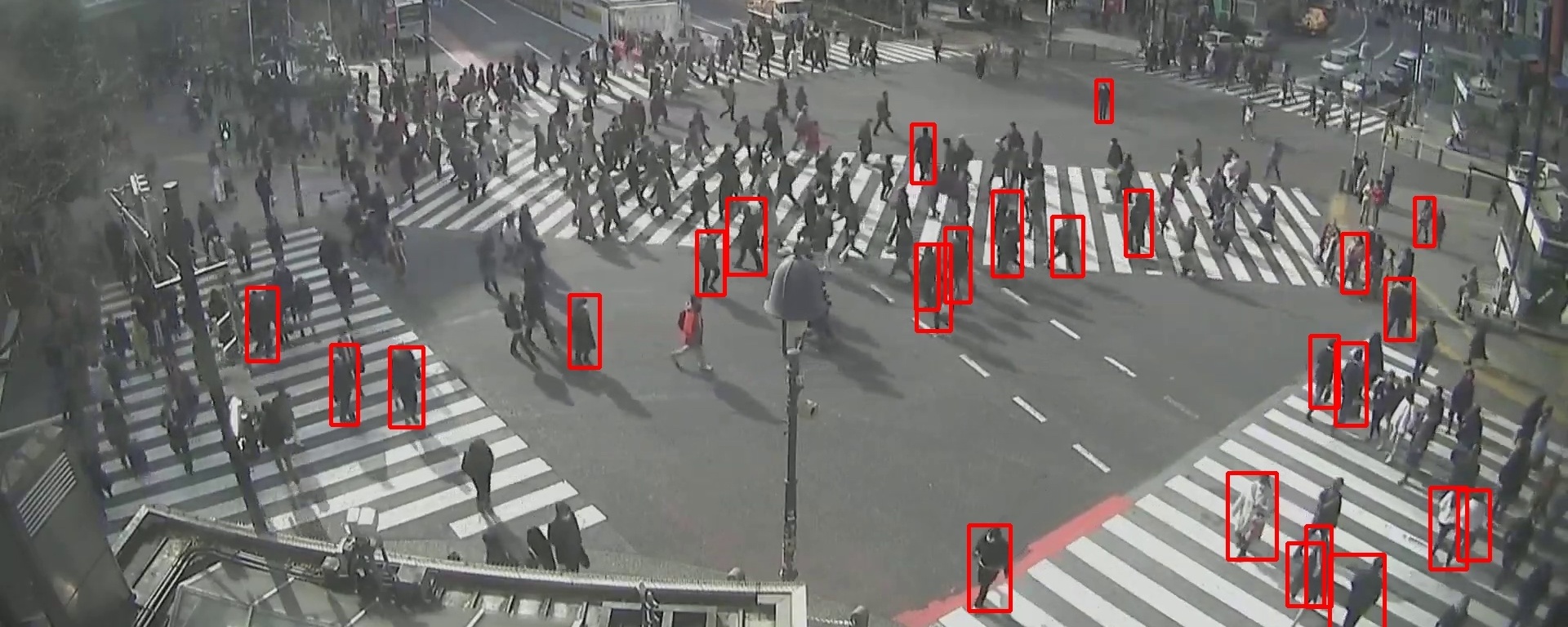}
  \vspace{5pt}
  \tightcaption{
  The results of a state-of-the-art object detection algorithm (filtered to ``person'' class) on one frame of \shibuya{}. The algorithm misses 76\% of individuals in the frame, but is \emph{still} able to produce a conservative bound on the maximum duration of all individuals (Table~\ref{tab:cv-effectiveness}).}
  \label{fig:shibuya-missed}
\end{figure}

\begin{table}[]
\centering
\small
\begin{tabular}{|l|l|l|l|}
\hline

\multirow{2}{*}{\textbf{Video}}    & 
\multicolumn{2}{c|}{\textbf{Maximum Duration}} &
\multirow{2}{*}{\textbf{\shortstack{\% Objects \\ CV Missed}}} \\ 
    \cline{2-3}
    & Ground Truth & CV Estimate & \\ 
    \hline
\auburn{}   & 81 sec              & 83 sec  & 29\%   \\ \hline
\hampton{}*  & 316 sec               & 439 sec    & 5\%   \\ \hline
\shibuya{}  & 270 sec               & 354 sec    & 76\%   \\ \hline
\end{tabular}
\vspace{15pt}
\tightcaption{Despite the imperfection of current CV algorithms (exemplified by \% objects they failed to detect), they still produce a conservative estimate on the duration of any individual's presence. *For the purposes of this experiment, we ignored cars that were parked for the entire duration of the segment.\vspace{-5pt}}
\label{tab:cv-effectiveness}
\end{table}

\emph{3. Amenable to current video analytics pipelines.} As described in \Sec{video-analytics}, videos are typically analyzed by using CV to process one frame at a time (rather than, say, a single pixel or row of pixels), and separate logic is used to aggregate these per-frame results. This enables a design for \system{} which addresses requirement (b): by forcing queries to be explicit about this split-process-aggregate format, it ensures that the amount an object can impact the output is bounded solely by its duration. Since it does not depend on the output of the model for any individual frame, the model itself does not need to be trusted (and thus can be provided by the untrusted analyst).

}

\section{Our Approach}
\label{sec:approach}



The shortcomings of the previous approaches arise from the concern that neither the appearances of individuals in a video nor their impact on query output is known or easy to identify automatically and precisely. 
To address these shortcomings, we co-design a new notion of duration-based privacy and a new video analytics system, \system{}, to implement it. 
The key idea of our solution is to shift from delineating information spatially (where each object is in each frame) to delineating information {\em temporally} (how long an individual is visible for).
Instead of defining $\epsilon$-DP based on individuals directly, duration-based privacy provides $\epsilon$-DP for {\em anything visible for less than a given duration}, which can in turn be used as a proxy for protecting individuals.
This section provides an intuitive overview of these components and why they together meet the two requirements of DP-based privacy (\SecNS{prior:dp}). 


\Para{New notion of duration-based privacy:}
With this definition, the contribution of each object to the system's output is bounded by the duration the object is visible in the video,
in terms of the number of times it appears ($K$) and the duration of each appearance ($\rho$).
We will formally define these terms in the next section.

\smallskip

\noindent{\em Implication: Definition of privacy decoupled from its enforcement.}
\pkeprivacy{} directly addresses requirement (a): 
recall from \SecNS{prior:denaturing} that the traditional privacy definition is difficult to enforce in video analytics because locating (and obscuring) {\em all} appearances of each individual is impractical.
In contrast, 
\pkeprivacy{} provides a simple {\em interface} for the video owner to express their privacy goal via the parameters $\pk$, which delineate how each individual contributes to the video.
Consequently, we free the enforcement mechanism from making any decisions about what is private (\eg detecting/blocking all appearances of each individual).

\Para{An execution framework to enforce it:}
\system{} adopts a split-process-aggregate framework to enforce \pkeprivacy{}: the video is split (temporally) into individual chunks, each chunk is independenly processed using the analyst-provided executable, and finally the results are aggregated using a familiar SQL-like statement.

\smallskip
\noindent{\em Implication: Agnostic to query model. } 
\system's split-process-aggregate framework directly addresses requirement (b): by forcing queries to be explicit about this format, it ties the amount an event can impact the output to its duration, rather than the particular output of the model on any individual frame.
Since video analytics pipelines typically process one frame (or a small group of frames) at a time (as opposed to the entire video in one shot) and use separate logic to aggregate per-frame results, \system's split-process-aggregate framework is naturally compatible to many (albeit not all) queries.

\Para{Amenable to off-the-shelf computer vision:}
Choosing meaningful values of $\pk$ that capture the video owner's privacy goals require some understanding of the particular scene captured by a camera. For example, protecting the privacy of all individuals requires knowing the maximum amount of time an individual could be visible to that camera. This could come from domain knowledge, but as we will see in \SecNS{definition:usage}, estimating a single maximum duration is feasible even with imperfect off-the-shelf CV algorithms and is considerably easier than specifying all private information of each individual in each frame. Intuitively, this makes sense: even if an individual is missed in a few frames, a tracking algorithm could still link their trajectory and correctly estimate their duration, and even if some individuals are missed entirely, a representative sample over a sufficiently long time period should closely capture the maximum.

\if
{

This ensures that the amount an object can impact the output is bounded solely by its duration. Since it does not depend on the output of the model for any individual frame, the model itself does not need to be trusted (and thus can be provided by the untrusted analyst).

{\em 3. Amenable to current video analytics pipelines.} As described in \Sec{video-analytics}, videos are typically analyzed by using CV to process one frame at a time (rather than, say, a single pixel or row of pixels), and separate logic is used to aggregate these per-frame results. This enables a design for \system{} which addresses requirement (b): by forcing queries to be explicit about this split-process-aggregate format, it ensures that the amount an object can impact the output is bounded solely by its duration. Since it does not depend on the output of the model for any individual frame, the model itself does not need to be trusted (and thus can be provided by the untrusted analyst).

In short, how does it achieve the three goals?

\pkeprivacy{} directly addresses requirement (a): informally, it says an object's contribution to the output is directly proportional to the amount of time it is visible in the video, and provides $\epsilon$-DP for anything visible for less than a given duration.
\footnote{To clarify, \pkeprivacy{} is \emph{not} a departure from DP, but rather a concrete decision about what to protect in the context of video, where the foundation of DP, an ``individual'', is not well-defined.}
We choose this definition and delinate information in a video temporally (rather than spatially) for three reasons.

\emph{1. Unconditional guarantee.} 
In contrast to the traditional notion of DP which aims to protect all ``individuals'' and internalizes the definition of how much they can contribute to the database, \pkeprivacy{} asks the video owner to define how individuals contribute to the video via the duration parameters $\pk$.
This frees the mechanism from making any decisions about what is private, and thus the guarantee is not conditional on the specific system implementation or target video data.

\emph{2. Pracitcal given today's CV algorithms.} 
To capture the traditional notion of DP (indistinguishability for any individual), the parameters $\pk$ must be set to the maximum duration an individual could appear in the video.
In practice, today's CV algorithms are capable\footnote{We choose conservative parameters to aid this, detailed in~\App{cv-params}.}
of producing a good estimate of this maximum duration, despite their imperfections which impede prior approaches.
Intuitively, this makes sense: a single upper-bound is coarser and amenable to seeing only a sample of individuals, compared to prior approaches which depend on finding every instance of all information tied to each individual.

We provide some validation of this intuition over three representative videos from our evaluation. 
For each video, we choose a 10-minute segment and manually annotate the duration of each individual\footnote{Any object with class \texttt{person}, \emph{or} a vehicle class, explained in \Sec{definition:usage}.}
(``Ground Truth''),
then use state-of-the-art object detection and tracking to estimate the durations and report the maximum (``CV'').
Our results, summarized in Table~\ref{tab:cv-effectiveness}, show that, while object detection misses a non-trivial fraction of bounding boxes in each video, when combined with tracking, it conservatively estimates the maximum duration. In other words, for each video, using these algorithms to parameterize a \pkeprivate{} system would successfully protect the privacy of \emph{all} individuals, while using them to implement a prior approach would not. 

\begin{figure}
  \centering
  \includegraphics[width=\columnwidth]{img/shibuya_missed.jpg}
  \vspace{5pt}
  \tightcaption{
  The results of a state-of-the-art object detection algorithm (filtered to ``person'' class) on one frame of \shibuya{}. The algorithm misses 76\% of individuals in the frame, but is \emph{still} able to produce a conservative bound on the maximum duration of all individuals (Table~\ref{tab:cv-effectiveness}).}
  \label{fig:shibuya-missed}
\end{figure}

\begin{table}[]
\centering
\small
\begin{tabular}{|l|l|l|l|}
\hline

\multirow{2}{*}{\textbf{Video}}    & 
\multicolumn{2}{c|}{\textbf{Maximum Duration}} &
\multirow{2}{*}{\textbf{\shortstack{\% Objects \\ CV Missed}}} \\ 
    \cline{2-3}
    & Ground Truth & CV Estimate & \\ 
    \hline
\auburn{}   & 81 sec              & 83 sec  & 29\%   \\ \hline
\hampton{}*  & 316 sec               & 439 sec    & 5\%   \\ \hline
\shibuya{}  & 270 sec               & 354 sec    & 76\%   \\ \hline
\end{tabular}
\vspace{15pt}
\tightcaption{Despite the imperfection of current CV algorithms (exemplified by \% objects they failed to detect), they still produce a conservative estimate on the duration of any individual's presence. *For the purposes of this experiment, we ignored cars that were parked for the entire duration of the segment.\vspace{-5pt}}
\label{tab:cv-effectiveness}
\end{table}

\emph{3. Amenable to current video analytics pipelines.} As described in \Sec{video-analytics}, videos are typically analyzed by using CV to process one frame at a time (rather than, say, a single pixel or row of pixels), and separate logic is used to aggregate these per-frame results. This enables a design for \system{} which addresses requirement (b): by forcing queries to be explicit about this split-process-aggregate format, it ensures that the amount an object can impact the output is bounded solely by its duration. Since it does not depend on the output of the model for any individual frame, the model itself does not need to be trusted (and thus can be provided by the untrusted analyst).

}
\fi


\section{Event Duration Privacy}
\label{sec:definition}

We will first formalize \pkeprivacy{}, then provide the intuition for what it protects and clarify its limitations.

\tightsubsection{Definition}
\label{sec:definition:formal}

We consider a video $V$ to be an arbitrarily long sequence of frames, sampled at $f$ frames per second, recorded directly from a camera (\ie{} unedited). A ``segment'' $v \subset V$ of video is a contiguous subsequence of those frames. The ``duration'' of a segment $d(v)$ is measured in real time (seconds), as opposed to frames.\extendedonly{, \ie{} $d(v) = \frac{\abs{v}+1}{f}$.}
An ``event'' $e$ is abstractly \emph{anything} that is visible within the camera's field of view.

As a running example, consider a video segment $v$ in which individual $x$ is visible for 30 seconds before they enter a building, and then another 10 seconds when they leave some time later. The ``event'' of $x$'s visit is comprised of one 30-second segment, and another 10-second segment.

\begin{defn}[\pkbounded{} events]
\label{def:pkbounded}
An event $e$ is \pkbounded{} if there exists a set of $\le K$ video segments that completely contain\footnote{If a set of segments completely contain an event, then that event is not visible in any frames outside of those segments. We note that, for this definition, visibility does not depend on whether or not a computer vision algorithm or even a human is capable of recognizing it.} the event, and each of these segments individually have duration $\le \rho$.
\end{defn}

\noindent (Ex). The tightest bound on $x$'s visit is
$(\rho=30s,K=2)$. To be explicit, $x$'s visit is also \pkbounded{} for any $\rho\ge30s$ and $K\ge2$. 

\begin{defn}[\pkneighboring{} videos]
\label{def:pkneighboring}
Two video segments $v, v'$ are \pkneighboring{} if 
the set of frames in which they differ is \pkbounded{}.
\end{defn}
\noindent (Ex). One potential $v'$ is a hypothetical video in which $x$ was never present (but everything else observed in $v$ remained the same). Note this is purely to denote the strength of the guarantee in the following definition, the video owner does not actually construct such a $v'$. 

\begin{defn}[\pkeprivacylong{}]
\label{def:pkeprivacy}
A randomized mechanism $\mathcal{M}$ satisfies \pkeprivacylong{}
\footnote{We chose to use $\epsilon$-DP rather than the more general $(\epsilon,\delta)$-DP for simplicity, since the difference is not significant to our notion. All of our concepts and results could be trivially extended to $(\epsilon,\delta)$-DP without any additional insights.}
iff for all possible pairs of \pkneighboring{} videos $v,v'$, any finite set of queries $Q = \{q_1, q_2, ...\}$ and all $S_q \subseteq Range(\mathcal{M}(\cdot, q))$:
\begin{align*}
Pr[(\mathcal{M}(v, q_1),\ldots,\mathcal{M}(v, q_n)) \in S_{q_1} \times \cdots \times S_{q_n}] \le \\ e^{\epsilon}Pr[(\mathcal{M}(v', q_1),\ldots,\mathcal{M}(v', q_n))) \in S_{q_1} \times \cdots \times S_{q_n}]
\end{align*}
\end{defn}



\Para{Guarantee.}
\pkeprivacy{} protects all \pkbounded{} events (such as $x$'s visit to the building) with $\epsilon$-DP: informally, if an event is \pkbounded{}, an adversary cannot increase their knowledge of whether or not the event happened by observing a query result from $\mathcal{M}$.
To be clear, \pkeprivacy{} is \emph{not} a departure from DP, but rather an extension to explicitly specify what to protect in the context of video.	

%
%

\tightsubsection{Choosing a Privacy Policy}\label{sec:definition:usage}
The video owner is responsible for choosing the parameter values $\pk$ (``policy'') that bound the class of events they wish to protect. 
They may use domain knowledge, employ CV algorithms to analyze durations in past video from the camera, or a mix of both.
Regardless, they express their goal to \system{} solely through their choice of $\pk$.

In this paper, we focus on the privacy goal of standard DP:
protect the \emph{appearance} of all individuals.~\footnote{In order to fully capture this intent, we also seek to protect the appearance of all vehicles, since vehicles often uniquely identify their driver (sometimes even if the license plate is not visible).}
In other words, a query should not be able to determine whether or not any individual appeared in the video, directly or indirectly.

\begin{figure}
  \centering
  \includegraphics[width=\columnwidth]{img/shibuya_missed.jpg}
  \vspace{5pt}
  \tightcaption{
  The results of a state-of-the-art object detection algorithm (filtered to ``person'' class) on one frame of \shibuya{}. The algorithm misses 76\% of individuals in the frame, but is \emph{still} able to produce a conservative bound on the maximum duration of all individuals (Table~\ref{tab:cv-effectiveness}).}
  \label{fig:shibuya-missed}
\end{figure}

\begin{table}[]
\centering
\small
\begin{tabular}{|l|l|l|l|}
\hline
\multirow{2}{*}{\textbf{Video}}    & 
\multicolumn{2}{c|}{\textbf{Maximum Duration}} &
\multirow{2}{*}{\textbf{\shortstack{\% Objects \\ CV Missed}}} \\ 
    \cline{2-3}
    & Ground Truth & CV Estimate & \\ 
    \hline
\auburn{}   & 81 sec              & 83 sec  & 29\%   \\ \hline
\hampton{}*  & 316 sec               & 439 sec    & 5\%   \\ \hline
\shibuya{}  & 270 sec               & 354 sec    & 76\%   \\ \hline
\end{tabular}
\vspace{15pt}
\tightcaption{Despite the imperfection of current CV algorithms (exemplified by \% objects they failed to detect), they still produce a conservative estimate on the duration of any individual's presence. *For the purposes of this experiment, we ignored cars that were parked for the entire duration of the segment.\vspace{-5pt}}
\label{tab:cv-effectiveness}
\end{table}

\Para{Automatic setting of $\pk$.}
The primary reason \pkeprivacy{} is \emph{practical} is that, despite their imperfections, today's CV algorithms are capable of producing good estimates of the maximum duration any individuals are visible in a scene. 
We provide some evidence of this intuition over three representative videos from our evaluation. 
For each video, we chose a 10-minute segment and manually annotate the duration of each individual (\texttt{person} or \texttt{vehicle}),
\ie ``Ground Truth'',
then use state-of-the-art object detection and tracking to estimate the durations and report the maximum (``CV'').
Our results, summarized in Table~\ref{tab:cv-effectiveness}, show that, while object detection misses a non-trivial fraction of bounding boxes, when combined with tracking, it conservatively estimates the maximum duration. In other words, for our three videos, using these algorithms to parameterize a \pkeprivate{} system would successfully protect the privacy of \emph{all} individuals, while using them to implement any prior approach would not.


\Para{Alternative policies.}

\noindent \emph{Relaxing the set of private individuals:}
Sometimes protecting \emph{all} individuals may be unnecessary. 
For instance, consider a camera in a store;
employees will appear significantly longer and more frequently than customers (\eg{} 8 hours every day vs. 30 minutes once a week), but if the fact that the employees work there is already public knowledge, the video owner could relax their goal to only bound only the appearance of customers (with smaller $\rho$ and $K$).

\noindent \emph{Generic policies:} Rather than protecting the presence of individuals, a policy could be used to simply limit analysts from collecting detailed information. Consider a policy $(\rho=5m, K=1)$. Suppose individual $x$ stops and talks to a few people on their way to work each morning, but each conversation lasts less than 5 minutes. Although the policy does not protect $x$'s presence or even the fact that they often stop to chat on their way to work, it does protect each conversations; an analyst cannot determine who $x$ spoke to or about what.

\nop{
With $\epsilon$-DP,
the data owner is only responsible for choosing a privacy budget $\epsilon$; the DP mechanism internalizes a pre-defined notion of an individual (\eg{} one row) and how much any one individual can impact its output (\eg{} max column value). The guarantee it provides is dependent on these implementation decisions and is not part of the definition or the parameters available to the data owner. This works in cases where individuals are clearly defined, but is problematic when they are not. 

In contrast, \pkeprivacy{} does not internalize anything about what it will protect.
Although the video owner will have an underlying class of events they wish to protect, they express this implicitly with a policy $\pk$; the definition provides $\epsilon$-indistiguishability for \emph{anything} that is \pkbounded{}, including (but not limited to) events matching their intention.

The onus is on the video owner to choose meaningful values of $\pk$ that match their intention.
While this gives them more responsibility, in contrast to standard definitions, it is both more flexible, and actually \emph{realizable} for video data, 
where it is simply not feasible to cleanly define all individuals and their impact on a computation (\SecNS{approach:cv-good-enough}). 
As a result, the guarantee it provides does not depend on the data or the mechanism implementation; given only a policy $\pk$ and an event, it is trivial to determine exactly the guarantee a \pkeprivate{} mechanism will provide.

\Para{Clarification of ``event.''}
In this paper, we assume the general class of events the video owner wishes to protect is the \emph{presence} of all individuals in a video.
\footnote{In order to fully capture this intent, we also assume the video owner wishes to protect the presence of all vehicles, since vehicles often uniquely identify their driver (sometimes even if the license plate is not visible), even though the driver is not visible to the camera.}
In other words, a query should not be able to determine whether or not any individual appeared in the video.
This most closely matches traditional definitions of privacy and is likely the most common. 
}

\tightsubsection{Flexible privacy guarantees} \label{sec:definition:tradeoff}

As is standard with DP, anything not \pkbounded{} is not immediately revealed in the clear, but rather experiences a ``graceful degradation'' of privacy as a function of how far it is from $\rho$ and $K$.
As an event is visible for more time than $\rho$ or appears for more than $K$ segments, the probability an analyst can detect whether or not it happened increases (put differently, the effective $\epsilon$ increases, providing a weaker guarantee of privacy). Thus, an event that only exceeds the $\pk$ bound marginally is still afforded almost an equivalent level of privacy.
This in effect {\em relaxes} the requirement that $\pk$ be set strictly to the maximum duration an individual could appear in the video to achieve useful levels of privacy.
We provide a formalization of this degradation and visualize it as a function of appearance duration in~\App{graceful-degradation}. 

An alternative interpretation of the definition is that 
$\epsilon$ defines the level of privacy guaranteed \emph{for events of certain duration}. 
For instance, \pkeprivacy{} protects a $(\rho,2K)$-bounded event with $2\epsilon$-DP (weaker), or a $(\rho,\frac{1}{2}K)$-bounded event with $\frac{1}{2}\epsilon$-DP (stronger).
The relationship with $\rho$ is roughly the same, but the constants depend upon the mechanism details.\extendedonly{; we derive it for \system{} in~\App{pke-proportional}. }

Thus,
\pkeprivacy{} extends (albeit in a weaker form) to queries that span long time windows. 
The larger the time window of video a query analyzes, the more times an individual may appear (even if each appearance is itself bounded by $\rho$). 
Consider our example individual $x$ and policy $(\rho=30s,K=2)$ from~\Sec{definition:formal}.
In the query window of a single day $d$, they appear twice; they are properly \pkbounded{} and thus the event ``individual $x$ appeared on day $d$'' is protected with $\epsilon$-DP. 
Now, consider a query window of one week; $x$ appears 14 times (2 times per day), so the event ``individual $x$ appeared sometime this week'' is $(\rho, 7K)$-bounded and thus protected with (weaker) $7\epsilon$-DP.  However, the more specific event ``individual $x$ appeared on day $d$ (for any $d$ in the week)'' is \emph{still} $(\rho,K)$-bounded, and thus still protected with the same $\epsilon$-DP. 
In other words, as queries increase their window size, they can detect an individual's presence \emph{somewhere} in the window with greater certainty, but at the cost of coarser temporal granularity. 


\nop{

Put another way, $\epsilon$ defines the level of privacy guaranteed \emph{for events of a particular duration}. 
\jj{what does this sentence mean?}\pkeprivacy{} provides a proportional level of privacy for other durations. 
For example, it protects a $(\rho,2K)$-bounded event with $2\epsilon$-DP (weaker), or a $(\rho,\frac{1}{2}K)$-bounded event with $\frac{1}{2}\epsilon-DP$ (stronger). The relationship with $\rho$ is roughly the same, but the constants depend upon the mechanism. \extendedonly{we derive it for \system{} in~\App{pke-proportional}}

In general, cameras will tend to observe objects that reappear over time.
For example, an individual may walk by one camera on their way to and from work each day, or another camera at a shopping center once per month. 


Focusing on the event of individual presence,
we distinguish between a single \emph{appearance} of an individual (one contiguous segment in which they were visible) and the event of their presence (the set of \emph{all} their appearance segments in a video). Of course, the larger the segment of video a query analyzes (the \emph{query window}), the more segments an individual's presence may comprise. However, $\pk$ can only capture a fixed number of appearances (it is not a function of the query window size).\footnote{In fact, it is not possible to modify the definition to be a function of the query window size and still support multiple queries. 
}

Consider our example individual $x$ and policy $(\rho=30s,K=2)$ from~\Sec{definition:formal}.
In the query window of a single day $d$, they appear twice; they are properly \pkbounded{} and thus the event ``individual $x$ appeared on day $d$'' is protected with $\epsilon$-DP. 

Now, consider a query window of one week; $x$ appears 14 times (2 times per day), so the event ``individual $x$ appeared sometime this week'' is $(\rho, 7K)$-bounded and thus protected with (weaker) $7\epsilon$-DP. However, the more specific event ``individual $x$ appeared on day $d$ (for any $d$ in the week)'' is \emph{still} $(\rho,K)$-bounded, and thus still protected with the same $\epsilon$-DP. 

In other words, as queries increase their window size, they can detect an individual's presence \emph{somewhere} in the window with greater certainty, but at the cost of coarser temporal granularity. 
In this specific example, an analyst may be able to learn, with low confidence, that $x$ entered the building sometime in a given week, but not whether they entered on any given day (low confidence, finer granularity). They may be able to learn with \emph{high} confidence that $x$ entered frequently in a given year, but not any  specific week or month (higher confidence, coarser granularity).

}

\nop{
\Para{Uneven segment durations.}
If the longest duration event is tightly $(\rho,K)$-bound, then \pkeprivacy{} adds the minimal amount of noise necessary to protect it. However, if the bound is not tight, it adds more noise than necessary. Further, if the longest (or, worst-case) event is significantly longer than the average, most of the noise will be dedicated to protecting that worst-case, even though significantly less could have protected the average individual. 
This dependence on the worst-case is a property of DP in general and is both a benefit and limitation. However, we provide heuristics in~\Sec{practical} that help to mitigate this in many practical settings.

\Para{Lack of tight $\pk$ bound.}
For example, suppose individuals are typically visible for 10 seconds, but are occasionally visible for 1 hour. This can only be captured with a $(\rho=1\text{hr}, K)$-bound, which protects $K$ segments each of duration less than 1 hour, even though only one of the segments is that long and the rest are much shorter. 

\Para{Dependence on the worst-case.}
The longer $\rho$ is, the longer a time period a query must consider in order to learn anything useful. 
If most individuals appear for 10 seconds, but one appears for an hour, protecting all of them requires adding a lot of extra noise just for that one worst-case person, but this is just a property of DP in general, not unique to us.

\tightsubsection{Examples and Characterizations}

In this section, we provide a more intuitive explanation of what the definition does and does not protect. 

\par
\noindent \emph{Adversarial Model}.
As in traditional DP, we consider what a malicious adversary could do in the ``worst'' case, where they focus all of their efforts (and privacy budget) on answering a single question, and they possess all possible prior information about any individual.
The adversary chooses a video segment $v$ of duration $d_q$ and seeks to conclude, by only observing a $(\rho, K, \epsilon)$-private result over $v$, whether or not the event $e$ occurred at some point in $v$. 

By definition, if $e$ is \pkbounded{}, it has the property of indistinguishability; an adversary cannot reliably determine whether or not that specific event occurred over any duration of video segment. 

\par\noindent\emph{Single Event}. As is standard with DP, even if $e$ is not \pkbounded{}, it is not immediately revealed in the clear, but rather enjoys a ``graceful'' degradation of privacy: if $d(e)$ is only slightly greater than $\rho$, an adversary's confidence is only marginally higher, but as $d(e)$ increases further past $\rho$, the adversary's confidence grows exponentially (for a characterization of this growth, see~\Fig{degradation}).

\par\noindent\emph{Multiple Events}. 
One interpretation of the definition is that it limits the (temporal) granularity at which information can be learned about a particular event in a video. 
Although the adversary cannot learn about individual events whose duration is less than $\rho$, 
they can learn some information about a group of events as a whole whose cumulative duration is longer than $\rho$, with confidence increasing as the event duration increases.

Example: Suppose $x$ is visible to $c$ for 30 seconds while entering their office building each weekday, and $c$ provides $(30s, 1, \epsilon)$-privacy. If the adversary queries an entire week, and defines their event of interest as the set of all times $x$ entered the building over that week, then they can learn that $x$ indeed entered this building ``a few'' times during the week, but cannot say with confidence whether or not they entered the building on any given day or at any given time.
Example 2: 

Alternatively, as an adversary considers a larger and larger group of people (who collectively comprise more and more events and thus a longer duration in aggregate), they can learn information about that group with higher confidence. 

Example: Suppose a camera has a policy $(\rho=5m,K=1)$, and each individual appears once for less than 5 minutes. Suppose the analyst is interested in 100 different people who pass by this camera. They cannot reliably determine whether or not any one of those individuals appeared. However, if they instead broaden their query to the group as a whole, they can be reasonably confident whether or not many people from the group appeared, but not whether any individual one indeed appeared.
}

\nop{
\fc{Need transition sentence.}
We define an \emph{appearance} as the time between when a person enters the view of a camera to when they leave it. Since multiple people can be in view simultaneously, multiple appearances can overlap within a single frame or set of frames.\footnote{This usually does not occur with standard event-level DP. Nevertheless, this does not make it any harder to come up with an algorithm that is differentially private given this definition of neighborhood.} Empirically, we find that the maximum duration of an appearance is bounded, and that this bound can be estimated using standard computer vision algorithms (without requiring recognition or even perfect detection).\jj{people might feel a bit disconnection here. the robustness of duration estimates to vision algorithms is an important and non-trivial point. maybe move this argument somewhere else where we have some evidence?} Note, noise will be lower when the bound is smaller~(see~\S\ref{}). We use video-specific tricks to reduce the bound when possible~(see~\S\ref{}). 

We generalize to protecting $K \ge 1$ appearances, since this gives more meaningful privacy definitions in many contexts. We formalize this intuition as follows.

\begin{defn}
An event is a subset of frames $E$. $E$ is \emph{$(\rho, K)$-bounded} if there exist $K$ segments with a cumulative length of $\le K \cdot \rho$ \emph{seconds} that together cover $E$. Here, a segment is a contiguous sequence of frames.
\end{defn}

\begin{defn}
Two videos are \emph{$(\rho,K)$-neighbors} if they are identical on all but a subset of frames, $E$ such that $E$ is $(\rho, K)$-bounded.
\end{defn}

\fc{Do we need to broaden this to set of videos?}

In our definition of privacy,
if an event is $(\rho, K)$-bounded, the adversary cannot determine whether or not it happened with high confidence (this notion of indistinguishability is the standard protection in DP, ).
Note that this definition is slightly more general (hence, better) than our intuitive goal of protecting $K$ appearances of $\le \rho$ seconds each. For instance, it also protects one appearance that is $\rho \cdot K$ seconds long.

Consider a camera overlooking a street in an office complex. Based on an analysis of past video recorded from that camera, the video owner observes that the maximum time that any individual spends in front of the camera is $\rho = 5$ min (see~\S\ref{} for a concrete evaluation). They also expect that individuals will pass by the camera twice each day, and wish to protect one week's worth of appearances, thus they set $K = 5*2=10$. The people who visit the area $\le 10$ times are completely indistinguishable from the perspective of the adversary. Those who visit regularly, such as office workers, will not be completely hidden; the adversary can detect that they frequent the street. However, finer grained information is hidden. For instance, on any particular day they can't learn at what time the person came in to work, what they were carrying, or whether they brought their kid along. If the person is visible twice a day, five times a week (e.g. entering and leaving work), $K = 10$ protects a week's worth of activity. We note that, events that happen repeatedly over a long time are less private than shorter events. For instance, there are many out-of-band ways an adversary can learn where a person works. The exact times when they leave their house is more private. Recall that DP already naturally prevents mass-surveillance~(see~\S\ref{sec:dp}). We are concerned with also preventing attacks targeting specific individuals.






This definition of privacy can be useful in other contexts too. Consider a grocery store that wants to analyze video from its cameras. They want to determine statistics about how inventory gets depleted, how people move about the store, whether social distancing is being adopted etc. Rather than give direct access to the raw videos to its employees to analyze, the store could provide access through \system{} instead. Here customers spend little time in the store, hence their privacy is protected the same way as for public video cameras. Employees spend much longer and hence the analyst can determine their presence. This by itself isn't much of a privacy violation, since it is known that an employee will come to the store. They can only gather statistics about their employees' habits. While this allows the store to promote or demote their employees based on low-level `work-ethic' based metrics\jj{`work-ethic' can be a lot of things.. try to be more specific}, this is still less harmful than what can be done with direct access to video. Momentary lapses and absence from the store will not reflect in the statistics, whereas long-term effects will.
\jj{i like the social distancing example (since it very timely), but reviewers may wonder how adoption of social distancing can be supported by our interface of ``appearance'' of an identify?}



}

\section{\System{}}
\label{sec:system}

In this section, we present \system{}, a privacy-preserving video analytics system that satisfies \pkeprivacy{} (\SecNS{problem} Goals 1 and 2) and provides an expressive query interface which allows analysts to supply their own (untrusted by \system{}) video-processing code (Goal 3). 

\tightsubsection{Overview} 
\label{sec:system:overview}

\system{} supports \emph{aggregation} queries, which process a ``large'' amount of video data (\eg{} several hours/days of video) and produce a ``small'' number of bits of output (\eg a few 32-bit integers). Examples of such tasks include counting the total number of individuals that passed by a camera in one day, or computing the average speed of cars observed. In contrast, \system{} does not support a query such as reporting the location (\eg{} bounding box) of an individual or car within the video frame. \System{} can be used for one-off ad-hoc queries or standing queries running over a long period, \eg{} the total number of cars per day, each day over a year.

The video owner decides the level of privacy provided by \system. The video owner chooses a privacy policy 
$\pk$
and privacy budget ($\epsilon$) for each camera they manage. 
Given these parameters, \system{} provides a guarantee of \pkeprivacy{} (Theorem~\ref{thm:multiple-queries}) for all queries over all cameras it manages.

To satisfy the privacy guarantee, \system{} utilizes the standard Laplace mechanism from DP~\cite{original-dp} to add random noise to the aggregate query result before returning the result to the analyst. The key technical pieces of \system{} are: (i) providing analysts the ability to specify queries using arbitrary untrusted code (\SecNS{system:query}), (ii) adding noise to results to guarantee \pkeprivacy for a single query (\SecNS{system:sensitivity}), and (iii) extending the guarantee to handle multiple queries over the same cameras (\SecNS{budget}).

\tightsubsection{\system{} Query Structure}
\label{sec:system:query}

\Para{Execution model.} \system{} structures queries using a split-process-aggregate approach in order to tie the duration of an event to the amount it can impact the query output. The target video is split temporally into chunks, then each chunk is fed to a separate instance of the analyst's processing code, which outputs a set of rows. Together, these rows form a traditional tabular database (untrusted by \system{} since it is generated by the analyst). The aggregation stage runs a SQL query over this table to produce a raw result. Finally, \system{} adds noise (\SecNS{system:sensitivity}) and returns \emph{only} the noisy result to the analyst, not the raw result or the intermediate table.

\Para{Query interface.}
An analyst must submit to \system{} both a written query and video processing executable(s). A query contains one or more of \emph{each} of the 3 following statements:

\noindent $\bullet$ \splitcmd{} statements choose a segment of video (camera, start and end datetime) as input, and produce a set of video chunks as output. They specify how the segment should be split into chunks, \ie{} the chunk duration and stride between chunks. 
\extendedonly{In \Sec{practical:mask}, we present an optional optimization that allows analysts to split the video spatially in addition to (not instead of) temporally. This results in multiple chunks per time unit, one for each spatial region. For example, splitting a 100 minute video into 1 minute chunks with 4 spatial regions would result in a set of 400 chunks.}

\noindent $\bullet$ \processcmd{} statements take a set of \splitcmd{} chunks as input, and produce a traditional (``intermediate'') table as output. They specify which executable should be used to process the chunks, the schema of the resulting table, and the maximum number of rows output by each chunk ($\maxrows$, necessary to bound the sensitivity, \SecNS{system:sensitivity}).

\noindent $\bullet$ \selectcmd{} statements resemble typical SQL \selectcmd{} statements that operate over \processcmd{} tables and output a \pkeprivate{} result. 
They must have an aggregation as the final operation (though it may be part of a \texttt{GROUPBY}).
\system{} supports the standard aggregation functions (\eg{} \texttt{COUNT}, \texttt{SUM}, \texttt{AVG}) and the core set of typical operators  as internal relations.
An aggregation must specify the range of each column it aggregates (just as in related work on DP for SQL~\cite{privatesql}).
Each \selectcmd{} constitutes at least one data release: one for a single aggregation or multiple for a \texttt{GROUPBY} (one for each key). Each data release receives its own sample of noise and consumes additional privacy budget (\SecNS{budget})

In order to aggregate across multiple video sources (separate time windows and/or multiple cameras), the query can use a \splitcmd{} and \processcmd{} for each video source, and then aggregate using a \texttt{JOIN} and \texttt{GROUPBY} in the \selectcmd{}.

\begin{figure}
\begin{lstlisting}[language=SQL, breaklines=true, basicstyle=\ttfamily\scriptsize, morekeywords={SPLIT, BEGIN, END, CHUNK, BY TIME, STRIDE, INTO, PROCESS, USING, TIMEOUT, PRODUCING, SCHEMA, WITH}, 
keywordstyle=\color{green!50!black}\bfseries, deletekeywords={TIME, process, NUMBER}, caption={Example \system{} query, which creates a table (\texttt{tableA}) of cars observed by some highway camera \texttt{camA}, then uses it to compute two aggregation results $S_1$ and $S_2$.}, captionpos=b, label={lst:car-query-example}]
/* Select 1 month time window from camera, split video into chunks */
SPLIT camA
    BEGIN 12-01-2020/12:00am END 01-01-2021/12:00am
    BY TIME 5sec STRIDE 0sec
    INTO chunksA;
/* Process chunks using analyst's code, store outputs in tableA */
PROCESS chunksA USING model.py TIMEOUT 1sec
    PRODUCING 10 ROWS
    WITH SCHEMA (plate:STRING="", color:STRING="", speed:NUMBER=0)
    INTO tableA; 
/* S1: average speed of all cars */
SELECT AVG(range(speed, 30, 60)) FROM tableA);
/* S2: count total unique cars of each color */
SELECT color,COUNT(plate) FROM
    (SELECT plate, color FROM tableA)
GROUP BY color WITH KEYS ["RED", "WHITE", "SILVER"];
\end{lstlisting}
\vspace{-22pt}
\end{figure}

\Para{\processcmd{} executables.} 
In addition to the written query, analysts must also attach the executable(s) used by \processcmd{} statements of their query. These executables take a single chunk of video as input and produce a set of rows as output. 
They can maintain arbitrary state and use arbitrary operations (\eg{} custom ML models for CV tasks) while processing a single chunk, but cannot preserve any state across chunks. 
\system{} interprets the output of each chunk according to the \processcmd{} schema and ignores extraneous columns or rows (beyond $\maxrows$). \extendedonly{The analyst is responsible for ensuring the executable output matches the schema.}


\Para{Mitigating side channels.}
To ensure that an event's impact on the output can be bound by its duration, \system{} must ensure that the output of processing some chunk $i$ can \emph{only} be influenced by what is visible in chunk $i$ and not any other chunk $j$. To achieve this, \system{} executes each instance of the \texttt{PROCESS} executable inside an isolated environment (implementation details in \App{isolated-execution}).
To prevent timing side channels, each instantiation must output a value within a fixed time limit, otherwise it is assigned a default output value.
The time limit and default values do not impact the query sensitivity, but they must be fixed to prevent side channels, thus the analyst has the freedom to choose them as part of the \processcmd{}.

\extendedonly{All query statements are executed lazily. A \selectcmd{} only computes a release value once all of the necessary video data exists. For example, if a \splitcmd{} chooses a segment of video from now to one month in the future, but the \selectcmd{} uses a \texttt{GROUPBY} to compute an aggregate value per day, this value can be computed and safely released at the end of each day, rather than waiting for the end of the month.}

\Para{Example query.}
In Listing~\ref{lst:car-query-example}, we provide a sample \system{} query to illustrate the syntax and use of privacy-related aggregation constraints. The full query grammar, language specification, and set of constraints are detailed in~\App{grammar}.



\splitcmd{} selects 1 month of video from \texttt{camA} and splits it into 5-second-long chunks (535,680 chunks total). \processcmd{} uses the executable \texttt{model.py} \extendedonly{(contents in \App{example-process})} to process each chunk. \texttt{model.py} internally employs a custom ML model to detect and track unique cars in the input chunk and outputs the license plate, color, and current speed of up to 10 cars observed. \texttt{plate} and \texttt{color} are arbitrary strings, with a default value of empty string, while \texttt{speed} is a floating-point number with a default value of 0.

The first \selectcmd{} ($S_1$) computes the average speed across all cars. In order to compute a bound on the average of the speed column, its range must be explicitly defined. The \texttt{range} function truncates values to fit within the limits, providing a constraint on how much each row can impact the average (\SecNS{system:sensitivity}).

The second \selectcmd{} ($S_2$) groups the table by color (it is required to specify explicit keys for the \texttt{GROUPBY}, because otherwise the presence of a rare key itself could be used to leak information~\cite{google-dp}), and then \texttt{COUNT}s the number of rows in each group. The ``range'' of these counts (the amount each chunk could contribute) is implicitly [0,10] because \processcmd{} specified $\maxrows{}=10$. Each of the three counts is a separate data release, receives its own sample of noise, and consumes additional privacy budget (\SecNS{budget}).

\Para{Interface limitations.}
The main restriction introduced by \system{}'s interface is the inability to maintain state across separate chunks. However, in most cases this does not preclude queries, it simply requires them to be expressed differently. One broad class of such queries are those that operate over \emph{unique} objects.
Consider the example car counting query above. If a car enters the camera view at chunk $i$ and is last visible in chunk $i+n$, the table will include $n$ rows for the same car instead of the expected 1. Since cars can be uniquely identified by their license plate, this can be solved by adding a \texttt{GROUPBY plate} as an intermediate operator before the final count or average.

Suppose instead the query were counting people, who do not have globally unique identifiers. To handle this, the \processcmd{} executable can only output a row for people that \emph{enter} the scene \emph{during that chunk} (and ignore any people that are already visible at the start of a chunk). This ensures that each appearance of a person corresponds to a single row in the table.
\extendedonly{Explain this example in more detail.}

\system{}'s aggregation interface imposes some limitations beyond traditional SQL (\eg{} the \texttt{SELECT} must specify the range of each column, and must specify \texttt{GROUPBY} keys), but these are equivalent to the limitations of DP SQL interfaces in prior work (full list in \App{grammar}).

\tightsubsection{Query Sensitivity}
\label{sec:system:sensitivity}

The sensitivity of a \system{} query is the maximum amount the final query output could differ given the presence or absence of any \pkbounded{} event in the video. This can be broken down into two questions: (1) what is the maximum number of rows a \pkbounded{} event could impact in the analyst-generated intermediate table, and (2) how much could each of these rows contribute to the aggregate output. We discuss each in turn.

\Para{Contribution of a $\pk$ event to the table.} An event that is visible in even a single frame of a chunk can impact the output of that chunk arbitrarily, but due to \system{}'s isolated execution environment, it can \emph{only} impact the output of that chunk, not any others. Thus, the number of rows a \pkbounded{} event could impact is dependent on the number of chunks it spans 
(an event spans a set of chunks if it is visible in at least one frame of each). 


In the worst case, an event spans the most contiguous chunks when it is first visible in the last frame of a chunk. 
Given a chunk duration $c$ (same units as $\rho$) a single event segment of duration $\rho$ can span at most $\maxchunks{\rho}$ chunks:
\vspace{-5pt}
\begin{equation}
    \label{eq:max-chunks}
    \maxchunks{\rho} = 1 + \lceil\frac{\rho}{c}\rceil
\end{equation}
\vspace{-20pt}

\begin{defn}[Intermediate Table Sensitivity]
    Consider a privacy policy $\pk$, and an intermediate table $t$ (created with a chunk size of $c_t$ and maximum per-chunk rows $\maxrows_t$). The \emph{sensitivity} of $t$ w.r.t $\pk$, denoted $\Delta_{\pk}$, is the maximum number of rows that could differ given the presence or absence of any \pkbounded{} event:
\begin{equation}
    \label{eq:base-table-sensitivity}
    \Delta_{(\rho, K)}(t) \le \maxrows{}_t \cdot K \cdot \maxchunks{\rho}
\end{equation}
\end{defn}


\begin{proof}
In the worst case, none of the $K$ segments overlap, and each starts at the last frame of a chunk. Thus, each spans a separate $\maxchunks{\rho}$ chunks (Eq.~\ref{eq:max-chunks}). For each of these chunks, all of the $\maxrows$ output rows could be impacted.
\end{proof}
\vspace{-3pt}

\extendedonly{%
Consequently, since the ceiling operation ``round''s the duration given by a policy up to the nearest chunk, \system{} actually protects $(\hat{\rho},K)$-bounded events with $\hat{\rho} = \maxchunks{\rho} \cdot c_t$ seconds, which is always greater than $\rho$.
}

\Para{Sensitivity propagation for \pkbounded{} events.}
Prior work~\cite{flex, privatesql} has shown how to compute the sensitivity of a SQL query over \emph{traditional} tables. Assuming that queries are expressed in relational algebra, they define the sensitivity recursively on the abstract syntax tree.
Beginning with the maximum number of rows an individual could influence in the input table, they provide rules for how the influence of an individual propagates through each relational operator and ultimately impacts the aggregation function. 


While we can leverage the same high-level approach of propagating sensitivity recursively, the semantics of the intermediate table are unique from prior work and thus require careful consideration to ensure the privacy definition is rigorously guaranteed. In this work, we determined the set of operations that can be enabled over \system{}’s intermediate tables, derived the corresponding sensitivity rules, and proved their correctness. Many rules end up being analogous or similar to those in prior work, but \texttt{JOIN}s are different. We provide a brief intuition for these differences below. Fig.~\ref{fig:sensitivity-table} in \App{sensitivity} contains the complete algorithm for determining the sensitivity of a \system{} query (Theorem~\ref{thm:single-query}). We provide a proof of Theorem~\ref{thm:single-query} in \App{proofs}.

\Para{Privacy semantics of untrusted tables.} As an example, consider a query that computes the size of the intersection between two cameras, \texttt{PROCESS}'d into intermediate tables $t_1$ and $t_2$ respectively. If $\Delta(t_1) = x$ and $\Delta(t_2) = y$, it is tempting to assume $\Delta(t_1 \cap t_2) = \min(x,y)$, because a value needs to appear in both $t_1$ and $t_2$ to appear in the intersection. However, because the analyst’s executable can populate the table arbitrarily, they can ``prime'' $t_1$ with values that would only appear in $t_2$, and vice versa. As a result, a value need only appear in either $t_1$ or $t_2$ to show up in the intersection, and thus $\Delta(t_1 \cap t_2) = x + y$.



\begin{theorem}
    \label{thm:single-query}
    \system{}'s sensitivity computation algorithm (Fig.~\ref{fig:sensitivity-table}, \App{sensitivity}) provides \pkeprivacy{} for a single query $Q$ over video $V$.
\end{theorem}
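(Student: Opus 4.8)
The plan is to prove the theorem by reducing it to the standard Laplace mechanism, so the real work is bounding the \emph{global sensitivity} of the deterministic map $F_Q \colon v \mapsto g(t(v))$ that carries a video $v$ through the split and process stages into an intermediate table $t(v)$ and then through the \selectcmd{} aggregation $g$ to the pre-noise scalar result. Concretely, I would fix an arbitrary pair of \pkneighboring{} videos $v, v'$ and show that $\lvert F_Q(v) - F_Q(v')\rvert \le \Delta$, where $\Delta$ is exactly the quantity computed by the algorithm of Fig.~\ref{fig:sensitivity-table}. Once this holds for every such pair, adding $\mathrm{Lap}(\Delta/\epsilon)$ noise and invoking the guarantee of the Laplace mechanism~\cite{original-dp} immediately yields Equation~\ref{eq:classic-dp} with the neighboring relation instantiated as \pkneighboring{}, which is precisely \pkeprivacy{} for the single query $Q$.

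First I would bound how much the intermediate table can change between the two videos. By Definition~\ref{def:pkneighboring}, the frames in which $v$ and $v'$ differ form a \pkbounded{} event, hence (Definition~\ref{def:pkbounded}) are covered by at most $K$ segments each of duration $\le\rho$. The crucial lever is \system{}'s isolated execution environment: the rows produced for chunk $i$ depend \emph{only} on the content of chunk $i$, so the chunks whose output can differ between $t(v)$ and $t(v')$ are exactly those overlapping the differing frames. By Equation~\ref{eq:max-chunks} each segment touches at most $\maxchunks{\rho}$ chunks, and each such chunk emits at most $\maxrows$ rows. This reproduces the already-established Intermediate Table Sensitivity bound $\Delta_{(\rho, K)}(t) \le \maxrows_t \cdot K \cdot \maxchunks{\rho}$, which I would cite directly rather than re-derive, giving a bound on the number of rows of $t$ that any \pkbounded{} difference can affect.

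Next I would propagate this row-level difference through $g$ to bound the change in the released value. Beginning from the input sensitivity $\Delta_{(\rho, K)}(t)$, I would walk up the abstract syntax tree of the \selectcmd{}, applying the recursive rule for each relational operator (as in~\cite{flex,privatesql}) until the terminal aggregation converts affected rows into a numeric bound using the analyst-declared column ranges and \texttt{GROUPBY} keys. The main obstacle is establishing the \emph{correctness} of these rules for an \emph{untrusted} table: prior DP-SQL analyses assume an individual occupies a predictable set of trusted rows, whereas here the \processcmd{} executable populates rows adversarially and can ``prime'' one table with values engineered to interact with another. I would therefore prove each rule is a worst-case upper bound over all possible table contents, the most delicate case being \texttt{JOIN}, where the intersection example in the text shows the correct bound is additive ($x+y$) rather than the naive $\min(x,y)$.

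Finally, with $\Delta = \sup_{v,v' \text{ \pkneighboring{}}} \lvert F_Q(v) - F_Q(v')\rvert$ established, I would add $\mathrm{Lap}(\Delta/\epsilon)$ to the result; the Laplace mechanism then gives the required $e^{\epsilon}$ bound, and since $v,v'$ were arbitrary this is \pkeprivacy{} for $Q$. For a \selectcmd{} with a \texttt{GROUPBY} producing several releases, I would treat each key as a separate output coordinate receiving independent noise, noting that Definition~\ref{def:pkeprivacy} already phrases its guarantee over a product of output sets; the compounding of these releases into an overall budget is deferred to the multiple-query treatment of Theorem~\ref{thm:multiple-queries} and need not be handled here.
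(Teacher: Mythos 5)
Your proposal matches the paper's own argument essentially step for step: the paper likewise proves (in its appendix) a lemma by structural induction over the relational operators---base case the intermediate-table sensitivity of Equation~\ref{eq:base-table-sensitivity}, inductive cases for selection, projection, \texttt{GROUPBY}, and the untrusted-table \texttt{JOIN} with the additive $x+y$ bound---and then concludes via the Laplace mechanism with noise scaled to the computed sensitivity. Your version is correct and, if anything, makes explicit the final reduction to the Laplace mechanism and the per-key treatment of \texttt{GROUPBY} releases, which the paper leaves implicit in Algorithm~\ref{alg:main}.
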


\nop{
\subsection{Example Sensitivity Calculation}
We now consider a more complex \system{} query $\query_4$ from our evaluation (\Sec{eval:case}) and walk through a calculation of the sensitivity. The query aims to estimate the typical working hours of taxis in the city of Porto, Portugal; it first computes the difference between the first and last time each taxi (identified by plate) was seen (by either camera) on a given day, then averages across all taxis and days (over a year). 

In order to ensure all variables needed for the aggregation are properly constrained, we make two assumptions: most taxis will not work more than 16 hours in a day, and there are roughly 300 public taxis in the city of Porto. Since we seek the average working hours, these limitations should not noticably impact our result.

\begin{lstlisting}[language=SQL, breaklines=true, basicstyle=\ttfamily\scriptsize, morekeywords={PARTITION, PROCESS, FOR, MOST, PRODUCE, OF, PRODUCING, CHUNK, START, END}, keywordstyle=\color{green!50!black}\bfseries, deletekeywords={day,chunk}, caption={Average working hours of taxis in Porto}, captionpos=b, label={lst:shift_hrs}]
CREATE t1 FROM portoCam1 START 07-01-2013 END 07-01-2014
    CHUNK EACH 15sec STRIDE 0sec
    PROCESS USING porto.py TIMEOUT 1sec
    PRODUCING 3 OF [(plate STRING "")];
CREATE t2 FROM portoCam2 START 07-01-2013 END 07-01-2014 ...;
SELECT avg(avg_shift) FROM 
    SELECT plate,avg(range(shift, [0,16])) FROM
        (SELECT plate,day,(max(chunk)-min(chunk) as shift) FROM
            t1 UNION t2 GROUP BY plate,day(chunk)) 
    GROUP BY plate LIMIT 300;
\end{lstlisting}

\noindent We can express this query in relational algebra as follows:
$$\Pi_{\avgf{}(\text{hrs})}(\sigma_{\text{limit(plates)}=400}(\groupby{\text{plate},\text{day}}{\text{range}(\text{chunks}) \in [0,16]}(t_1 \cup t_2)))$$

\noindent We use the privacy policy $\policy = \{(\rho=45s, K=1)_{c_1}, (195s, 1)_{c_2}\}$ (the max observed persistence over historical data for each camera) and an $\epsilon$ of 1.

First, we compute the base sensitivity of each table. The CREATE statement specifies the video will be split into 15 second chunks with 0 stride, and that each chunk will produce a maximum of 3 rows. With this we can compute: $\Dp{t_1} = \lceil\frac{(45*\fps{} - 1}{15*\fps{}}\rceil + 1 = 4 \cdot 3 = 12$ and $\Dp{t_2} = \lceil\frac{195*\fps{} - 1}{15*\fps{}}\rceil + 1 = 13 \cdot 3 = 42$. When we combine them with a union, their sensitivities add: $\Dp{t_1 \cup t_2} = 12 + 42 = 54$. The groupby creates a new table with a row per plate per day, and constrains the range of the aggregate value \texttt{shift} to $[0,16]$ (\texttt{range}$(a,b)$ returns $|b-a|$, \ie{} the time between the first and last appearance of a taxi on a given day), but we don't know how many unique plates there might be, so the size $\Cs{\groupby{}{}(...)}$ is unconstrained. We add $\sigma_{\text{limit}}$ to manually enforce a maximum of 300 plates per day, which gives us a constraint $\Cs{\sigma(...)} = 300\text{plates} * 365\text{days} = 109,500$. We now have all the constraints necessary to compute the sensitivity of the average aggregation: $\Delta_{\policy}^{\text{AVG}}(R) = \frac{\Dp{R}\Cr{R, \text{shift}}}{\Cs{R}} = \frac{54 \cdot 16}{109,500} = 0.0079$. Given that the unit of our result is hours, this error is equivalent to roughly $\frac{1}{20}$ of a minute.  
}



\subsection{Handling Multiple Queries}
\label{sec:budget}

In traditional DP, the parameter $\epsilon$ is viewed as a ``privacy budget''. Informally, $\epsilon$ defines the total amount of information that may be released about a database, and each query consumes a portion of this budget. Once the budget is depleted, no further queries can be answered.

Rather than assigning a single global budget to an entire video, \system{} allocates a separate budget of $\epsilon$ to each frame of a video.
When \system{} receives a query $Q$ over frames $[a,b]$ requesting budget $\epsilon_Q$, it only accepts the query if \emph{all} frames in the interval $[a-\rho, b+\rho]$ have sufficient budget $\ge \epsilon_Q$, otherwise the query is denied (Alg.~\ref{alg:main} Lines 1-3).
If the query is accepted, \system{} then subtracts $\epsilon_Q$ from each frame in $[a,b]$, but \emph{not} the $\rho$ margin (Alg.~\ref{alg:main} Lines 4-5). We require sufficient budget at the $\rho$ margin to ensure that any single segment of an event (which has duration at most $\rho$) cannot span two temporally disjoint queries (\App{proofs}).

Note that since each \selectcmd{} in a query represents a separate data release, the total budget $\epsilon_Q$ used by a query is the sum of the $\epsilon_i$ used by each of the $i$ \selectcmd{}s. The analyst can specify the amount of budget they would like to use for each release (via the \texttt{CONSUMING} directive in a \texttt{SELECT}, \App{grammar}). 
Revisiting our example query from \Sec{system:query}, if $S_1$ requested $\frac{1}{6}$ units for each color count (3 keys, total $\frac{1}{2}$), and $S_2$ requested $\frac{1}{2}$ units, then the privacy budget for the entire query would be $\epsilon_Q = 1$.

\extendedonly{The amount of privacy budget remaining per frame is public information because it is only a function of past queries, not of the data. If analysts wish to query the interval $[a,b]$ but only part of the range has sufficient budget, they can adjust their interval to only consider frames with sufficient budget.}

\Para{Putting it all together.} 
Algorithm~\ref{alg:main} presents a simplified (single source video and aggregation) version of the \system{} query execution process. \extendedonly{(for full algorithm see \App{full-algorithm})}

\begin{algorithm}
\small
\SetKwInOut{Input}{Input}
\SetKwInOut{Output}{Output}

\Input{
\system{} query $Q = \{c, schema, S, \epsilon_Q\}$, video $V$, interval $I = begin:end$, policy $(\rho, K, \epsilon)$}
\Output{Query answer $A$}

\ForEach{frame $f \in V[I\pm\rho]$}{
    \If{$f.budget < \epsilon_Q$}{
        \Return \texttt{DENY}
    }
}
\ForEach{frame $f \in V[I]$}{
    $f.budget$ \texttt{-=} $\epsilon_Q$
}

\texttt{chunks} $\gets$ Split $V[I]$ into chunks of duration $c$

$T \gets \texttt{Table(schema)}$

\ForEach{$chunk \in chunks$}{
    \texttt{rows} $\gets F(chunk)$ \tcp{in isolated environment}
    $T$\texttt{.append}(rows)
}

$r \gets$ execute SQL query $S$ over table $T$
    
$\Delta_{(\rho,K)} \gets $ compute recursively over the structure of $S$ (\SecNS{system:sensitivity})
    
$\eta \gets Laplace(\mu=0, b=\frac{\Delta}{\epsilon_Q})$

$A \gets r + \eta$

\caption{\system{} Query Execution (simplified)}
\label{alg:main}
\end{algorithm}

\begin{theorem}
\label{thm:multiple-queries}
Consider an adaptive sequence (\SecNS{threat-model}) of $n$ queries $Q_1,\ldots,Q_n$, each over the same camera $C$, a privacy policy $(\rho_C, K_C)$, and global budget $\epsilon_C$. \system{} (Algorithm~\ref{alg:main}) provides $(\rho_C, K_C, \epsilon_C)$-privacy for all $Q_1,\ldots,Q_n$. 
\end{theorem}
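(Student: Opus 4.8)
The plan is to unfold Definition~\ref{def:pkeprivacy} and reduce the multi-query guarantee to a bound on the total privacy loss accumulated across the adaptive sequence, then control that loss using the per-frame budget together with the single-query guarantee of Theorem~\ref{thm:single-query}. Fix two \pkneighboring{} videos $v,v'$ that differ in one $(\rho_C,K_C)$-bounded event $e$; by Definition~\ref{def:pkbounded}, $e$ is covered by at most $K_C$ segments $\sigma_1,\dots,\sigma_{K_C}$, each of duration $\le \rho_C$. Because each \processcmd{} instance runs in isolation, the table (and hence the noisy answer) produced for a query $Q_i$ over interval $I_i$ depends only on the frames in $I_i$; in particular, any query whose interval is disjoint from every $\sigma_j$ yields identically-distributed answers on $v$ and $v'$ and contributes nothing to the loss. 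The crux is therefore to sum the contributions of the queries that do overlap $e$.

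First I would argue the analysis composes adaptively. The budget bookkeeping in Algorithm~\ref{alg:main} (Lines~1--5) is a function only of the public query metadata (intervals and requested $\epsilon_{Q_i}$) and never of the hidden video content, so the accept/deny decisions and allocated budgets form a valid adaptive strategy, and standard adaptive sequential composition of the privacy-loss random variable lets me add per-query losses. For a single accepted $Q_i$, Theorem~\ref{thm:single-query} (via Eq.~\ref{eq:base-table-sensitivity}) bounds its sensitivity for a full $(\rho_C,K_C)$-bounded event by $\Delta_i=\maxrows_i\cdot K_C\cdot\maxchunks{\rho_C}$, and \system{} adds Laplace noise of scale $\Delta_i/\epsilon_{Q_i}$. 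The key refinement is that $e$ occupies only part of $I_i$: if $e$ spans $m_i$ chunks of $Q_i$, the raw answer changes by at most $\maxrows_i\cdot m_i$, so the Laplace privacy loss of $Q_i$ is at most $\epsilon_{Q_i}\,m_i/(K_C\,\maxchunks{\rho_C})$ rather than the full $\epsilon_{Q_i}$.

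Summing, the total loss is at most $\tfrac{1}{K_C\,\maxchunks{\rho_C}}\sum_i \epsilon_{Q_i} m_i$. Writing $m_i$ as a sum over $j$ of the chunks of $Q_i$ spanned by $\sigma_j$ and exchanging the order of summation, it suffices to show the per-segment bound that, for each individual segment $\sigma_j$ (duration $\le \rho_C$), the quantity $\sum_i \epsilon_{Q_i}\cdot(\text{chunks of }Q_i\text{ spanned by }\sigma_j)$ is at most $\maxchunks{\rho_C}\cdot\epsilon_C$; the $K_C$ segments then sum to a total loss of at most $\epsilon_C$, which is the claim. The delicate step is proving this per-segment bound by charging each spanned chunk to frames it deducts from: every chunk spanned by $\sigma_j$ contains a frame inside $\sigma_j$'s $\le\rho_C$-length window, and Lines~4--5 guarantee that the budgets of all queries deducting from any fixed frame sum to at most $\epsilon_C$. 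The role of the $\rho_C$-margin (Lines~1--3) is precisely to keep this charging sound across query boundaries: since a single segment has duration at most $\rho_C$, a segment straddling two temporally disjoint queries forces the later query's margin check to observe the earlier query's deductions at the boundary frames, so their budgets are jointly constrained and cannot each be spent in full.

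The main obstacle I anticipate is making this per-segment charging tight in the presence of chunk misalignment and heterogeneous chunk sizes across queries. A segment of duration $\rho_C$ can touch chunks extending slightly beyond its window, which is exactly why the sensitivity is stated via $\maxchunks{\rho_C}=1+\lceil\rho_C/c\rceil$ in Eq.~\ref{eq:max-chunks} (the ``$+1$'' absorbs the alignment slack). I would lean on this rounding---equivalently, on the fact that \system{} really protects $(\hat{\rho},K_C)$-bounded events for $\hat{\rho}=\maxchunks{\rho_C}\cdot c\ge\rho_C$---to guarantee the spanned-chunk count per query never exceeds $\maxchunks{\rho_C}$, so that the per-segment sum collapses to the $\maxchunks{\rho_C}\cdot\epsilon_C$ budget rather than to a per-frame count. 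The remaining work is routine bookkeeping: verifying that a \texttt{GROUPBY}'s several releases each draw independent noise and consume budget additively, so the composition accounting above applies verbatim. The fully detailed version of this argument is what I would place in the appendix referenced for Theorem~\ref{thm:single-query}.
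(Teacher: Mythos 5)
Your high-level strategy is a legitimate refinement of the paper's central calculation: the paper also converts ``fraction of the calibrated sensitivity actually realized'' into a discounted privacy loss, but it does so at the granularity of \emph{whole segments}, for only \emph{two} queries, by case-splitting on whether the two intervals are $\rho$-disjoint (if not, it asserts a shared budget and invokes sequential composition; if so, the margin ensures no segment straddles both queries, the segments partition with $K_1+K_2\le K$, and the per-query losses $\epsilon K_1/K+\epsilon K_2/K$ telescope). Your chunk-level regrouping is what one would need for the general adaptive $n$-query case, which the paper's two-query analysis never really reaches. The problem is that the crux of your argument---the per-segment bound---is a genuine gap, and the charging you offer for it is invalid. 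The invariant maintained by Lines~4--5 of Algorithm~\ref{alg:main} constrains only the budgets of queries whose \emph{intervals contain a common frame}; queries that span the same segment through \emph{disjoint} sets of frames are constrained only pairwise, through the $\rho$-margin check, and pairwise constraints permit every one of them to hold $\epsilon_C/2$ simultaneously. Concretely: take $K_C=1$, $\rho_C=30$s, and $n$ queries over pairwise disjoint one-second intervals inside a single $30$-second window, each declaring chunk duration $30$s (so $\maxchunks{\rho_C}=2$ while its table consists of one truncated chunk) and each requesting $\epsilon_{Q_i}=\epsilon_C/2$. Every check region contains the whole window, every frame carries at most one deduction of $\epsilon_C/2$, so Algorithm~\ref{alg:main} accepts all $n$ queries. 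A segment visible throughout the window changes each query's raw answer by up to $\maxrows_i$ against a noise scale of $2\maxrows_i/(\epsilon_C/2)$, i.e., a privacy loss of $\epsilon_C/4$ per query and $n\epsilon_C/4$ in total, exceeding $\epsilon_C$ once $n\ge 5$. So the inequality you reduce the theorem to is not merely unproven: for the mechanism as literally specified (nothing in the grammar or Algorithm~\ref{alg:main} forbids a chunk duration exceeding the query interval, and Equation~\ref{eq:base-table-sensitivity} is applied as stated) it is false, and the ``$+1$ rounding'' you propose to lean on is irrelevant---that slack caps each \emph{individual} query's spanned-chunk count, whereas the failure is in the \emph{sum over queries}. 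There is also a typechecking issue you would need to repair even in benign cases: $\maxchunks{\rho_C}$ depends on each query's own chunk duration $c_i$, so the per-segment inequality must be normalized per query, i.e., $\sum_i \epsilon_{Q_i}\,m_{ij}/(1+\lceil\rho_C/c_i\rceil)\le\epsilon_C$, not stated with a single common $\maxchunks{\rho_C}$.

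This also clarifies the relationship to the paper's proof: the paper never confronts the configuration above because with two queries pairwise budget coupling \emph{is} full coupling, and its $\rho$-disjoint case assumes whole segments are confined to a single query---an assumption your chunk-level accounting correctly refuses to make, but then cannot pay for. To complete a proof along your lines you need an ingredient that neither you nor the paper supplies: a structural lemma coupling a query's loss ratio $m_{ij}/(1+\lceil\rho_C/c_i\rceil)$ to the fraction of the segment's window covered by $I_i$ (e.g., with $c_i\ge\rho_C$ a ratio of $1$ forces $I_i$ to contain the entire window), so that high-ratio queries are forced to share frames and hence budget; alternatively, the mechanism itself must be strengthened so that your frame-charging becomes sound---for instance by deducting $\epsilon_Q$ over the full margin $[a-\rho,b+\rho]$ rather than only over $[a,b]$ (then all queries touching a common $\rho$-window are \emph{jointly} budget-constrained), or by restricting the declared chunk duration. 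As written, the gap sits exactly at the step carrying all the difficulty, so the proposal does not go through.
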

\noindent We provide the full proof in \App{proofs}.


\tightsection{Query Utility Optimization}
\label{sec:practical}

The noise that \system{} adds to a query result is proportional to both the privacy policy $\pk$ and the range of the aggregated values (the larger the range, the more noise \system{} must add to compensate for it). In this section we introduce two optional optimizations that \system{} offers analysts to improve query accuracy while maintaining an equivalent level of privacy: one reduces the $\rho$ needed to preserve privacy (\SecNS{practical:mask}), while the other reduces the range for aggregation (\SecNS{practical:split}).



\tightsubsection{Spatial Masking}
\label{sec:practical:mask}
\vspace{-5pt}

\begin{figure}[t!]
  \centering
      \begin{subfigure}[b]{.46\textwidth}
        \includegraphics[width=\linewidth]{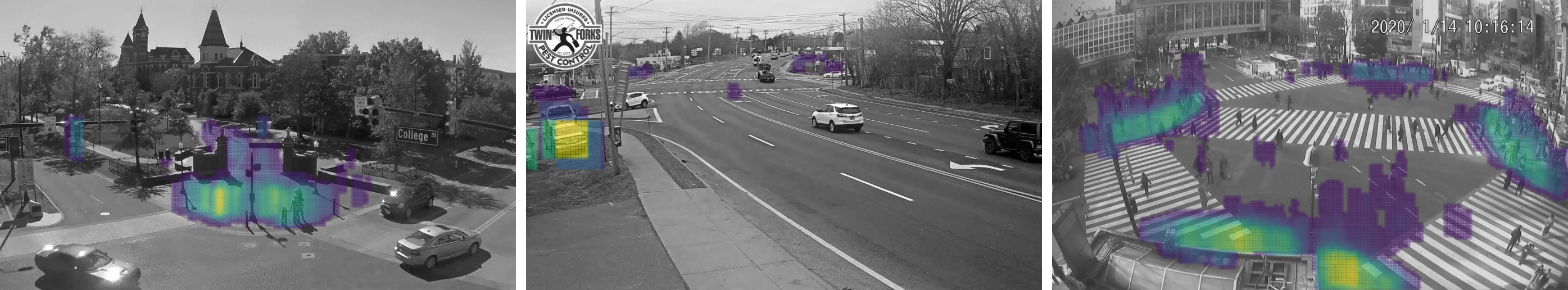}
      \end{subfigure}
    \begin{subfigure}[b]{0.15\textwidth}
        \begin{subfigure}[b]{\linewidth}
            \includegraphics[width=\linewidth]{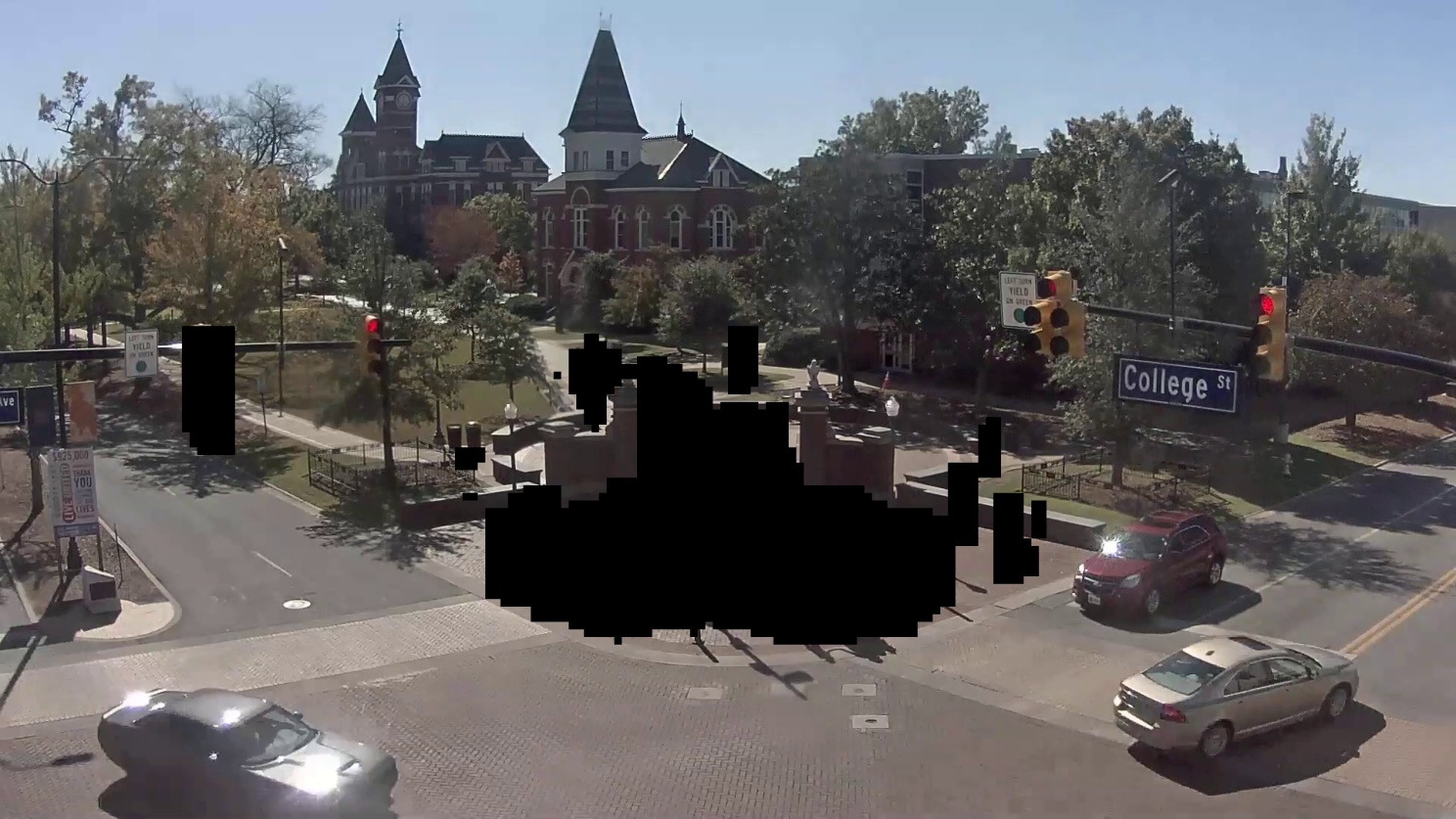}
            \caption{\auburn{}}
            \label{fig:mask:auburn}
        \end{subfigure}
    \end{subfigure}
    \begin{subfigure}[b]{0.15\textwidth}
        \begin{subfigure}[b]{\linewidth}
            \includegraphics[width=\linewidth]{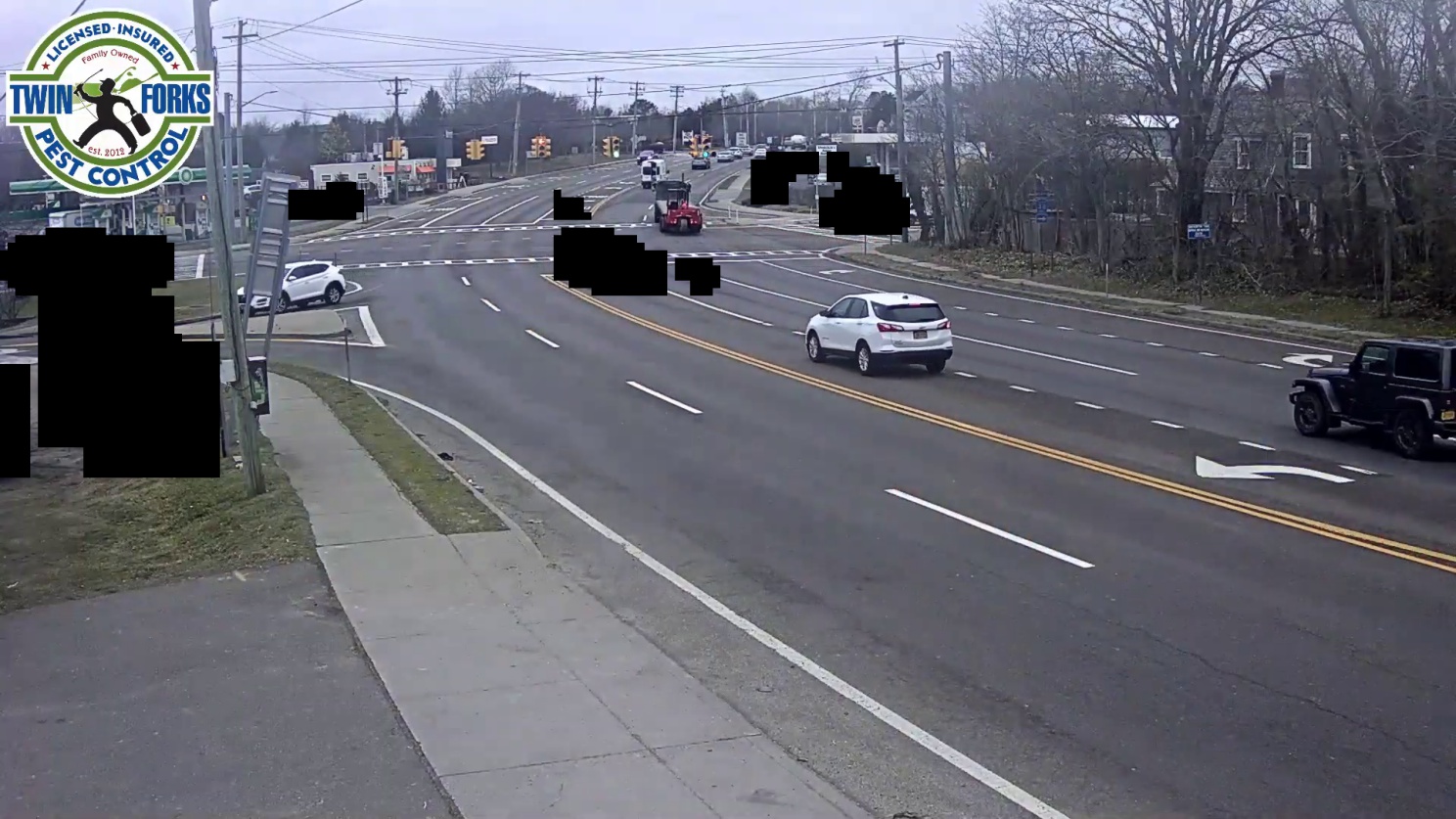}
            \caption{\hampton{}}
            \label{fig:mask:hampton}
        \end{subfigure}
    \end{subfigure}
    \begin{subfigure}[b]{0.15\textwidth}
        \begin{subfigure}[b]{\linewidth}
            \includegraphics[width=\linewidth]{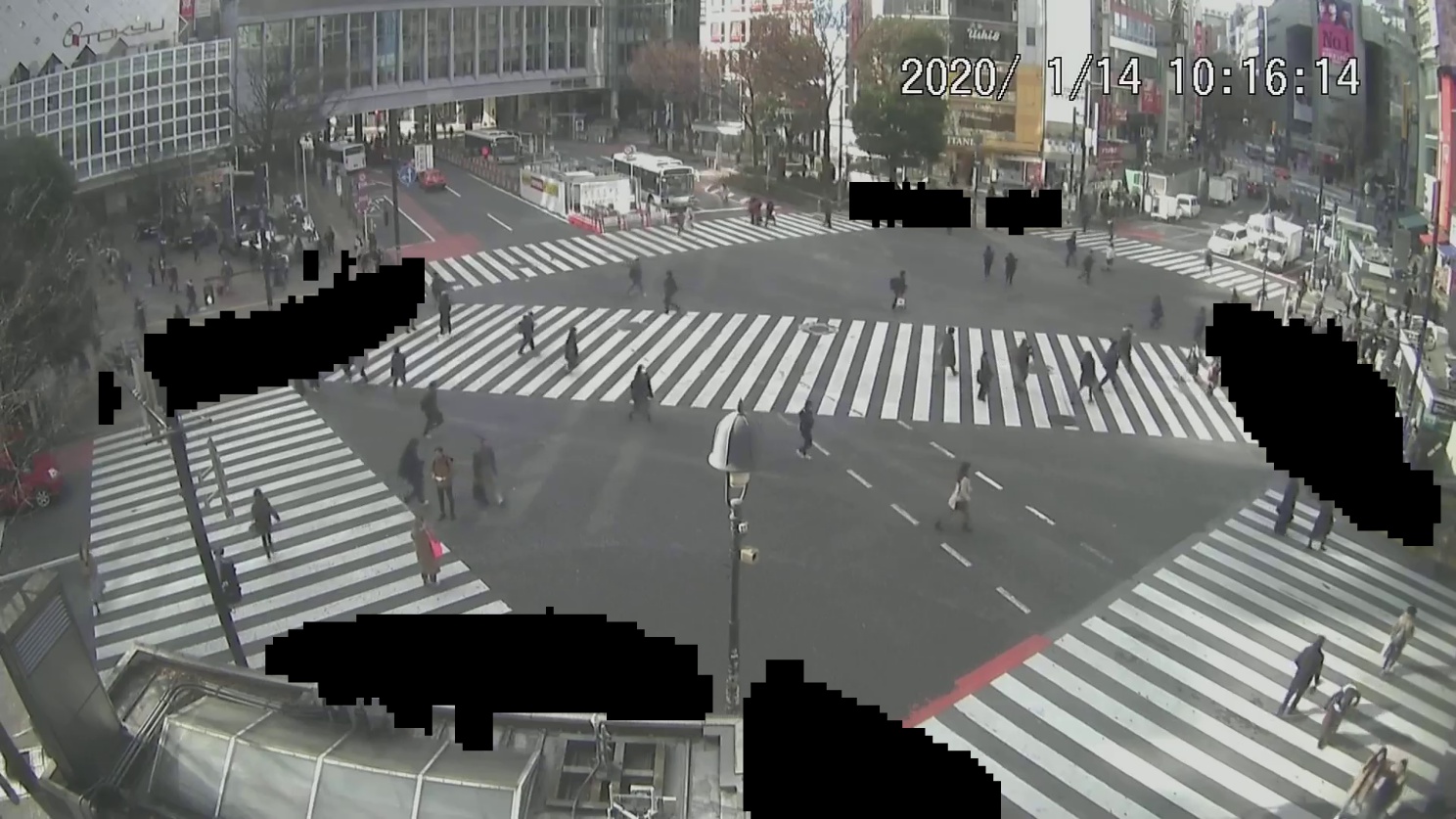}
            \caption{\shibuya{}}
            \label{fig:mask:shibuya}
        \end{subfigure}
    \end{subfigure}
    \vspace{2pt}
    \tightcaption{Heatmaps (yellow/blue indicates max/min persistence) and resulting masks for each video in our dataset; persistence range is normalized per video.} 
    \label{fig:heatmap}
\end{figure}

\begin{figure*}
  \centering
    \begin{subfigure}[b]{0.31\textwidth}
        \includegraphics[width=\linewidth]{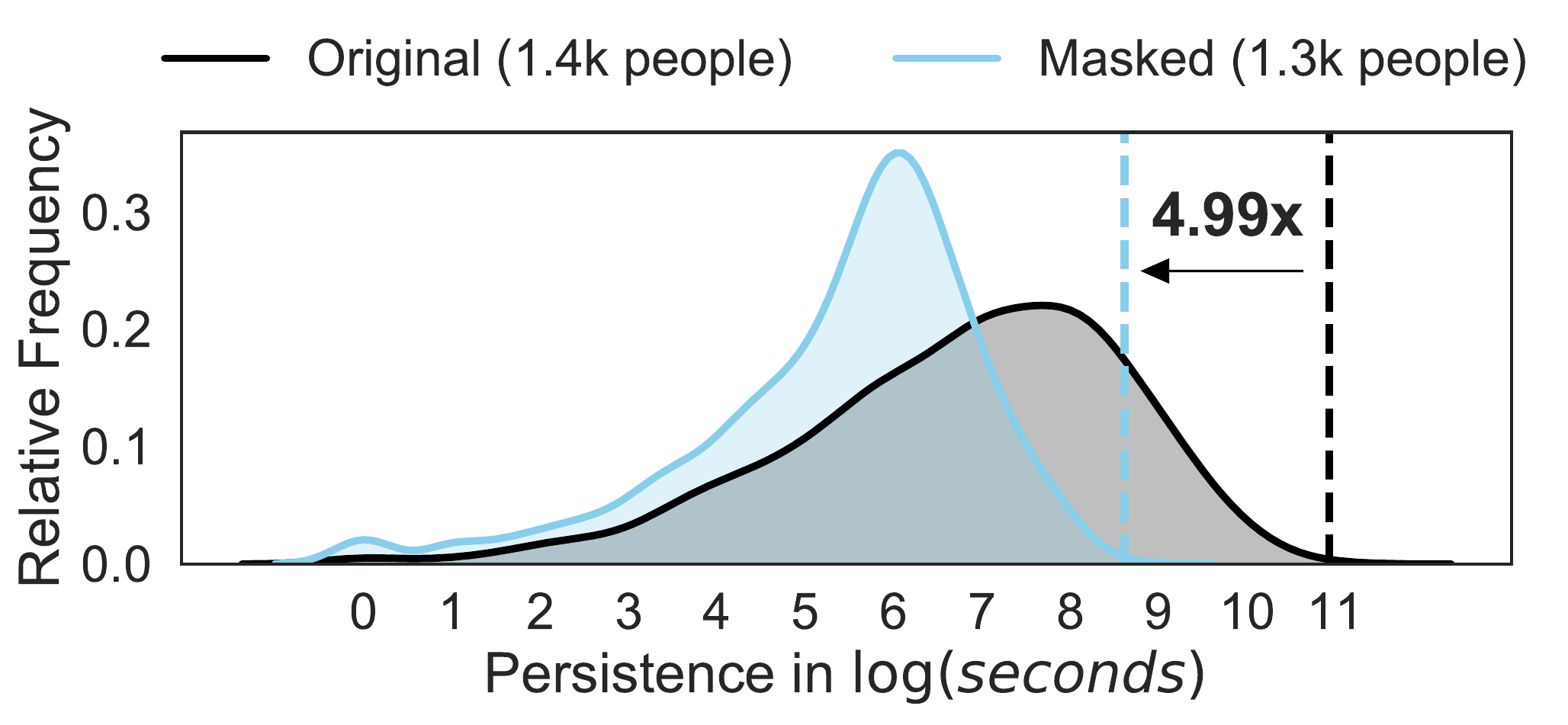}
        \caption{\auburn{}}
        \label{fig:screenshots:auburn}
    \end{subfigure}
    \begin{subfigure}[b]{0.31\textwidth}
        \includegraphics[width=\linewidth]{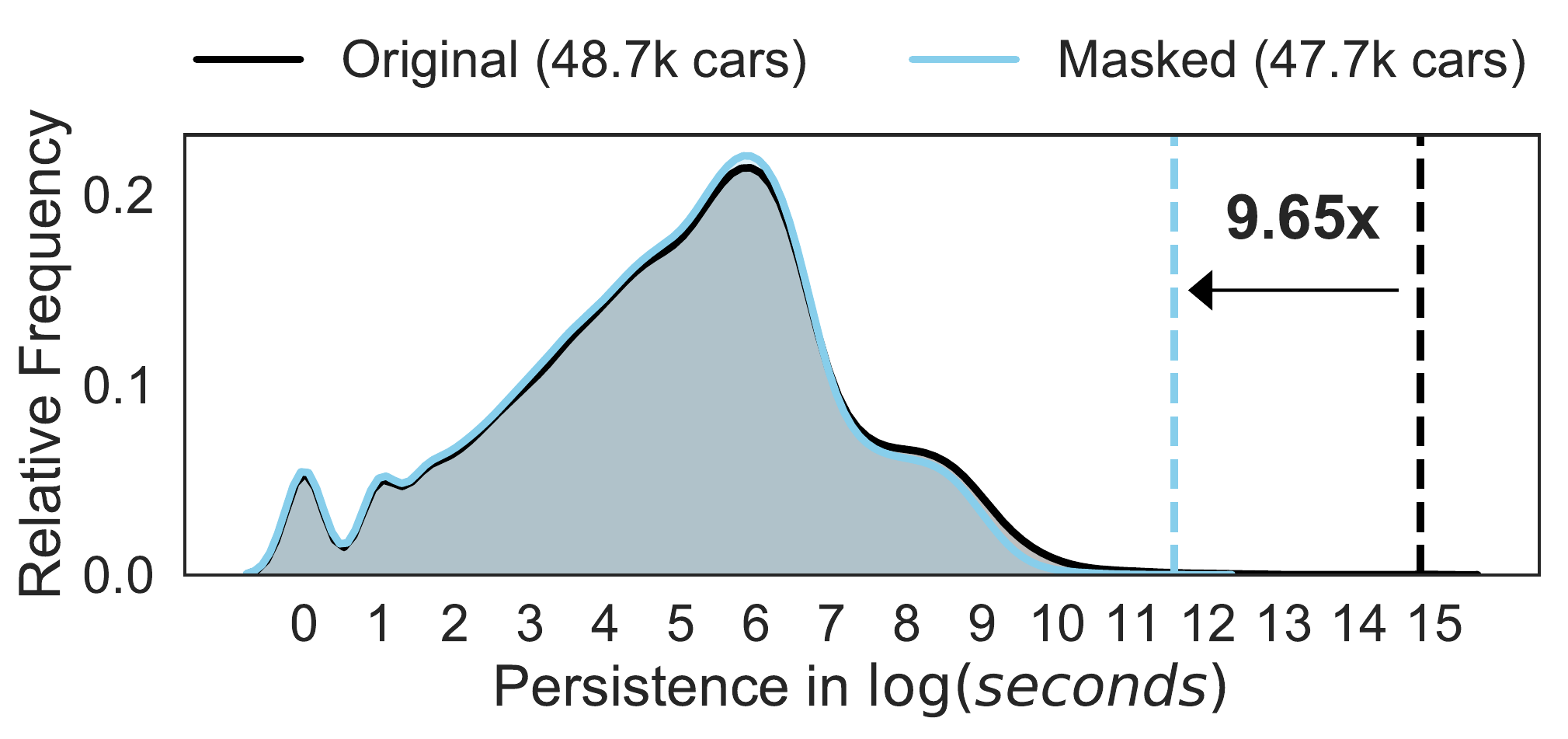}
        \caption{\hampton{}}
        \label{fig:screenshots:hampton}
    \end{subfigure}
    \begin{subfigure}[b]{0.31\textwidth}
        \includegraphics[width=\linewidth]{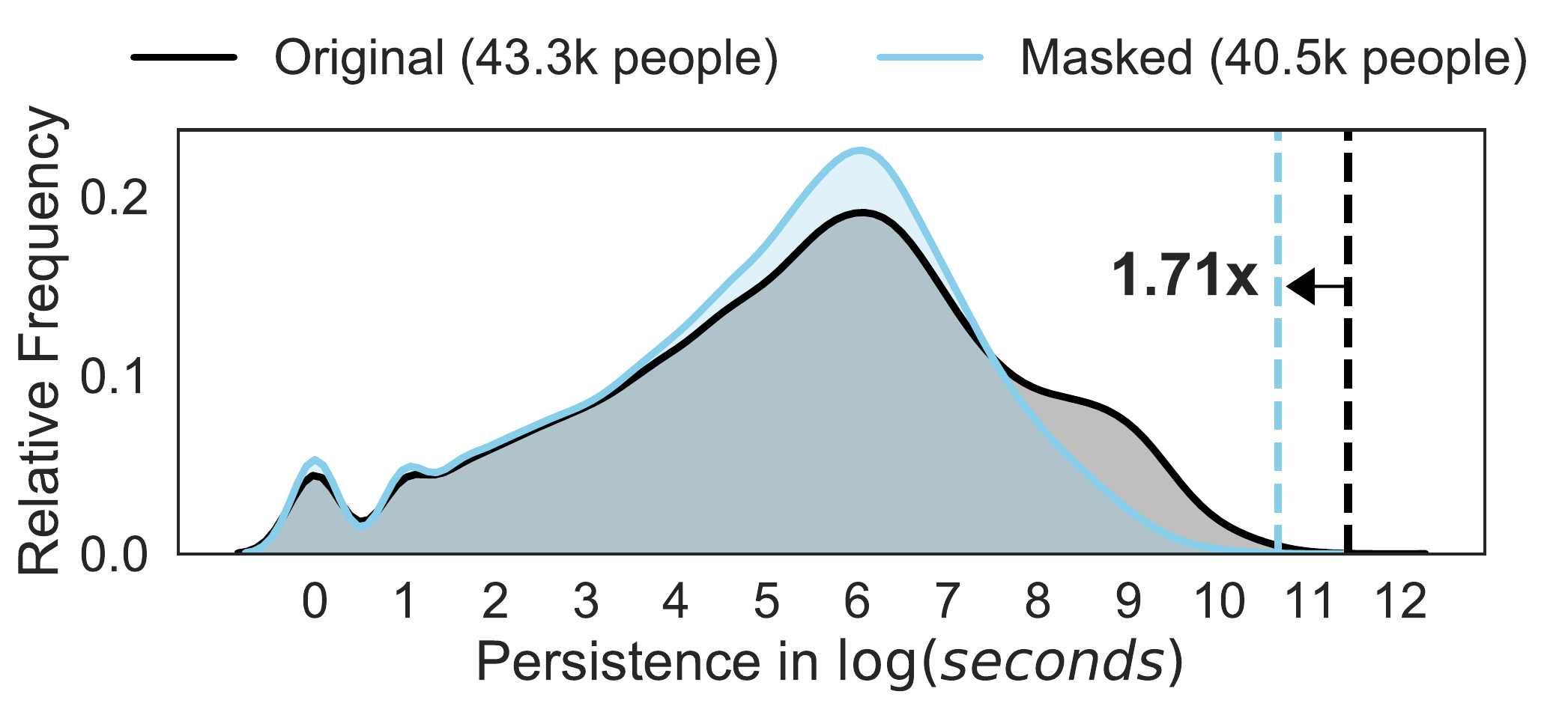}
        \caption{\shibuya{}}
        \label{fig:screenshots:shibuya}
    \end{subfigure}
    \vspace{2pt}
    \tightcaption{The distribution of private objects' durations (persistence) is heavy tailed. Applying the mask from~\Fig{heatmap} significantly lowers the maximum duration, while still allowing most private objects to be detected. The key denotes the total number of private objects detectable before and after applying the mask. The dotted lines highlight the maximum persistence, and the arrow text denotes the relative reduction. 
}
    \label{fig:masking:distributions}
\end{figure*}

\Para{Observation.} 
In certain settings, a few individuals may be visible to a camera for far longer than others (\eg{} those sitting on a bench or in a car), creating a heavy-tailed distribution of presence durations. \Fig{heatmap} (top row) provides some representative examples. Setting $\pk$ to the maximum duration in such distributions would result in a large amount of noise needed to protect just those few individuals; all others could have been protected with a far lower amount of noise. We observe that, in many cases, lingering individuals tend to spend the majority of their time in one of a few fixed regions in the scene, but a relatively short time in the rest of the scene. 
For example, a car may be parked in a spot for hours, but only visible for 1 minute while entering/leaving the spot.

\Para{Opportunity.}
Masking regions in the scene that house lingering individuals (\ie{} making them not visible to the analyst's video processing) would drastically reduce the \emph{observable} maximum duration of individuals' presence, e.g., the parked car from above would be observable for 1 min rather than hours. This, in turn, would result in a lower $\rho$ in the privacy policy, but an equivalent level of privacy--all individuals would still be protected. Of course, this technique is only useful to an analyst when the remaining (unmasked) part of the scene includes all the information needed for the query at hand, e.g., if counting cars, masked cars should be visible once they start moving.


\Para{Optimization.}
At camera-registration time, instead of choosing a single $\pk$ policy per camera, the video owner can instead release a map of potential masks to corresponding $\pk$ policies that would provide an equivalent level of privacy. 
Then, at query time, the analyst can (optionally) choose the mask that would minimally impact their query processing while maximizing the level of noise reduction (via the lower $\pk$ bound). If a mask is chosen, \system{} applies it to all frames of each chunk before passing it to the analyst's video processing executable (\SecNS{system:query}), and uses the adjusted $\pk$ in the sensitivity calculation (\SecNS{system:sensitivity}).

To generate the corresponding policy for a mask, the video owner can apply the same process of analyzing prior video from the camera (this time, with the mask applied first) to estimate the maximum \emph{observable} duration of objects they wish to protect. However, the space of all possible masks is infeasible to enumerate, and computing the policy for each is computationally expensive. Fortunately, the set of masks that would actually reduce the observable duration is relatively small and query-agnostic; we provide a computationally-efficient algorithm to find this minimal set of masks (and compute their policies) in \App{mask-data-structure}.


\Para{Noise reduction.}
We demonstrate the potential benefit of masking on three queries (Q1-Q3) from our evaluation (Table~\ref{tab:query-results}). Given the query tasks (counting unique people and cars), we chose masks that would maximally reduce $\rho$ without impacting the object counts; the bottom row of \Fig{heatmap} visualizes our masks. \Fig{masking:distributions} shows that these masks reduce maximum durations by 1.71-9.65$\times$. We extend this evaluation to 7 more videos from BlazeIt~\cite{blazeit} and MIRIS~\cite{miris} in \App{masking-effectiveness}.


\Para{Masking vs. denaturing.}
Although masking is a form of denaturing, \system{} uses it differently than
the prior approaches in~\Sec{prior:denaturing}, in order to sidestep their issues.
Rather than attempting to dynamically hide individuals as they move through the scene,
\system{}'s masks cover a \emph{fixed} location in the scene and are publicly
available so analysts can account for them in their query implementation.
Also, masks are used as an optional modification to the input video; the rest of the \system{} pipeline, and thus its formal privacy guarantees,
remain the same.


\begin{table}
    \small
    \begin{tabular}{lllllll}
        \textbf{Video} & \textbf{Max(frame)} & \textbf{Max(region)} & \textbf{Reduction} \\ \hline
        \auburn{}  & 3 & 6 & 2.00$\times$ \\
        \hampton{} & 40 & 23 & 1.74$\times$ \\
        \shibuya{} & 37 & 16 & 2.25$\times$ \\
    \end{tabular}
    \vspace{5pt}
    \tightcaption{Reduction in max output range from splitting each video into distinct regions.
    Reduction shows the factor by which the noise could be reduced. 
    2$\times$ cuts the necessary privacy level in half.
    \vspace{-5pt}
    }
    \label{tab:spatial-split}
\end{table}

\tightsubsection{Spatial Splitting}
\label{sec:practical:split}
\vspace{-5pt}

\Para{Observation.}
(1) At any point in time, each object typically occupies a relatively small area of a video frame.
(2) Many common queries (e.g., object detections) do not need to examine the entire contents of a frame at once, i.e., if the video is split spatially into regions, they can compute the same total result by processing each of the regions separately. 

\Para{Opportunity.}
\system{} already splits videos temporally into chunks. If each chunk is further divided into spatial regions and an individual can only appear in one of these chunks at a time, then their presence occupies a relatively smaller portion of the intermediate table (and thus requires less noise to protect). Additionally, the maximum duration of each individual region may be smaller than the frame as a whole.

\Para{Optimization.}
At camera-registration time, \system{} allows video owners to manually specify boundaries for dividing the scene into regions. They must also specify whether the boundaries are soft (individuals may cross them over time, \eg{} between two crosswalks) or hard (individuals will never cross them, \eg{} between opposite directions on a highway). 
At query time, analysts can optionally choose to spatially split the video using these boundaries. 
Note that this is in addition to, rather than in replacement of, the temporal splitting.
If the boundaries are soft, tables created using that split must use a chunk size of 1 to ensure that an individual can always be in at most 1 chunk. If the boundaries are hard, there are no restrictions on chunk size since the video owner has stated the constraint will always be true.

\Para{Noise reduction.}
We demonstrate the potential benefit of spatial splitting on three videos from our evaluation (Q1-Q3). For each video, we manually chose intuitive regions: a separate region for each crosswalk in \auburn{} and \shibuya{} (2 and 4, respectively), and a separate region for each direction of the road in \hampton{}. Table~\ref{tab:spatial-split} compares the range necessary to capture all objects that appear within one chunk in the entire frame compared to the individual regions. The difference (1.74-2.25$\times$) represents the potential noise reductions from splitting: noise is proportional to $\max(frame)$ or $\max(region)$ when splitting is disabled or enabled, respectively. 

\Para{Grid Split.}
To increase the applicability of spatial splitting, \system{} could allow analysts to divide each frame into a grid and remove the restrictions on soft boundaries to allow any chunk size. This would require additional estimates about the max size of any private object (dictating the max number of cells they could occupy at any time), and the maximum speed of any object across the frame (dictating the max number of cells they could move between). We leave this to future work.

\tightsection{Evaluation}
\label{sec:eval}

The evaluation highlights of \system{} are as follows:
\begin{enumerate}
    \item \system{} supports a diverse range of video analytics queries, including object counting, duration queries, and composite queries; for each, \system{} achieves accuracy within 79-99\% of a non-private system, while protecting all individuals with \pkeprivacy{} (\S\ref{eval:case-studies}). 
    \item \system{} enables video owners and analysts to flexibly and formally trade utility loss and query granularity while preserving the same privacy guarantee (\SecNS{eval:param-sweep}).
\end{enumerate}

\tightsubsection{Evaluation Setup}
\label{sec:dataset}
\vspace{-2pt}

\Para{Datasets.}
We evaluated \system{} primarily using three representative video streams (\auburn{}, \hampton{} and \shibuya{}, screenshots in~\Fig{heatmap}) that we collected from YouTube spanning 12 hours each (6am-6pm). 
For one case study (multi-camera), we use the Porto Taxi dataset~\cite{porto} containing 1.7mil trajectories of all 442 taxis running in the city of Porto, Portugal from Jan. 2013 to July 2014. We apply the same processing as \cite{rexcam-hotmobile} to emulate a city-wide camera dataset; the result is the set of timestamps each taxi would have been visible to each of 105 cameras over the 1.5 year period.

\Para{Implementation.}
We implemented \system{} in 4k lines of Python.
All analyst-provided \texttt{PROCESS} executables and camera-owner $\pk$ estimation use
the Faster-RCNN~\cite{faster_rcnn} model in Detectron-v2~\cite{detectron_v2} for object detection, and DeepSORT~\cite{deepsort} for object tracking. 
For these models to work reasonably given the diverse content of the videos, we chose the hyperparameters for detection and tracking on a per-video basis; 
more details are presented in \App{cv-params}.

\Para{Privacy Policies.}
We assume the video owner's underlying privacy goal is to ``protect the appearance of all individuals''.
For each camera, we use the strategy in~\Sec{practical:mask}, analyzing past video with CV algorithms to create a map between masks and $\pk$ policies that achieve this goal.

\Para{Query Parameters.}
For each query, we first chose a mask that covered as much area as possible (to get the minimal $\rho$) without disrupting the query. The resulting $\rho$ values are in Table~\ref{tab:query-results}.
We use a budget of $\epsilon=1$ for each query. 
We chose query windows sizes ($W$), chunk durations ($c$), and column ranges to best approximate the analyst's expectations for each query (as opposed to picking optimal values based on a parameter sweep, which the analyst is unable to do).
We empirically explore the impact of each parameter value in \Sec{eval:param-sweep}.


\Para{Baselines.}
For each query, we compute \texttt{accuracy} by comparing the output of \system{} (impacted by both chunking and addition of noise to the aggregate) to running the same exact query implementation without \system{} (\ie{} without chunking or noise). Since \system{}'s results include a component of random noise, we execute each query 1000 times, and report the mean accuracy value $\pm$ 1 standard deviation.

\tightsubsection{Query Case Studies}
\label{eval:case-studies}

We formulate five types of queries to span a variety of axes
(target object class, number of cameras, aggregation type, query duration, standing vs. one-off query). 
Fig.~\ref{fig:eval:graph2} displays results for Q1-Q3. Table~\ref{tab:query-results} summarizes the remaining queries (Q4-Q13).

%
%
%

\begin{table*}
\small
\center
\resizebox{\textwidth}{!}{%
\begin{tabular}{|l|l|l|l|l|l|l|l|}
\hline
\textbf{Case \#} & \textbf{Q\#} & \textbf{Query Description} & \textbf{Query Parameters} & \textbf{Video} & \textbf{$\rho$} & \textbf{Query Output} & \textbf{Accuracy} \\ \hline

Case 2 & Q4 & \begin{tabular}[c]{@{}l@{}}Average Taxi Driver Working Hours \\ (\texttt{union} across 2 cameras)\end{tabular}                         & \begin{tabular}[c]{@{}l@{}}$|W| = 365~\text{days}$, $c = 15~\text{sec}$,\\ Agg = \texttt{avg}, range $ = (0, 16)$\end{tabular}                   & \texttt{porto10}, \texttt{porto27}      & [45, 195] sec        & 5.87 hrs              & $94.14\% \pm 0.18\%$  \\ \hline
Case 2 & Q5 & \begin{tabular}[c]{@{}l@{}}Average \# Taxis Traversing 2 Locations on \\ Same Day (\texttt{intersection} across 2 cameras)\end{tabular} & \begin{tabular}[c]{@{}l@{}}$|W| = 365~\text{days}$, $c = 15~\text{sec}$,\\ Agg = \texttt{avg}, range $ = (0, 300)$\end{tabular}                  & \texttt{porto10}, \texttt{porto27}      & [45, 195] sec        & 131 taxis             & $99.80\% \pm 0.13\%$  \\ \hline
Case 2 & Q6 & \begin{tabular}[c]{@{}l@{}}Identifying Camera with Highest Daily Traffic\\ (\texttt{argmax} across all 105 cameras)\end{tabular}   &  \begin{tabular}[c]{@{}l@{}}$|W| = 365~\text{days}$, $c = 15~\text{sec}$,\\ Agg = \texttt{argmax}\end{tabular}                                                                                                                                             & \texttt{porto0}, ..., \texttt{porto104} & [15, 525] sec        & \texttt{porto20}                & 100.00\%              \\ \hline
\multirow{3}{*}{Case 3} & Q7 &
\multirow{3}{*}{Fraction of trees with leaves (\%)}                                                                            & \multirow{3}{*}{\begin{tabular}[c]{@{}l@{}}$|W| = 12~\text{hrs}$, $c = 1~\text{frame}$, \\ Agg = \texttt{avg}, range $ = (0,100)$\end{tabular}} & \auburn{}             & 48.89 sec            & 15/15 = 1.00          & $99.90\% \pm 0.11\%$  \\ \cline{5-8} 
                                                                              &   Q8                                              &    &                                                                                                                                             & \hampton{}            & 6.21 min             & 3/7 = 0.43            & $98.24\% \pm 1.90\%$  \\ \cline{5-8} 
                                                                           &         Q9                                           &  &                                                                                                                                               & \shibuya{}            & 3.34 min             & 4/6 = 0.67            & $99.39\% \pm 0.66\%$  \\ \hline
\multirow{3}{*}{Case 4} & Q10 &                                                                                                                                
\multirow{3}{*}{Duration of Red Light (seconds)}                                                                               & \multirow{3}{*}{\begin{tabular}[c]{@{}l@{}}$|W| = 12~\text{hrs}$, $c = 10~\text{min}$, \\ Agg = \texttt{avg}, range $ = (0,300)$\end{tabular}}  & \auburn{}             & 0 sec                & 75 sec                   & 100.00\%              \\ \cline{5-8} & Q11 & 
                                                                                                                               &                                                                                                                                                 & \hampton{}            & 0 sec                & 50 sec                   & 100.00\%              \\ \cline{5-8} & Q12 &  
                                                                                                                               &                                                                                                                                                 & \shibuya{}            & 0 sec                & 100 sec                  & 100.00\%              \\ \hline
Case 5 & Q13 &                                                                                                                               
\begin{tabular}[c]{@{}l@{}}\# Unique People (Filter: trajectory \\ moving towards campus)\end{tabular}                         & \begin{tabular}[c]{@{}l@{}}$|W| = 12~\text{hrs}$, $c = 10~\text{min}$,\\ Agg = \texttt{sum}, range $ = (0,25)$\end{tabular}                      & \auburn{}             & 49 sec               & 576 people                   & $79.06\% \pm 4.75\%$ \\ \hline
\end{tabular}}
\vspace{15pt}
\tightcaption{Summary of query results for Q4-Q13. For Case 3 and 5, we use the same masks (and thus $\rho$) from Figure~\ref{fig:heatmap}. For Case 4, we mask all pixels except the traffic light to attain $\rho = 0$. For Case 2 we do not use any masks.}
\vspace{3pt}
\label{tab:query-results}
\end{table*}

\begin{figure*}
  \centering
  \includegraphics[width=0.9\textwidth]{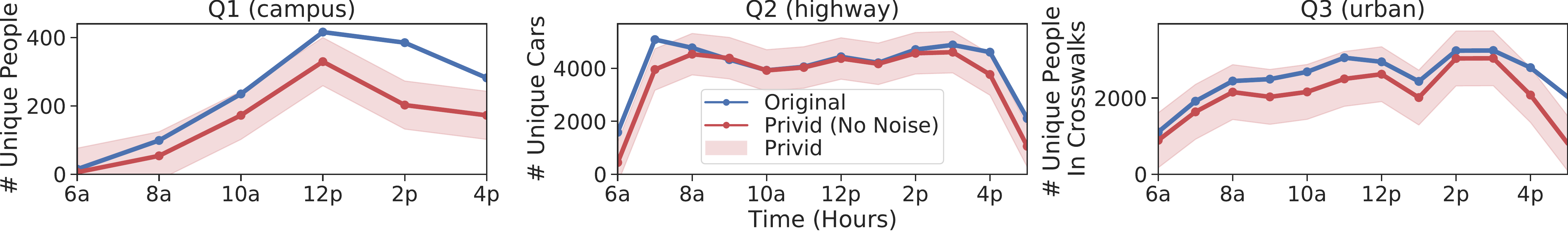}
  \vspace{0.15cm}
  \tightcaption{Time series of \system{}'s output for Case 1 queries. ``Original'' is the baseline query output without using  \system{}. ``Privid (No Noise)'' shows the raw output of \system{} before noise is added. The final noisy output will fall within the range of the red ribbon 99\% of the time.}
  \vspace{3pt}
  \label{fig:eval:graph2}
\end{figure*}

\begin{figure*}
  \centering
  \includegraphics[width=0.9\textwidth]{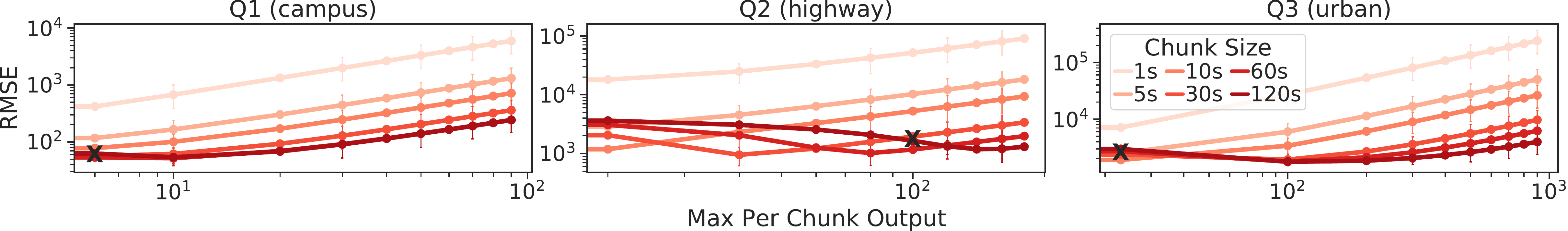}
  \vspace{0.15cm}
  \tightcaption{Impact of chunk size and output range (fixed $l=0$, increasing $u$) on \system{}'s root mean square error (RMSE) for the queries in~\Fig{eval:graph2}. The reference value is the same as~\Fig{eval:graph2}, namely the ``Original'' line. Error bars computed over 100 samples of noisy outputs from \system{}. The ``X'' represents the exact pair of parameters we chose for each video in~\Fig{eval:graph2}. 
    }
  \label{fig:eval:graph3}
\end{figure*}

\begin{figure}
  \centering
  \includegraphics[width=0.85\columnwidth]{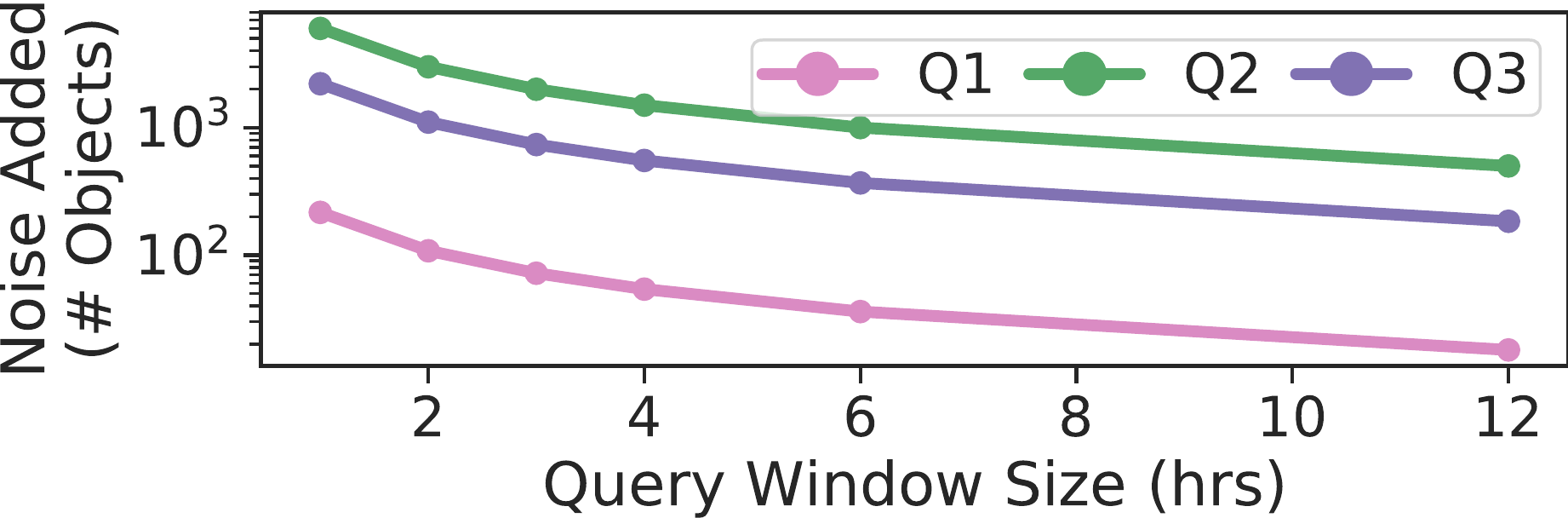} 
  \vspace{0.15cm}
    \tightcaption{Impact of query window size on the amount of noise \system{} must add to meet the privacy guarantee for Q1-Q3.
    }
  \label{fig:eval:graph3b}
\end{figure}

\Para{Case~1: Q1-Q3 (Counting private objects over time)}.
To demonstrate \system{}'s support for standing queries and short (1 hour) aggregation durations, we \texttt{SUM} the number of \textit{unique} objects observed \emph{each hour} over the 12 hours. 

\Para{Case~2: Q4-Q6 (Aggregating over multiple cameras with complex operators)}. 
We utilize \texttt{UNION}, \texttt{JOIN}, and \texttt{ARGMAX} to aggregate over cameras in the \texttt{Porto Taxi Dataset}. 
Due to the large aggregation window (1 year), \system{}'s noise addition is small (relative to the other queries using a window on the order of hours) and accuracy is high. 

\Para{Case~3: Q7-Q9 (Counting non-private objects, large window)}.
We measure the fraction of trees (non-private objects) that have bloomed in each video.
Executed over an entire network of cameras, such a query could be used
to identify the regions with the best foliage in Spring. 
Relative to Case 1, we achieve high accuracy by using a longer query window of 12 hours (the status of a tree does not change on that time scale), and minimal chunk size (1 frame, no temporal context needed). 

\Para{Case~4: Q10-Q12 (Fine-grained results using aggressive masking)}.
We measure the average amount of time a traffic signal stays red. 
Since this only requires observing the light itself, we can mask \emph{everything else}, resulting in a $\rho$ bound of 0 (no private objects overlap these pixels), enabling high accuracy and fine temporal granularity.

\Para{Case~5: Q13 (Stateful query)}.
We count only the individuals that enter from the south and exit at the north. It requires a larger chunk size (relative to Q1-Q3) to maintain enough state within a single chunk to understand trajectory.

\tightsubsection{Analyzing Sources of Inaccuracy}
\label{eval:main}

\system{} introduces two sources of inaccuracy to a query result: (1) intentional noise to satisfy \pkeprivacy{}, and (2) (unintentional) inaccuracies caused by the impact of splitting and masking videos before executing the video processing.
\Fig{eval:graph2} shows these two sources separately for queries Q1-Q3 (Case 1): the discrepancy between the two curves demonstrates the impact of (2), while the shaded belt shows the relative scale of noise added (1).
In summary, the amount of noise added by \system{} allows the final result to preserve the overall trend of the original. 

\nop{
\begin{figure}
  \centering
  \includegraphics[width=\columnwidth]{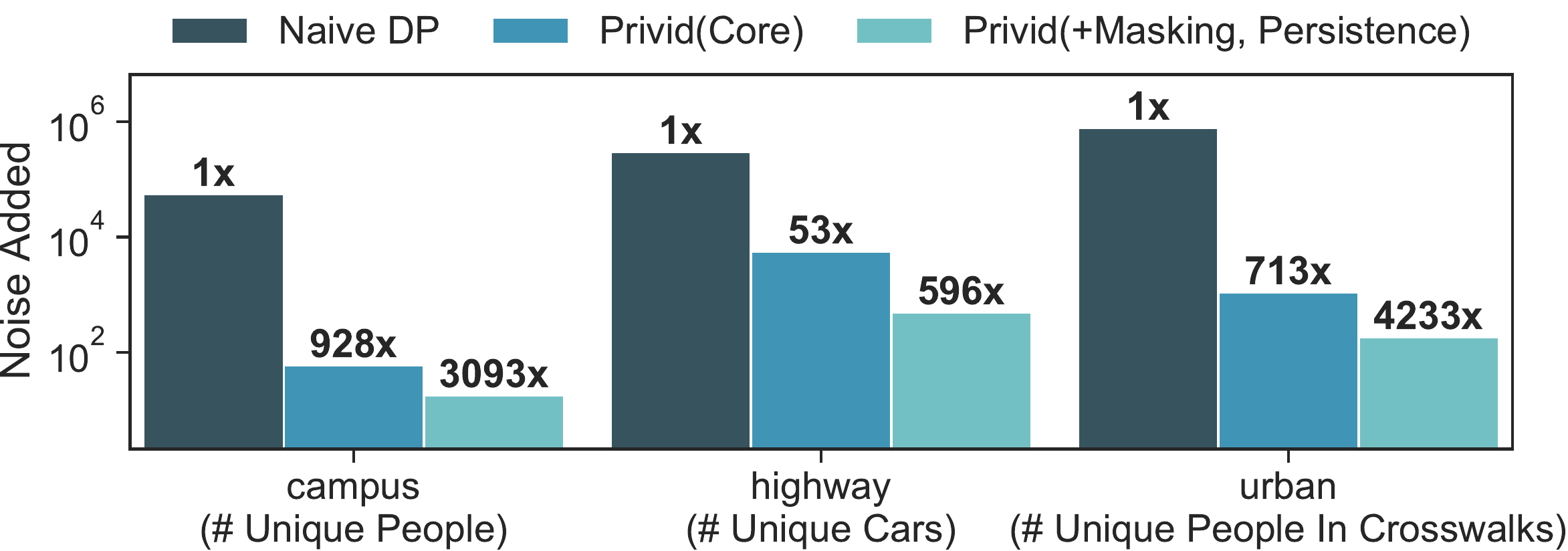}
  \tightcaption{Privid(Core) incorporates the persistence threshold, but not masking or splitting. Privid(+) adds them. The y-axis is the proportion of noise that would be added to the aggregate query result. The number above each bar shows the relative \emph{decrease} in noise compared to Naive.}
  \label{fig:eval:graph1}
\end{figure}
}

\nop{
\Para{Scale of noise.}
To provide context for the amount of noise \system{} adds, we compare to a hypothetical \emph{naive} application of DP, which does not use \pkeprivacy{} and thus sets noise equivalent to the entire output range. 
\Fig{eval:graph1} shows that, \eg{} for \auburn{}, utilizing \system{} results in a three orders of magnitude reduction ($928\times$) in the amount of noise necessary to satisfy DP compared to Naive. Incorporating masking and splitting further improves the reduction $3,093\times$ relative to Naive.
In our example per-hour count query, this level of noise translates to adding or removing roughy 18 people to the total aggregate, compared to 1k people for Naive. 
}

\tightsubsection{Impact of Parameters}
\label{sec:eval:param-sweep}

\system{} provides flexibility for the analyst to balance temporal granularity and accuracy of results (\ie{} a more accurate result over a longer time horizon or vice versa).
To showcase this flexibility, we re-execute the three video/query pairs from Case Study 1 and jointly sweep over a range of chunk size and output range (\Fig{eval:graph3}) and over a range of window sizes (\Fig{eval:graph3b}).

\Fig{eval:graph3} shows that as we increase the chunk size for a given output range, the average error decreases (more context helps raw query accuracy), but the size of the error bar increases (due to additional noise). As one increases the chunk size, it makes the rows of the intermediate table coarser, and thus each row represents a larger fraction of the whole, requiring more noise to cover. 
\extendedonly{Put differently, the effective $\hat{\rho}$ that \system{} enforces becomes larger as the chunk size increases (\Eqtn{rho-hat}).}
For relatively small chunk sizes (less than the persistence), the decrease in error from having more context outweighs the increase in error from slightly larger noise. 

\Fig{eval:graph3b} keeps the chunk size and output range fixed (at the ``X'' values from \Fig{eval:graph3}) and shows that as the window size increases, the amount of noise required to hide an individual (and thus error introduced by \system{}) decreases. The number of chunks an individual could influence remains constant, while the total number of chunks included in the aggregate result grows. 

\nop{
\tightsection{Related Work}
\label{sec:related}


\Para{Denaturing.} The wide applications of computer vision have 
inspired 
much effort to treat a balance between 
privacy and utility of these applications. 
As detailed in~~\S\ref{sec:denaturing}, current solutions
commonly use various video-denaturing techniques
(\eg~\cite{openface,i-pic,privacycam,videodp,oh2017adversarial,vishwamitra2017blur}),
and thus must achieve two goals which are both 
hard and sometimes conflicting~\cite{mcpherson2016defeating}: blocking 
all private information in source videos
and not adversely impacting accuracy of 
computer vision algorithms. 
In contrast, \system{} treats the video
analytics pipelines as a whole and perturbs
the output by the concepts of differential privacy 
(DP) to ensure privacy of each individual's 
presence. \fc{we should explicitly call out videodp}

\Para{Traditional DP.} While DP has made remarkable successes in 
problems where queries are transparent and data 
are tabular-structured with well-defined private 
records, such as generic databases~\cite{pinq,privatesql,machanavajjhala2017differential,xu2019dpsaas},
network data~\cite{chen2014correlated},
timeseries~\cite{pegasus} and geospatial data~\cite{andres2013geo,kulkarni2019answering}, none is 
true in video analytics: 
neither DNN-based queries nor video data are 
well-structured.
The closest work to \system{} tries to support 
arbitrary queries 
(MapReduce~\cite{airavat}), but it still relies
on tabular data and pre-annotated private 
rows. 
\system{} makes DP practical in video analytics 
by relaxing these assumptions while leveraging 
video-specific properties (\eg spatially uneven 
distribution of private information in public 
videos) to rein in excessive noise.

\Para{DP for video/training.} PixelDP~\cite{pixeldp} leverages DP
to develop models robust to adversarial examples, but 
does not use it to provide \emph{privacy}
for \indvs{} in video content, which \system{} does.

\Para{Video Analytics Pipelines.} Recent work develops 
query pipelines and abstraction for public video 
that incorporates performance optimizations, but 
does not explicitly try to ensure privacy~\cite{blazeit,optasia,noscope,
deeplens-sanjay,focus,videostorm,scanner,lightdb}.
As \system{} is agnostic to the underlying video 
analytics pipelines (though it 
limits query output), we expect \system{} can
be expanded to combine these approaches to create 
a private, efficient, and high-level query language.
}

\nop{
Finally, works on cryptography 
(\eg~\cite{crypto-gazelle,crypto-xnor,crypto-homo} 
and trusted hardware 
(\eg~\cite{enclave-nn,enclave-nn-gpu,enclave-nn-gpu-2} 
have explored how to compute queries using 
deep neural networks while reducing trust on the 
entity doing the computation. 
Unlike \system{} they treat the query as 
trusted.
These approaches are complementary and can be combined 
to achieve security stronger than either.
}
\nop{
\subsection{old related work}

Differential privacy (DP) is an extensively studied subject. \system's key contribution is in importing techniques developed for DP to video analytics. Video analytics offers a unique challenge: \emph{we don't know which piece of data (pixel/frame/chunk of video) belongs to which user. In fact, there may be more than one user per frame/chunk.} This is in contrast with most prior work in DP. \vainline{All papers I could find make this assumption, but I am no expert} \system's key insight is that public videos often have low persistence. Hence, we can upper-bound the sensitivity of a query even though we don't know who appears in which frame. In addition, we make two video-specific observations that allow us to reduce noise~~\Sec{practical}.

Unlike most work on differential privacy, the table in \system{} is generated by an untrusted program. To prevent this from being exploitable, we carefully control the programs inputs and the range of its outputs. Airavat~\cite{airavat} also allows for untrusted map operations in a MapReduce~\cite{mapreduce} framework. Like other works on DP, Airavat also relies on knowing which row belongs to which user. 

Video data is highly correlated. If a person is in a given frame, they are very likely to be in adjacent frames as well. This motivates our definition of $W_x$ in~~\Sec{relaxing-dp-assumptions}.\vainline{@FC: confirm this} Time-series differential privacy approaches also face a similar challenge~\cite{timeseries-realtime,pegasus,timeseries-defn}. PeGaSus~\cite{pegasus} provides ideas on windowing and managing the privacy budget over time. These techniques are complimentary to \system{} and could be easily added on top of \system{} to improve its accuracy.

Recent work develops query abstraction for public video~\cite{blazeit} that incorporates some performance optimizations, but does not try to ensure privacy. Future work can combine these approaches to create a private, efficient and high-level query language.

Works on cryptography~\cite{crypto-gazelle,crypto-xnor,crypto-homo} and trusted hardware~\cite{enclave-nn,enclave-nn-gpu,enclave-nn-gpu-2} have explored how to compute queries using deep neural networks while reducing trust on the entity doing the computation. Unlike \system{} they treat the queries as untrusted. These approaches are complementary and can be combined to achieve security stronger than either.

\va{We could write the related work by going from system to system as follows, but that raises the question of whether we ommitted something. Trying a different approach above, which I think is stronger}

Many methods and systems have been developed for differential privacy. However, they do not directly map to the problem of video analytics. Systems such as PINQ~\cite{pinq} and PrivateSQL~\cite{privatesql} are designed for \emph{trusted} tabular data.\vainline{Confirm about PrivateSQL} Video data isn't tabular, and the table that \system{} generates is produced using an untrusted program. Airavat's~\cite{airavat} allows for an untrusted \emph{map} step. However it needs to know which row of data belongs to which user. Frames in a video do not admit such semantics...




\if 0
The initial design described in \Sec{design} is similar to that of Airavat~\cite{airavat}. However,
Airavat provides privacy for MapReduce~\cite{mapreduce} over text databases. It assumes that groups
of private objects are annotated in advance by the data provider, which is not scalable for video,
especially over long time periods and across many cameras. Further, even after formalizing video
analytics in this framework, the insights in \Sec{todo} are necessary to make it practical.

PINQ~\cite{pinq} and PrivateSQL~\cite{privsql} provide a generic query interface by implementing a
set of composable query primitives with known sensitivity.  As explained in \Sec{motivation}, it is
impractical to come up with a set of such primitives for deep-learning based queries, which are
integral to current video analytics methods.

PeGaSus~\cite{pegasus} provides ideas on windowing and managing the privacy budget over time. These
techniques are complimentary to \system{} and could be easily added on top of \system{} to improve
its accuracy.

While some recent work has tried to develop a query langauge abstraction for video, we note that it
is not sufficient for our threat model to simply combine BlazeIt and PrivateSQL. Query languages
such as PrivateSQL assume that the structure of the underlying database is known and can be
trusted. However, this is where the abstraction breaks down for video analytics query engines like
BlazeIt. The result produced by the query is an indirect result of intermediate computations over
the video. Since those intermediate computations are functions of neural networks that cannot be
trusted, we can not rely on them. While this may not seem to leak privacy, an untrusted neural net,
can arbitrarily choose how to represent or encode information to evade protections of the private
query language. For example, an untrusted object detection model could label a person of interest
as a traffic light and the query could search for the appearance of traffic lights. 
\fi

}




\tightsection{Ethics}
\label{sec:ethics}

In building \system{}, \emph{we do not advocate for the increase of public video surveillance and analysis}. Instead, we observe that it is already prevalent, and is driven by strong economic and public safety incentives. Consequently, it is undeniable that the analysis of public video will continue, and thus, it is paramount that we provide tools to improve the privacy landscape for such analytics. We seek to encourage video owners that it is indeed possible to have privacy as a first-class citizen, while still enabling useful queries. Further, we anticipate legislation will increasingly restrict video collection and analysis; privacy-preserving systems (like \system{}) will be crucial to enable critical applications while complying with such laws.



\nop{
\section*{Acknowledgements}
feedback on drafts from:
- eugene wu
- sanjay krishnan 
- dave levin
- matt lentz
- jen rexford
- anirudh

data from:
- miris: favyen 
- blazeit: daniel kang
}

\nop{
\section*{Ethics}
\label{ethics}

In this paper we do not advocate for the increase of public video surveillance and analysis. 
Rather, we observe that it is already occurring at a large scale, and is driven by strong economic
and public safety incentives that are unlikely to slow down any time soon. Regardless of whether
one agrees with the deployment of these cameras, it is undeniable that analysis is going to
continue, and thus it is extremely important that we provide tools to improve the privacy
landscape. It is our hope that this line of work will encourage video owners that it is possible to
put privacy first while still enabling the utility that video analytics promise. 

Although there currently exists relatively little legislation restricting camera deployments or
enforcing privacy-preserving data analysis, we anticipate that this will change in the near future.
When it does, systems providing strong privacy guarantees such as \system{} will be crucial for
extracting utility while complying with laws.

\nop{
\vainline{add discussion answering this question:  what if the stats collected hurt society as a whole, and hence me?}
}
}

\end{sloppypar}
\clearpage

\bibliographystyle{plain}
\interlinepenalty=10000
\bibliography{privid}
\clearpage

\newpage

\begin{appendix}

\section{Conservatively Estimating Durations}\label{app:cv-params}

\begin{table*}
    \centering
\begin{tabular}{|l|l|l|l|l|}
\hline
\textbf{video} & \textbf{cos}                     & \textbf{iou}                     & \textbf{age}                         & \textbf{n\_init}   \\ \hline
campus (0.8)                 & 0.1, 0.3, \textbf{0.5}, 0.7, 0.9 & 0.1, 0.3, 0.5, \textbf{0.7}, 0.9 & 16,  32, 48, 64, 80, \textbf{96}, 112 & 2, 3, 5, 7, \textbf{9} \\ \hline
urban (0.6)                  & \textbf{0.1}, 0.3, 0.5, 0.7, 0.9 & 0.1, 0.3, \textbf{0.5}, 0.7, 0.9 & 8, 16, 32, 48, 64, 80, \textbf{96}   & 2, 3, \textbf{5}, 7, 9 \\ \hline
\end{tabular}
    \caption{Set of hyperparameters used for tuning DeepSORT for the \auburn{} and \shibuya{} videos. The set of parameters that we ultimately used for our experiments are bolded.}
\label{tab:deepsort-params}
\end{table*}

\begin{table*}
    \centering
\begin{tabular}{|l|l|l|l|}
\hline
\textbf{video} & \textbf{max\_age}      & \textbf{min\_hits}  & \textbf{iou\_dist}          \\ \hline
highway (0.2)                & 240, 480, \textbf{720} & 3, 5, 7, \textbf{9} & \textbf{0.1}, 0.3, 0.5, 0.7 \\ \hline
\end{tabular}
    \caption{Set of hyperparameters used for tuning SORT for the \hampton{} video. The set of parameters that we ultimately used for our experiments are bolded.}
\label{tab:sort-params}
\end{table*}

Estimating duration values for a given scene requires the ability to track individuals in that scene. Unfortunately, even state-of-the-art vision techniques for object tracking are riddled with inaccuracies that stem from occlusion (\ie{} line of sight to an object is blocked), illumination, and poor video quality; these challenges are exacerbated in low-quality public surveillance videos. Manual annotation of individuals in video can overcome these challenges but is far from scalable and is difficult to use for real-time video analysis.

We observe that, even though the aforementioned challenges preclude off-the-shelf algorithms from perfectly tracking every individual, their hyperparameters can be trained in a way that generates a reasonably accurate distribution of duration values, which is sufficient for \system{} to provide meaningful privacy guarantees.

For each of the three video in our dataset, we first ran object detection using Facebook's Detectron2 \cite{detectron_v2} library with the included Faster-RCNN model \cite{faster_rcnn}. Using these object detection results, we then manually annotated a subset 
of video for each camera, producing a ground truth dataset of duration values. Annotation for a video involved recording the exact time each unique individal entered and exited the scene at the second granularity. Individuals may reappear and thus have multiple enter and exit times. 

Using our ground truth dataset, we then tuned the hyperparameters of a state-of-the-art tracking algorithm called DeepSORT \cite{deepsort} for each camera's video. Our goal was to find the configuration of parameters that produced the distribution of duration values which most closely matched that of the annotated ground truth data. To do this, we ran DeepSORT with all possible combinations of the hyperparameters listed in Table~\ref{tab:deepsort-params}. For each configuration, we computed the distribution of duration values, and compared it to our ground truth distribution. 

In \hampton{}, we consider cars as the private object rather than people because no
people are visible in the video, but a car's license plate, or their combination
of make, model and color may be enough to identify an individual. 
As DeepSORT is specific to tracking people, we used SORT \cite{Bewley2016_sort} instead. Table~\ref{tab:sort-params} lists the set of hyperparameters we considered and chose for tuning SORT. In practice, if a video contains both people and cars, the persistence distribution should account for both.

\section{Isolated Execution Requirements}
\label{app:isolated-execution}

In order for \system{}'s privacy guarantees to hold, it must run an independent instantiation of the analyst's \texttt{PROCESS} executable for each chunk inside an isolated execution environment. Abstractly, this environment must ensure:
    \begin{enumerate}
        \item The query output for a chunk is based solely on information in that chunk, and not information in any other chunk. 
        \item The query output is the only information the analyst can observe (\ie{} there are no side channels that can leak information beyond what \system{} is expecting)
    \end{enumerate}

In practice, this requires accounting for a number of possible subtle mechanisms that could be used to communicate between executions of a sandbox or outside of the sandbox entirely. The following is a list of requirements on the execution process to ensure the above:
\begin{itemize}
    \item The process must not be able to read or write from the network or any IPC mechanisms.
    \item The process must not be able to access or create files that are visible to another process. 
    \item Access to a PRNG (\eg{} \texttt{/dev/urandom}) must be cryptographically secure.
        Otherwise, if the sequence of bits were predictable, a writer execution could read from
        the generator until the desired bit is ready and then stop. If the reader can read
        after the writer has finished, it will see the writer's bit. 
    \item If the process' resource usage is monitored (\eg{} for billing purposes), it must not
        be made available to the analyst. It is reasonable to assume that, since the video
        owner is running the computation on behalf of the analyst, they may wish for the
        analyst to pay or the computation. In that case resource usage and cost must be
        determined entirely a priori and cannot depend on the actual execution itself,
        otherwise the precise resource usage could be a side channel. 
    \item The process must not be able to vary its own execution time in a way that is visible
        to the analyst. For example, a malicious query could exit immediately if
        $x$ is not present, but spin the processor for a long time if they are. This would
        cause a noticeable increase in total execution time compared to a segment of video where
        $x$ was not present, leaking $x$'s presence.
\end{itemize}

Formally, this environment can be modeled as a turing machine with the following properties (where $R$, $T$, \texttt{schema}, and \texttt{default} are specified a priori as part of the query $\query$):
\begin{itemize}
    \item Takes as \textbf{input}: a set of frames representing a chunk, timestamp of the first frame, frame rate (in fps), camera ID, and (optionally) any additional meta-data the video owner wishes to provide that is not dependent on any private information, such as the amount of daylight at that time.
    \item Has access to a \textbf{random tape} that is uncorrelated with any of the other chunks
    \item Produces as \textbf{output} at most $R$ rows, each with columns specified by \texttt{schema}.
    \item Executes for exactly $T$ seconds. If it finishes early, it must wait until $T$ seconds have elapsed. If it does not finish in time or crashes, it produces a \texttt{default} value.
\end{itemize}

\section{Degradation of Privacy}
\label{app:graceful-degradation}

\begin{figure}
  \centering
  \includegraphics[width=\columnwidth]{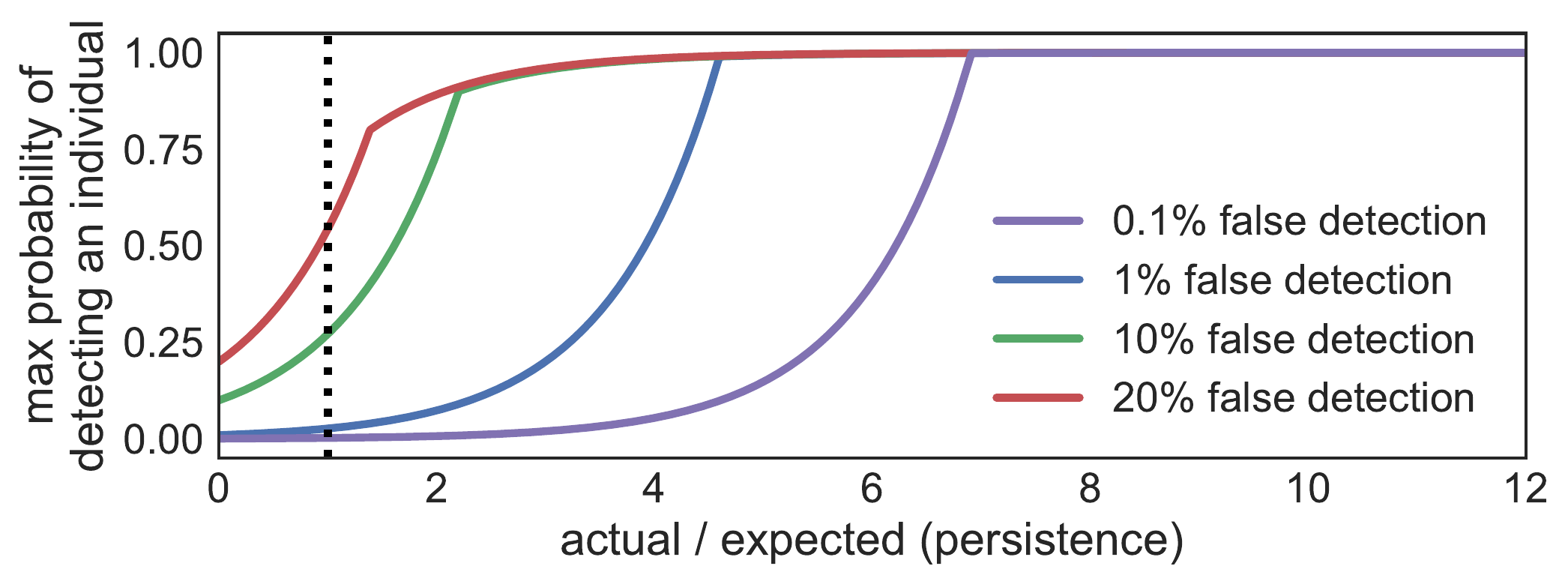}
  \vspace{-3pt}
  \tightcaption{Plot of~\Eqtn{p-detect} for a few different levels of $\alpha$. When an event (\eg{} the presence of a person) exceeds the $\pk{}$ bound protected by \system{}, their presence is not immediately revealed. Rather, as it exceeds the bound further, it becomes more likely an adversary could detect the event; here, we characterize that relationship. The $x$-axis plots an individual's (actual) $\rho$ \emph{relative} to the bound guaranteed (expected $\rho$); \ie{} at $x=2$, the individual is visible for $2\rho$. The $y$-axis plots the maximum probability that an adversary with a given confidence level could detect whether or not the event occurred.}
  \label{fig:privacy-degrade}
\end{figure}

A nice property of differentially-private algorithms is that privacy ``degrades gracefully'':
coming ``close'' to satisfying the definition of privacy, but not all the way, still provides a strong
level of privacy due to the randomness of the noise component. With \system{}, the $\pk$ bound
is the point at which the adversary could begin to do better than random guessing to determine
the presence of a \pkbounded{} event. 

We can formalize this using the framework of binary hypothesis testing. 
Consider an adversary who wishes to determine whether or not $x$ appeared in a given video $V$.
They submit a query to the system, and observe only the final result,
$A$, which \system{} computed as $A = r + \eta$. 
Based on this value, they must distinguish between the two hypotheses:

\begin{align*}
    \mathcal{H}_0:~&\text{$x$ does not appear in $V$} \\
    \mathcal{H}_1:~&\text{$x$ appears in $V$}
\end{align*}

We write the false positive $P_{FP}$ and false negative $P_{FN}$ probabilities as:
\begin{align*}
    P_{FP} &= \mathbb{P}(x \in V | \mathcal{H}_0) \\ 
    P_{FN} &= \mathbb{P}(x \notin V | \mathcal{H}_1)
\end{align*}

From Kairouz~\cite[Theorem~2.1]{kairouz}, if an algorithm guarantees $\epsilon$-differential privacy ($\delta=0$), then these probabilities are related as follows:

\begin{align}
\label{eq:fp}
    P_{FP} + e^{\epsilon}P_{FN} &\ge 1 \\
\label{eq:fn}
    P_{FN} + e^{\epsilon}P_{FP} &\ge 1
\end{align}

Suppose the adversary is willing to accept a false positive threshold of 
$P_{FP} \le \alpha$. 
In ther words, they will only accept $\mathcal{H}_1$ ($x$ is present) 
if there is less than $\alpha$ probability that $x$ is not actually present.

Rearranging equations \ref{eq:fp} and \ref{eq:fn} in terms of the probability of 
correctly detetecting $x$ is present ($1 - P_{FN}$), we have:
\begin{align*}
    1 - P_{FN} &\le& e^{\epsilon}P_{FP}                       &\le& e^{\epsilon}\alpha \\
    1 - P_{FN} &\le& e^{-\epsilon}(P_{FP} - (1-e^{\epsilon})) &\le& e^{-\epsilon}(\alpha - (1-e^{\epsilon}))
\end{align*}

The probability that the adversary correctly decides $x$ is present is then at most the minimum
of these:
\begin{equation}
    \label{eq:p-detect}
    \mathbb{P}(x \in V | \mathcal{H}_1) \le \min\{e^{\epsilon}\alpha,
    e^{-\epsilon}(\alpha-(1-e^{\epsilon}))\}
\end{equation}

In \Fig{privacy-degrade}, we visualize~\ref{eq:p-detect} as a function of an individual's persistence
past $\rho$, for 4 different adversarial confidence levels ($\alpha$=0.1\%,1\%,10\%,20\%).

\extendedonly{
\section{Relative Privacy Guarantee}
\label{app:pke-proportional}


Consequently, since the ceiling operation ``round''s the duration given by a policy up to the nearest chunk, \system{} actually protects $(\hat{\rho},K)$-bounded events with $\hat{\rho} = \maxchunks{\rho} \cdot c_t$ seconds, which is always greater than $\rho$.
}

\section{Query Grammar}
\label{app:grammar}

\begin{figure}
    \centering
    \small
    \begin{lstlisting}
query := split_stmt | process_stmt | select_stmt 
split_stmt := SPLIT camera_id
    BEGIN timestamp 
    END timestamp
    BY TIME chunk_sec STRIDE stride_sec
    [BY REGION ...] // optional, see Section 7
    [WITH MASK ...] // optional, see Section 7
    INTO chunk_set_id;
process_stmt := PROCESS chunk_set_id
    USING binary_name
    TIMEOUT timeout_sec
    PRODUING maxrows 
    WITH SCHEMA chunk_output_schema
    INTO table_id;
select_stmt := outer_select FROM inner_select 
    [GROUP BY col_list WITH KEYS ...]
outer_select := SELECT agg_fun(col_name)
inner_select := table_id | process_stmt
    | SELECT expr_list FROM inner_select 
        [WHERE condition] [LIMIT rows]
    | inner_select GROUP BY col_list [WITH KEYS ...]
    | inner_select JOIN inner_select ON col_list 
chunk_output_schema := /* list of */ col_name:dtype=default
agg_fun := SUM | COUNT | AVG | ...
expr := col_name | expr + expr | expr * expr | ...
dtype := STRING | NUMBER
     \end{lstlisting}
    \caption{\System Query Grammar. Terms in capital letters are query language keywords. Keywords in square brackets are optional. The term {\tt col\_name} stands for the name of an analyst-provided column.}
    \label{fig:query-grammar}
\end{figure}

A \texttt{split\_stmt} converts a segment of video data from a single camera into a named set of chunks by specifying the following: 
\begin{itemize}[leftmargin=*]
    \item The \texttt{BEGIN} and \texttt{END} timestamps describe the bounds of time the analyst is interested in. Tables are evaluated lazily only once they are needed for an aggregation so the analyst can choose large time bounds (\eg{}, an entire year) but narrow to specific times (\eg{}, 1 hour per weekday) using the aggregation statement. These times may be in the past or future (\ie{} for streaming queries). Any values that only depend upon past timestamps will be processed and released as soon as possible (limited only by the processing time requirements described in~\App{isolated-execution}). Any values that depend upon future timestamps will be released as soon as possible (given the timeout) after all of the timestamps needed have elapsed. 
    \item \texttt{BY TIME} describes the duration of each chunk, and the {\tt STRIDE} between chunks. Both values must correspond to an integer number of frames (\eg{} at a frame rate of 30 fps, 0.5 seconds is permitted because it corresponds to 15 frames, but 0.25 seconds is not permitted because it corresponds to 7.5 frames). The chunk duration must be positive, but the stride may be 0 or even negative (for overlapping chunks). 
    \item \texttt{BY REGION} describes the scheme used to further split each chunk spatially. The video owner defines (and publicly releases) a set of schemes. 
    \item \texttt{WITH MASK} specifies the id of a video-owner-provided mask. A mask specifies a set of pixels to remove from the video (\ie{} replace with black pixels). This mask is applied to the video before splitting, and thus before the analyst's executable is able to view the video. 

A \texttt{process\_stmt} uses the analyst-provided executable to convert a set of chunks (created by a \texttt{split\_stmt}) into an intermediate table by specifying the following:
    \item \texttt{USING} provides the path of the analyst-provided executable that should be used to process each chunk of this camera's video data. Analysts may supply any number of executables and use different executables for different cameras. These executables take as input a list of (contiguous) frames from the video, and output rows of a table (whose schema is defined by the \texttt{PRODUCE} directive). Each chunk is processed by an independent instantiation of the executable in a confined execution environment.
    \item \texttt{TIMEOUT} specifies the maximum amount of time that can be used to process each chunk. If execution exceeds this time for any chunk, it is immediately terminated and a row is output with the \texttt{default} values for each column as specified in the \texttt{user\_schema}. 
    The existence of the \texttt{TIMEOUT} clause is 
    crucial for preventing side-channel information leakage via the execution time. 
\item \texttt{PRODUCING \emph{maxrows} WITH SCHEMA \emph{schema}} specifies the schema of columns in the table and the maximum number of rows each chunk will output. For each column, the schema specifies a name (for reference in aggregations), a data type (either \texttt{STRING} or \texttt{NUMBER}, used to determine the types of aggregations permitted over the column), and a default value (to be output if the processing for that chunk crashes or exceeds \texttt{TIMEOUT}). \system{} does not place any trust in the executable or make any assumptions about the content of the output; it truncates the output as necessary to ensure that it adheres to the schema.

In addition to the user-specified columns, \system{} also adds a \texttt{chunk} column to every table which contains the timestamp of the first frame of the chunk. This can be used to narrow time ranges (\eg{} only 12pm-2pm each weekday), aggregate over different amounts of time (\eg{} group results per hour), or match times across cameras or days. 
\end{itemize}

\Para{Table Selection and Aggregation.} A selection-aggregation statement {\tt select\_stmt} computes aggregate statistics from intermediate tables (produced by {\tt process\_stmt}s) using familiar SQL syntax, with some important restrictions to properly control sensitive data leakage: 
\begin{itemize}[leftmargin=*]
    \item The outer-most select (\texttt{outer\_select}) must be an aggregation. Each aggregation must be over a single column (with the exception of \texttt{COUNT(*)}) and is treated as an independent result $r_i$. \system{} uses the Laplace mechanism to add an independent sample of noise to each $r_i$ before releasing it to the analyst, and subtracts from the privacy budget for each $r_i$ as well. The select can optionally group results using a \texttt{GROUP BY}, but only if it explicitly provides the keys (using \texttt{WITH KEYS [...]}, so that they are not dependent on the data) or groups over the \texttt{chunk} column (which \system{} created and therefore can trust). Figure~\ref{fig:sensitivity-table} lists the supported aggregation functions and some restrictions.
    \item An {\tt inner\_select} statement is nested inside an {\tt outer\_select} statement and can be nested inside other {\tt inner\_select} statements.  An {\tt inner\_select} may transform the original table into a new one, combine multiple tables, and select and project rows and columns. 
    \item Some aggregation functions require the range of a column or the number of rows to be constrained (Figure~\ref{fig:sensitivity-table}). When these cannot be inferred, they must be explicitly provided as part of the select via the \texttt{range(col, low, high)} function or the \texttt{LIMIT rows} directive, respectively. 
    \item \system{} includes helper functions, such as \texttt{hour(chunk)} or \texttt{day(chunk)}, which convert the chunk timestamp into the corresponding hour or day. We note their existence simply because they make queries much easier to read.
\end{itemize}

\extendedonly{
\section{Executable for Listing~\ref{lst:car-query-example}}
\label{app:example-executable}

\begin{lstlisting}[language=python, breaklines=true, basicstyle=\ttfamily\scriptsize,  
keywordstyle=\color{green!50!black}\bfseries, caption={Executable \texttt{model.py} used by the \texttt{PROCESS} command of Listing~\ref{lst:car-query-example}, \SecNS{system:query}}, captionpos=b, label={lst:car-query-executable}]
import detectron
import deepsort

tracker = deepsort.Tracker()

for frame in sys.stdin:
    objects = detectron.detect(frame)
    for car in filter(objects, label="car"):
        plate = openalpr.process(car)
        car.plate = plate
        color = compute_obj_color(car)
        car.color = color
    tracker.add(objects)

for car in tracker:
    print(car.plate, car.color, car.speed)
\end{lstlisting}

\system{} expects \texttt{PROCESS} executables to be standalone executables. They must entirely embed any libraries or ML models they use. We can use \texttt{cython3} to achieve this:

\begin{lstlisting}[language=bash, breaklines=true, captionpos=b, label={lst:executable-compile}, caption={Steps necessary to compile python script to standalone executable, which is attached to written \system{} query.}]
> cython3 --embed -o model.c model.py
> gcc -I /usr/include/python3 model.c -lpython3 -o model
\end{lstlisting}
}
%
%
%
%

\nop{
\section{Full Interface Design}
\begin{figure}
    \small
    \centering
        \begin{tabular}{|l|l|}
        \hline
        \textbf{Metadata} & \textbf{Meaning} 
        \\
        \hline
        \multicolumn{2}{|l|}{\cellcolor{gray!25}\textbf{Mandatory}} \\
        \hline
        {\tt camera\_id} & Unique identifier of the camera \\
        Frame rate (\texttt{fps})& Frames per second produced by the camera \\
        Resolution & Video pixel resolution \\
        $\policy_c = (\rho, K)_c$ & Privacy policy \\
        $\epsilon_c[t]$ & Remaining privacy budget for each time $t$ \\
        \hline
        \multicolumn{2}{|l|}{\cellcolor{gray!25}\textbf{Optional}} \\
        \hline
        Sample clip & Link to a sample video clip \\
        GPS coordinates & Camera location \\
        Scene type & Plain-text description, \eg{} ``street intersection'', ``mall'' \\
        Monetary info & Cost of expending a given amount of privacy budget \\
        \hline
        \end{tabular}
    \caption{Publicly available camera metadata.}
    \label{fig:camera-metadata}
\end{figure}
}

\extendedonly{
\section{Full Mechanism Algorithm}
\label{app:full-algorithm}

\fc{Update varaible names, match with Alg 1}
In~\Sec{algorithm}, we presented simplified pseudocode for the \system{} mechanism, assuming a single \texttt{CREATE} statement, a single \texttt{SELECT} statement, and a single video. In~\Fig{full-algorithm}, we present pseudocode for the full mechanism, handling multiple creation and aggregration statements over multiple videos. The generalization from the initial algorithm is straightforward, but we include it here for completeness.

\begin{algorithm}
\SetKwInOut{Input}{Input}
\SetKwInOut{Output}{Output}

\Input{\system{} query $Q = \{[F...], [S...], c, \epsilon_Q\}$, videos $V$, policy $(\rho, K, \epsilon)$}
\Output{Query answer(s) $A$}

\tcp{Ensure sufficient budget for entire interval and $\rho$ margin for \emph{all} aggregations}
\ForEach{$s \in S$}{
	\ForEach{$f \in s.V[s.I\pm\rho]$}{
		\If{$f.budget < \epsilon_Q$}{
			\Return \texttt{DENY}
		}
	}
}
\tcp{There is enough budget for all aggregations, so query is permitted, decrement budget for entire interval for \emph{all} aggregations}
\ForEach{$s \in S$}{
	\ForEach{$f \in s.V[s.I]$}{
		$f.budget$ \texttt{-=} $\epsilon_Q$
	}
}

\tcp{Create intermediate tables $T$}
\ForEach{$c \in C$}{
	\texttt{itable} $\gets$ Table(c.output\_schema)

	\texttt{chunks} $\gets$ Split $V[I]$ into sequential segments each of length \texttt{c.chunk\_sec} with stride \texttt{c.chunk\_stride\_sec}

	\ForEach{$chunk \in chunks$}{
		\texttt{rows} $\gets$ \texttt{F[c.process\_using]}(chunk) \tcp{Executed in confined environment}
		\texttt{itable.append}(rows)
	}
	\texttt{tables[c.table\_name] $\gets$ itable}
}
\tcp{Compute output and add noise}

\ForEach{$(i,s) \in S$}{
	$r \gets$ execute SQL query $s$ over table(s) in $T$, includes joins etc.
		
	$\Delta_{(\rho,K)} \gets $ compute recursively over $s$, using rules in \Tab{sensitivity-rules} \Eqtn{base-table-sensitivity}
		
	$\eta \gets Laplace(\mu=0, b=\frac{\Delta}{\epsilon_Q})$

	$A_i \gets r + \eta$
}

\caption{\system{} Mechanism (full)}
\label{alg:expanded}
\end{algorithm}
}
\section{\system{} Sensitivity Calculation}

\subsection{Propagation Rules}
\label{app:sensitivity}

Figure~\ref{fig:sensitivity-table} provides the full set of rules \system{} uses to compute the sensitivity of a query. 

\newcommand{\STAB}[1]{\begin{tabular}{@{}c@{}}#1\end{tabular}}

\def\ojoin{\setbox0=\hbox{$\Join$}%
\rule[0.0575ex]{.2em}{.4pt}\llap{\rule[1.0775ex]{.2em}{.4pt}}}
\def\leftouterjoin{\mathbin{\ojoin\mkern-5.8mu\Join}}
\def\rightouterjoin{\mathbin{\Join\mkern-5.8mu\ojoin}}
\def\fullouterjoin{\mathbin{\ojoin\mkern-6.8mu\Join\mkern-6.8mu\ojoin}}

\begin{figure*}
\setlength{\extrarowheight}{4pt}
\begin{center}
    \begin{minipage}[t]{0.01\textwidth}
\strut\end{minipage}%
\hfill\allowbreak%
\begin{minipage}[t]{0.40\textwidth}
\tiny
\begin{tabular}{l|p{0.1\textwidth}|p{0.75\textwidth}|}
\cline{2-3}
\multirow[c]{8}{*}{\STAB{\rotatebox[origin=c]{90}{\textsc{Notation}}}}
    & $\policy$ & Privacy policy for each camera: $\{(\rho, K)_c~~\forall~~c~~\in~~\text{cameras}\}$ \\
    & $\Delta_{\policy}(R)$ & Maximum number of rows in relation $R$ that could differ by the addition or removal of any \pkbounded{} event. \\
    & $\rangecons{}(R, a)$ & Range constraint: range of attribute $a$ in $R$ \\
    & $\sizecons{}(R)$ & Size constraint: upper bound on total number of rows in $R$ \\
    & $\varnothing$ & Indicates that a relational operator leaves a constraint unbound. If this constraint is required
    for the aggregation, it must be bound by a predecessor. If it is not required, it can be left unbound. \\
\cline{2-3}
\end{tabular}\label{tab:notation}
\strut\end{minipage}%
\hfill\allowbreak%
\begin{minipage}[t]{0.50\textwidth}
\tiny
\setlength{\extrarowheight}{4pt}
\begin{tabular}{l|l|l|l|l|}
\cline{2-5}

\multirow[c]{7}{*}{\STAB{\rotatebox[origin=c]{90}{\textsc{Aggregation Functions}}}} &
\textbf{Function} &
\textbf{Definition} &
\textbf{Constraints} &
\textbf{Sensitivity ($\mathbf{\Delta(Q)}$)} \\

\cline{2-5}

& Count
	& $Q := \Pi_{\text{count}(*)}(R)$
	& $\Delta$
	& $1 \cdot \D{R}$ \\

\cline{2-5}

& Sum
	& $Q := \Pi_{\text{sum}(a)}(R)$
    & $\Delta, \tilde{C_r}$
	& $\D{R} \cdot \Cr{R}{a}$ \\

\cline{2-5}

& Average
	& $Q := \Pi_{\text{avg}(a)}(R)$
    & $\Delta, \tilde{C_r}, \tilde{C_s}$
	& $\frac{\D{R} \cdot \Cr{R}{a}}{\Cs{R}}$ \\

\cline{2-5}

& Variance
	& $Q := \Pi_{\text{var}(a)}(R)$
    & $\Delta, \tilde{C_r}, \tilde{C_s}$
	& $\frac{(\D{R} \cdot \Cr{R}{a})^2}{\Cs{R}}$ \\

\cline{2-5}

& Argmax
	& $Q := \Pi_{\text{argmax}(a)}(R)$
	& $\Delta, a \in K$
	& $\max_{k \in K}{\D{\sigma_{a=k}(R)}}$ \\

\cline{2-5}

\end{tabular}\label{tab:aggfs}
\end{minipage}
\end{center}

\begin{center}
\tiny
    \begin{tabular}{l|c|l|l|l|l|l|}

\cline{2-7}

\multirow[c]{15}{*}{\STAB{\rotatebox[origin=c]{90}{\textsc{Relational Operators}}}} &

\textbf{Operator} &
\textbf{Type} &
\textbf{Definition} &
$\mathbf{\Dp{R'}}$ &
$\mathbf{\Cr{R'}{a_i}}$ &
$\mathbf{\Cs{R'}}$  \\

\cline{2-7}

  & \multirow{2}{*}{\shortstack{Selection\\($\sigma$)}}

	& \makecell[l]{Standard selection: rows from $R$ that match WHERE condition}
	& $R' := \sigma_{\textsc{where}(\ldots)}(R)$ 
	& $\Dp{R}$
	& $\Cr{R}{a_i}$
	& $\Cs{R}$ \\

	\hhline{~|~|---|--}
	&
	& \makecell[l]{Limit: first $x$ rows from $R$}
	& $R' := \sigma_{\textsc{limit}=x}(R)$
	& $\Dp{R}$
	& $\Cr{R}{a_i}$
	& $\min(x, \Cs{R})$ \\

\cline{2-7}

  & \multirow{3}{*}{\shortstack{Projection\\($\Pi$)}}
	& Standard projection: select attributes $a_i, \ldots$ from $R$
	& $R' := \Pi_{a_i, \ldots}$
	& $\Dp{R}$
	& $\Cr{R}{a_i}$
	& $\Cs{R}$ \\

	\hhline{~|~|---|--}
	&
    & Apply (user-provided, but stateless) $f$ to column $a_i$
	& $R' := \Pi_{f(a_i), \ldots}$
	& $\Dp{R}$
	& $\varnothing$
	& $\Cs{R}$ \\

	\hhline{~|~|---|--}
	&
	& Add range constraint to column $a_i$
	& $R' := \Pi_{a_i \in [l_i,u_i], \ldots}$
	& $\Dp{R}$
	& \makecell[l]{$[l_i,u_i]$ if $a_i \ne \varnothing$ \\$\Cr{R}{a_i}$ otherwise}
	& $\Cs{R}$ \\

\cline{2-7}

 & \multirow{4}{*}{\shortstack{GroupBy\\($\gamma$)}}
    & Group attribute(s) ($g_i$) are \texttt{chunk} (or binned \texttt{chunk}) or \texttt{region}
	& \makecell[l]{$R' := \groupby{g_j,\ldots}{\text{agg}(a_i),\ldots}$ \\ $g_j := \text{chunk} | \text{bin}(\text{chunk})$ }
	& Equation~\ref{eq:base-table-sensitivity}
    & $\D{\text{agg}(a_i)}$
	& $\frac{\Cs{R}}{\text{(bin size)}}$ \\

\cline{3-7}

&
    & Group attribute(s) ($g_j$) are \textit{not} \texttt{chunk} or \texttt{region}
	& $R' := \groupby{g_j,\ldots}{\text{agg}(a_i),\ldots}$
	& $\Dp{R}$
	& $\varnothing$
	& $\varnothing$ \\

&
	& ... discrete set of keys provided for each group (constrains size)
	& $R' := \groupby{g_j \in K_j,\ldots}{\text{agg}(a_i),\ldots}$
	& ...
	& ...
	& $\Pi_j |K_j|$ \\

&
	& ... aggregation constrains range: $agg(a_i) \in [l_i,u_i]$
	& $R' := \groupby{g_j,\ldots}{\text{agg}(a_i) \in [l_i,u_i],\ldots}$
	& ...
	& \makecell[l]{$[l_i,u_i]$ if $a_i \ne \varnothing$\\$\Cr{R}{a_i}$ otherwise}
	& ... \\

\cline{2-7}

  & \multirow{3}{*}{\shortstack{Joins*\\($\Join$)}}
  & *When \textit{immediately} preceeded by GroupBy \textit{over the same key(s)}
	& \multirow{3}{*}{\shortstack[l]{$R' := \groupby{g}{\text{agg}(a)}(R_1 \Join_{g} \ldots \Join_{g} R_n)$\\$R' := \groupby{g}{\text{agg}(a)}(R_1 \fullouterjoin_{g} \ldots \fullouterjoin_{g} R_n)$}}
	& \multirow{3}{*}{$\sum_{i=1}^{n} \Dp{R_i}$}
    & \multirow{3}{*}{\shortstack{(GroupBy\\rules)}}
    & \multirow{3}{*}{\shortstack{(GroupBy\\rules)}} \\

&
	& ... equijoin on $g_j$ (intersection on $g_j$)
&
&
&
& \\

&
	& ... outer join on $g_j$ (union on $g_j$)
&
&
&
& \\

\cline{2-7}
\end{tabular}
\end{center}
\caption{Full set of rules for \system{}'s sensitivity calculation.}
\label{fig:sensitivity-table}
\end{figure*}

\subsection{Proofs}
\label{app:proofs}

\begin{lemma}
    Given a relation $R$, the rules in Figure~\ref{fig:sensitivity-table} are an upper bound on the global sensitivity of a \pkbounded{} event in an intermediate table $t$.
\end{lemma}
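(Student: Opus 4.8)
The plan is to prove the lemma by structural induction on the abstract syntax tree of the query, maintaining throughout the invariant that the triple $(\Dp{R}, \Cr{R}{a}, \Cs{R})$ computed by the rules is simultaneously a valid upper bound on (i) the number of rows of the relation $R$ that the presence or absence of \emph{any} single \pkbounded{} event can change, (ii) the range of each aggregated attribute, and (iii) the total row count. Since two \pkneighboring{} videos differ only in a \pkbounded{} event (Definition~\ref{def:pkneighboring}), it suffices to track how many rows of each intermediate relation such an event can perturb and then how much those perturbed rows can move the final aggregate. For the \emph{base case} I would invoke the already-established Intermediate Table Sensitivity bound $\Dp{t} \le \maxrows_t \cdot K \cdot \maxchunks{\rho}$ (Eq.~\ref{eq:base-table-sensitivity}), which rests on \system{}'s isolated execution guarantee: a \pkbounded{} event is visible in at most $K$ segments, each spanning at most $\maxchunks{\rho}$ chunks (Eq.~\ref{eq:max-chunks}), and the sandboxing ensures an event can influence only the chunks it actually spans, bounding the affected rows per chunk by $\maxrows_t$.

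For the \emph{inductive step} over relational operators, I would assume the invariant holds for the input relation(s) and verify each rule preserves it. Selection ($\sigma$) can only drop rows, so the set of event-affected rows cannot grow, giving $\Dp{\sigma(R)} \le \Dp{R}$, while \texttt{LIMIT} additionally caps $\Cs{}$ at $\min(x, \Cs{R})$. Projection ($\Pi$) is row-preserving, so $\Delta$ carries over unchanged, and an explicit \texttt{range} annotation safely installs $\Cr{}{}$. The \texttt{GroupBy} case splits into two subcases: grouping over the trusted \texttt{chunk}/\texttt{region} columns lets me re-derive the base-table bound (Eq.~\ref{eq:base-table-sensitivity}) because an event touches a bounded number of chunk-keyed groups, whereas grouping over analyst-controlled attributes forces me to retain $\Dp{R}$ and mark $\Cr{}{}$, $\Cs{}$ as unbound unless explicit keys are supplied—otherwise a rare adversarial key could itself leak presence.

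The main obstacle I anticipate is the \texttt{JOIN} rule, which is exactly where the untrusted-table semantics break the intuition from prior DP-over-SQL work. The natural guess $\Dp{R_1 \Join R_2} = \min(\Dp{R_1}, \Dp{R_2})$ fails because the analyst's executable populates each table arbitrarily and can ``prime'' $R_1$ with the join keys an event would produce in $R_2$ (and vice versa); hence a single event need only appear in one input to surface in the joined output. I would therefore argue that in the worst case the per-table contributions \emph{add}, justifying $\Dp{R'} = \sum_{i=1}^{n} \Dp{R_i}$, and I would carefully confine \texttt{JOIN}s to the case immediately followed by a \texttt{GroupBy} over the join key so that the subsequent group-level aggregation is well defined. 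Finally, the aggregation leaves convert row sensitivity into output sensitivity: \texttt{Count} moves by at most $1$ per differing row ($1 \cdot \D{R}$), \texttt{Sum} by at most the attribute range ($\D{R} \cdot \Cr{R}{a}$), and \texttt{Average}/\texttt{Variance} divide by the size constraint $\Cs{R}$—conservative bounds that I would justify by treating the denominator as fixed at its lower-bounding constraint. Composing these per-node bounds up the tree yields the claimed global upper bound, completing the induction and supporting Theorem~\ref{thm:single-query}.
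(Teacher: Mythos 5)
Your proposal is correct and follows essentially the same route as the paper's own proof: structural induction over the query tree, with the base case given by the intermediate-table bound of Equation~\ref{eq:base-table-sensitivity}, selection and projection preserving $\Dp{R}$, \texttt{GROUP BY} split into trusted (\texttt{chunk}/\texttt{region}) versus analyst-supplied-key cases, and the same ``priming'' argument showing that \texttt{JOIN} sensitivities must add rather than take a minimum. Your additional sketch of how the aggregation leaves (\texttt{COUNT}, \texttt{SUM}, \texttt{AVG}) convert row sensitivity into output sensitivity goes slightly beyond what the paper's lemma proof spells out, but it is consistent with the rules in Figure~\ref{fig:sensitivity-table} and does not change the argument.
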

\begin{proof}
Proof by induction on the structure of the query.

\noindent\textbf{Case: $t$}. $\Delta_{\policy}(t)$ is given directly by Equation~\ref{eq:base-table-sensitivity}. 

\noindent\textbf{Case: $R' := \sigma_{\theta}(R)$}. A selection may remove some rows from $R$, but it does not add any, or modify any existing ones, so in the worst case an individual can be in just as many rows in $R'$ as in $R$ and thus $\Dp{R'} \le \Dp{R}$ and the constraints remain the same. If $\theta$ includes a $\textsc{limit}=x$ condition, then $R'$ will contain at most $x$ rows, regardless of the number of rows in $R$. 

\noindent\textbf{Case: $R' := \Pi_{a,\ldots}(R)$}. A projection never changes the number of rows, nor does it allow the data in one row to influence another row, so in the worst case an individual can be in at most the same number of rows in $R'$ as in $R$ ($\Dp{R'} \le \Dp{R}$) and the size constraint $\Cs{R}$ remains the same. If the projection transforms an attribute by applying a stateless function $f$ to it, then we can no longer many assumptions about the range of values in $a$ ($\Cr{R'}{a} = \varnothing$), but nothing else changes because the stateless nature of the function ensures that data in row cannot influence any others. 

\noindent\textbf{Case: GroupBy}. A \texttt{GROUP BY} over a fixed set of a $n$ keys is equivalent to $n$ separate queries that use the same aggregation function over a $\sigma_{\textsc{WHERE}col=key}(R)$. If the column being grouped is a user-defined column, \system{} requires that the analyst provide the keys directly. If the column being grouped is one of the two implicit columns (chunk or region), then the set of keys is not dependent on the contents of the data (only its length) and thus are fixed regardless. 

\noindent\textbf{Case: Join}. Consider a query that computes the size of the intersection between two cameras, \texttt{PROCESS}'d into intermediate tables $t_1$ and $t_2$ respectively. If $\Delta(t_1) = x$ and $\Delta(t_2) = y$, it is tempting to assume $\Delta(t_1 \cap t_2) = \min(x,y)$, because a value needs to appear in both $t_1$ and $t_2$ to appear in the intersection. However, because the analyst’s executable can populate the table arbitrarily, they can ``prime'' $t_1$ with values that would only appear in $t_2$, and vice versa. As a result, a value need only appear in either $t_1$ or $t_2$ to show up in the intersection, and thus $\Delta(t_1 \cap t_2) = x + y$ (the sum of the sensitivities of the tables). 

\end{proof}


\begin{theorem}
Consider an adaptive sequence (\SecNS{threat-model}) of $n$ queries $Q_1,\ldots,Q_n$, each over the same camera $C$, a privacy policy $(\rho_C, K_C)$, and global budget $\epsilon_C$. \system{} (Algorithm~\ref{alg:main}) provides $(\rho_C, K_C, \epsilon_C)$-privacy for all $Q_1,\ldots,Q_n$. 

\begin{proof}
Consider two queries $Q_1$ (over time interval $I_1$, using chunk size $c_1$ and budget $\epsilon_1$) and $Q_2$ (over $I_2$, using $c_2$ and $\epsilon_2$). Let $v_1 = V[I_1]$ be the segment of video $Q_1$ analyzes and $v_2 = V[I_2]$ for $Q_2$. 
Let $E$ be a \pkbounded{} event. 

\paragraph{Case 1: $I_1$ and $I_2$ are not $\rho$-disjoint} The budget check (lines 1-3 in Algorithm~\ref{alg:main}) ensures that these two queries must draw from the same privacy budget, because their effective ranges overlap by at least one frame (but may overlap up to all frames). By Theorem~\ref{thm:single-query}, \system{} is $(\rho, K, \epsilon_1)$-private for $Q_1$ and $(\rho, K, \epsilon_2$)-private for $Q_2$. \extendedonly{(Fill in details here).} By Dwork~\cite[Theorem 3.14]{original-dp}, the combination of $Q_1$ and $Q_2$ is $(\rho, K, \epsilon_1 + \epsilon_2)$-private.

\paragraph{Case 2: $I_1$ and $I_2$ are $\rho$-disjoint} In other words, $I_1 + \rho < I_2 - \rho$, thus the budget check (lines 1-3) allows these two queries to draw from entirely separate privacy budgets. Since the intervals are $\rho$-disjoint, and all segments in $E$ must have duration $\le \rho$, it is not possible for the same segment to appear in even a single frame of \emph{both} intervals. 

\paragraph{Case 2a: $E$ is entirely contained within either $I_1$ or $I_2$} Consequently, none of $E$'s segments are contained in the other interval. (Should be straightforward, come back to this, more interesting case is 2b).
\paragraph{Case 2b: $E$ spans some segments in $I_1$ and some in $I_2$}
Let $K_1$ be the number of segments contained in $I_1$, each of duration $\le \rho$, and $K_2$ be the remaining segments contained in $I_2$, each of duration $\le \rho$. In other words, $E$ is $(\rho, K_1)$-bounded in $v_1$ and $(\rho, K_2)$-bounded in $v_2$. Since $E$ has at most $K$ segments, $K_1 + K_2 \le K$. 

We need to show that the probability of observing both $A_1$ and $A_2$ if the inputs are the actual segments $v_1$ and $v_2$ is close ($e^\epsilon$) to the probability of observing those values if the inputs are the neighboring segments $v_1'$ and $v_2'$:
$$ \frac{ \Pr[A_1 = Q_1(v_1), A_2 = Q_2(v_2)] }{ \Pr[A_1 = Q_1(v_1'), A_2 = Q_2(v_2')] } \le \exp(e)$$

\noindent Since the probability of observing $A_1$ is independent of observing $A_2$ (the randomness is purely over the nois added by \system):


\newcommand{\lpdfexp}[2]{\frac{|A_{#1}-Q_{#1}(v_{#1}{#2})|}{b_{#1}}}
\newcommand{\lpdf}[2]{\frac{1}{2b_{#1}}\exp(-\lpdfexp{#1}{#2})}
\newcommand{\pkcsens}[3]{#1(\lceil\frac{#2}{#3}\rceil + 1)}

\begin{align*}
    & \frac{\Pr[A_1 = Q_1(v_1), A_2 = Q_2(v_2)]}{\Pr[A_1 = Q_1(v_1'), A_2 = Q_2(v_2')]} \\
    & \le \frac{\Pr[A_1 = Q_1(v_1)]\Pr[A_2 = Q_2(v_2)]}{\Pr[A_1 = Q_1(v_1')]\Pr[A_2 = Q_2(v_2')]} \\   
    &\le \frac{\lpdf{1}{}\lpdf{2}{}}{\lpdf{1}{'}\lpdf{2}{'}}\\ & \quad\text{(By Algorithm~\ref{alg:main}, Line 13)}\\
    &= {\scriptstyle \exp(\frac{|A_1-Q_1(v_1')| - |A_1-Q_1(v_1)|}{b_1} + \frac{|A_2-Q_2(v_2')| - |A_2-Q_2(v_2)|}{b_2}) }\\
    \shortintertext{If $K_1$ segments are in $v_1$ and $K_2$ segments are in $v_2$, the numerator of each fraction above is the sensitivity of a $(\rho,K_1)$-bounded event and a $(\rho,K_2)$-bounded event, respectively. $b_1$ and $b_2$ are the amount of noise actually added to the query, which are both based on $K$:}
    &\le \exp(\frac{\Delta_{(\rho,K_1)}(Q_1)}{\Delta_{(\rho,K)}(Q_1) / \epsilon} + \frac{\Delta_{(\rho,K_2)}(Q_2)}{\Delta_{(\rho,K)}(Q_2) / \epsilon}) \\
    &= \exp(\epsilon\cdot(\frac{\pkcsens{K_1}{\rho}{c_1}}{\pkcsens{K}{\rho}{c_1}} + \frac{\pkcsens{K_2}{\rho}{c_2}}{\pkcsens{K}{\rho}{c_2}})) \\
    & \quad\text{(by Equation~\ref{eq:base-table-sensitivity})}\\
    &= \exp(\epsilon\cdot(\frac{K_1}{K}+\frac{K_2}{K})) \quad\text{(recall $K \ge K_1+K_2$)}\\
    &\le \exp(\epsilon)
\end{align*}

\end{proof}
\end{theorem}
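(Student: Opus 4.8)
The plan is to reduce the multi-query guarantee to a sum of per-query privacy losses and then show that the budget mechanism forces that sum below $\epsilon_C$. First I would fix two arbitrary \pkneighboring{} videos $v,v'$, whose set of differing frames forms a \pkbounded{} event $E$, i.e. a union of at most $K_C$ segments each of duration at most $\rho_C$. Because \system{} draws an independent Laplace sample for every released result and the queries form an adaptive sequence, I would invoke the composition theorem for differential privacy~\cite{original-dp} (which remains valid under adaptivity) to write the joint log-likelihood ratio over the outputs $(A_1,\dots,A_n)$ as $\sum_i \ell_i$, where $\ell_i$ is the privacy loss contributed by $Q_i$ alone; it then suffices to bound $\sum_i \ell_i \le \epsilon_C$.

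Next I would bound each $\ell_i$ by how much of $E$ the query $Q_i$ can actually see. By Theorem~\ref{thm:single-query} the noise added to $Q_i$ has scale $b_i = \Delta_{(\rho_C,K_C)}(Q_i)/\epsilon_i$, whereas the genuine change in $Q_i$'s raw output between $v$ and $v'$ is governed only by the chunks of $Q_i$ that $E$ actually touches. Since Equation~\ref{eq:base-table-sensitivity} makes the table sensitivity linear in both $K$ and the per-segment chunk count $\maxchunks{\rho}$, and every propagation rule carries this factor linearly, I obtain $\ell_i \le \epsilon_i\, n_i/(K_C\,\maxchunks{\rho})$, where $n_i$ is the number of $Q_i$-chunks meeting $E$. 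Decomposing $n_i = \sum_s n_i^s$ over the segments $s$ of $E$, the whole problem collapses to proving, for each segment $s$, the per-segment bound $\sum_i \epsilon_i\, n_i^s \le \epsilon_C\,\maxchunks{\rho}$; summing over the $\le K_C$ segments and dividing by $K_C\,\maxchunks{\rho}$ then yields $\sum_i \ell_i \le \epsilon_C$.

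The per-segment bound is where the $\rho$-margin of the budget check does the real work, and it is the step I expect to be hardest. The two facts I would combine are: (i) a segment of duration $\le \rho_C$ meets at most $\maxchunks{\rho}$ chunks of any one query, and $n_i^s$ is moreover at most the number $m_i$ of $s$-frames in $Q_i$'s core; and (ii) because $s$ has duration $\le \rho_C$, every query whose core meets $s$ has all of $s$ inside its checked margin $[a-\rho_C, b+\rho_C]$, so at acceptance time every frame of $s$ still had budget $\ge \epsilon_i$. Fact (ii) is exactly the guarantee the $\rho$-margin buys beyond a naive per-frame budget, and it is what rules out ``tiling'' $s$ with many short, high-$\epsilon_i$ queries. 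Combining the per-frame conservation $\sum_{i:\,f\in\mathrm{core}(Q_i)}\epsilon_i \le \epsilon_C$ (for each frame $f\in s$) with the margin constraint and the trade-off $n_i^s \le m_i$ should give the weighted bound.

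I would organize the argument around the $n=2$ dichotomy as the clean base case: either the two queries are $\rho_C$-disjoint, in which case no single segment of $E$ can span both, $E$ splits as $K_1+K_2\le K_C$, and the noise calibrated to $K_C$ absorbs the loss; or they overlap within $\rho_C$, in which case the margin check forces $\epsilon_1+\epsilon_2\le\epsilon_C$ and ordinary composition applies. The genuine obstacle is generalizing to arbitrary $n$: one segment can be partially seen by several mutually overlapping queries with distinct chunk boundaries, and the plain sum of their $\epsilon_i$ can exceed $\epsilon_C$ (so a naive $\epsilon_i$-counting argument fails). It is precisely the weighted, chunk-level accounting above, rather than a sum of $\epsilon_i$, that keeps the total loss bounded, and carefully verifying this per-segment inequality — while confirming adaptivity does not break the composition — is the crux of the proof.
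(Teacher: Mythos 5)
Your overall architecture---adaptive composition, reduction to per-query losses, linearity of Equation~\ref{eq:base-table-sensitivity} in the number of impacted chunks, and a per-segment accounting---is structurally sensible, and your $n=2$ dichotomy is precisely the paper's own proof: the paper treats only two queries, handling non-$\rho$-disjoint intervals by budget-forced $\epsilon_1+\epsilon_2\le\epsilon_C$ plus sequential composition, and $\rho$-disjoint intervals by the split $K_1+K_2\le K_C$ together with the $K$-linearity of the sensitivity (its Case~2b computation is exactly your ``linearity'' step). Your further observation that for $n\ge 3$ a plain sum of the $\epsilon_i$ is not controlled by the budget mechanism is correct, and it is a real issue the paper never addresses: three queries whose cores disjointly tile a single segment each pass the budget check with $\epsilon_i=\epsilon_C/2$, so $\sum_i \epsilon_i = \tfrac{3}{2}\epsilon_C$, and only a weighted argument could save the day there.

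However, the step you defer---the per-segment weighted bound---is not merely the hardest step; it cannot be derived from the facts you list, and it is in fact false for the mechanism as specified. The root cause is that the $\rho$-margin is \emph{checked but never charged}: a query's $\epsilon$ is subtracted only from its core $[a,b]$, never from the margin, so among queries with pairwise-disjoint cores touching a segment $s$, your facts (i), (ii) and per-frame conservation yield only \emph{pairwise} constraints $\epsilon_i+\epsilon_j\le\epsilon_C$, which cannot bound a weighted sum whose per-query weights do not decay with $n$. Concretely, let $s$ be a \pkbounded{} event consisting of one segment of duration $\rho_C$ (take $K_C=1$), and issue $n$ queries whose cores are disjoint slivers of tiny length $\eta$ inside $s$, each requesting $\epsilon_i=\epsilon_C/2$ and declaring chunk duration $c_i\ge\rho_C$ (nothing in the interface ties the declared chunk duration to the interval length). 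Every query is accepted by Algorithm~\ref{alg:main}: each margin contains all slivers, but each sliver is decremented at most once, so every checked frame retains at least $\epsilon_C/2\ge\epsilon_i$. Yet for each query the event impacts $n_i^s=1$ chunk while the noise is calibrated to $1+\lceil\rho_C/c_i\rceil=2$ chunks, so the worst-case per-query privacy loss is $\epsilon_i/2=\epsilon_C/4$ (achieved, e.g., by an executable that emits $\maxrows$ maximal rows exactly when the event is visible in its chunk, under a \texttt{SUM}), and the joint loss is $n\epsilon_C/4$, exceeding $\epsilon_C$ once $n\ge 5$ and growing without bound. This simultaneously falsifies your per-segment inequality (in your notation, $\sum_i \epsilon_i n_i^s = n\epsilon_C/2 > 2\epsilon_C$ for $n\ge 5$) and the general-$n$ claim itself, unless the mechanism is strengthened---for example by also charging the margin, or by forcing $c_i$ to be calibrated against the actual number of chunks in the queried interval. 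So the gap is not one of missing care in the chunk-level accounting: no refinement of that accounting can close it without first changing Algorithm~\ref{alg:main} or adding assumptions that exclude such query sequences.
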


\extendedonly{
\section{Alternative Privacy Definitions}\label{app:alt-definitions}

Why is \pkeprivacy{} defined exactly the way it is? For example, why not let the privacy policy adjust as the window grows (i.e. protect reptitions of events over longer time periods?) Fundamentally, this does not work, as someone could always issue multiple queries over shorter time periods to extract more information. Prove this more rigorously if possible. Maybe can use the old attempt at proving the old definition as a starting point / counter-example.
}

\section{Masking Optimization}

\tightsubsection{Masking Effectiveness}\label{app:masking-effectiveness}
We extend our evaluation of the potential effectiveness of the masking optimization (\SecNS{practical:mask}) in Table~\ref{tab:masking-effectiveness} by adding 3 videos from BlazeIt~\cite{blazeit} and 4 from MIRIS~\cite{miris}. In each of the 10 total videos we evaluated, there exists a mask that retains a majority of the objects while reducing the maximum persistence by at least an order of magnitude. 

\begin{table*}[]
\small
\centering
\begin{tabular}{|l|l|r|r|r|l|l|}
\hline
\textbf{Dataset}         & \textbf{\begin{tabular}[c]{@{}l@{}}Video \\ Name\end{tabular}} & \multicolumn{1}{l|}{\textbf{\begin{tabular}[c]{@{}l@{}}\% Grid \\ Boxes Masked\end{tabular}}} & \multicolumn{1}{l|}{\textbf{\begin{tabular}[c]{@{}l@{}}Max Perst. \\ before Mask\end{tabular}}} & \multicolumn{1}{l|}{\textbf{\begin{tabular}[c]{@{}l@{}}Max Perst. (frames) \\ after Mask\end{tabular}}} & \textbf{\begin{tabular}[c]{@{}l@{}}Relative Change \\ in Max Perst.\end{tabular}} & \textbf{\begin{tabular}[c]{@{}l@{}}\% Identities Retained \\ After Mask\end{tabular}} \\ \hline
\multirow{3}{*}{Privid}  & \auburn{}                                                      & 17                                                                                            & 1951                                                                                            & 190                                                                                                  & 10.27x                                                                            & 91.06\%                                                                               \\ \cline{2-7} 
                         & \hampton{}                                                     & 30                                                                                             & 28800                                                                                           & 601                                                                                                 & 47.92x                                                                            & 91.3\%                                                                               \\ \cline{2-7} 
                         & \shibuya{}                                                     & 19                                                                                            & 2746                                                                                            & 497.16                                                                                               & 5.52x                                                                             & 87.24\%                                                                               \\ \hline
\multirow{3}{*}{BlazeIt} & grand-canal                                                    & 35                                                                                            & 10930                                                                                           & 2496                                                                                                 & 4.38x                                                                             & 26.67\%                                                                               \\ \cline{2-7} 
                         & venice-rialto                                                  & 6                                                                                             & 37992                                                                                           & 7696                                                                                                 & 4.94x                                                                             & 94.21\%                                                                               \\ \cline{2-7} 
                         & taipei                                                         & 20                                                                                            & 56931                                                                                           & 2444                                                                                                 & 23.29x                                                                            & 99.94\%                                                                               \\ \hline
\multirow{4}{*}{Miris}   & shibuya                                                        & 2                                                                                             & 9363                                                                                            & 2182                                                                                                 & 4.29x                                                                             & 96.43\%                                                                               \\ \cline{2-7} 
                         & beach                                                          & 5                                                                                             & 4843                                                                                            & 843.2                                                                                                & 5.74x                                                                             & 94.79\%                                                                               \\ \cline{2-7} 
                         & warsaw                                                         & 4                                                                                             & 6479                                                                                            & 1147                                                                                                 & 5.65x                                                                             & 94.82\%                                                                               \\ \cline{2-7} 
                         & uav                                                            & 40                                                                                            & 595                                                                                             & 130                                                                                                  & 4.58x                                                                             & 75.57\%                                                                               \\ \hline
\end{tabular}
\caption{Potential effectiveness of masking on videos from an extended dataset, including videos from prior work.}
\label{tab:masking-effectiveness}
\end{table*}

\tightsubsection{Mask to Policy Data Structure}\label{app:mask-data-structure}

%
%
%

\begin{algorithm}
\small
\SetKwInOut{Input}{Input}
\SetKwInOut{Output}{Output}

\Input{ids: set of all detected private bounding boxes}
\Output{something}

boxes\_to\_mask $\leftarrow$ \texttt{[]}
$unmasked\_boxes~\leftarrow~B$\;
\While{$unmasked\_boxes$ is not empty}{
    $max\_track\_id =$ track id with largest persistence\;
    $max\_grid\_box =$ box $b$ intersecting with $max\_track\_id$ for the most number of frames where $b \in unmasked\_boxes$\;
    $boxes\_to\_mask\texttt{.append(}max\_grid\_box\texttt{)}$\;
    \For{track\_id $t$ s.t. $t$ intersects $max\_grid\_box$} {
        \For{frame $f$ s.t. $t$ intersects $max\_grid\_box$ at $f$}{
            remove presence of $t$ at $max\_grid\_box$ for frame $f$\;
            \If {$t$ is no longer present in frame $f$ for all boxes $b$ in $unmasked\_boxes$} {
                persistence of $t \mathrel{-}= 1$\;
            }
        }
    }
    $unmasked\_boxes \mathrel{-}= max\_grid\_box$\;
}
\caption{Generating Ordered List of Boxes to Mask }
\label{alg:gen-masks}
\end{algorithm}

\begin{figure}
  \centering
  \includegraphics[width=\columnwidth]{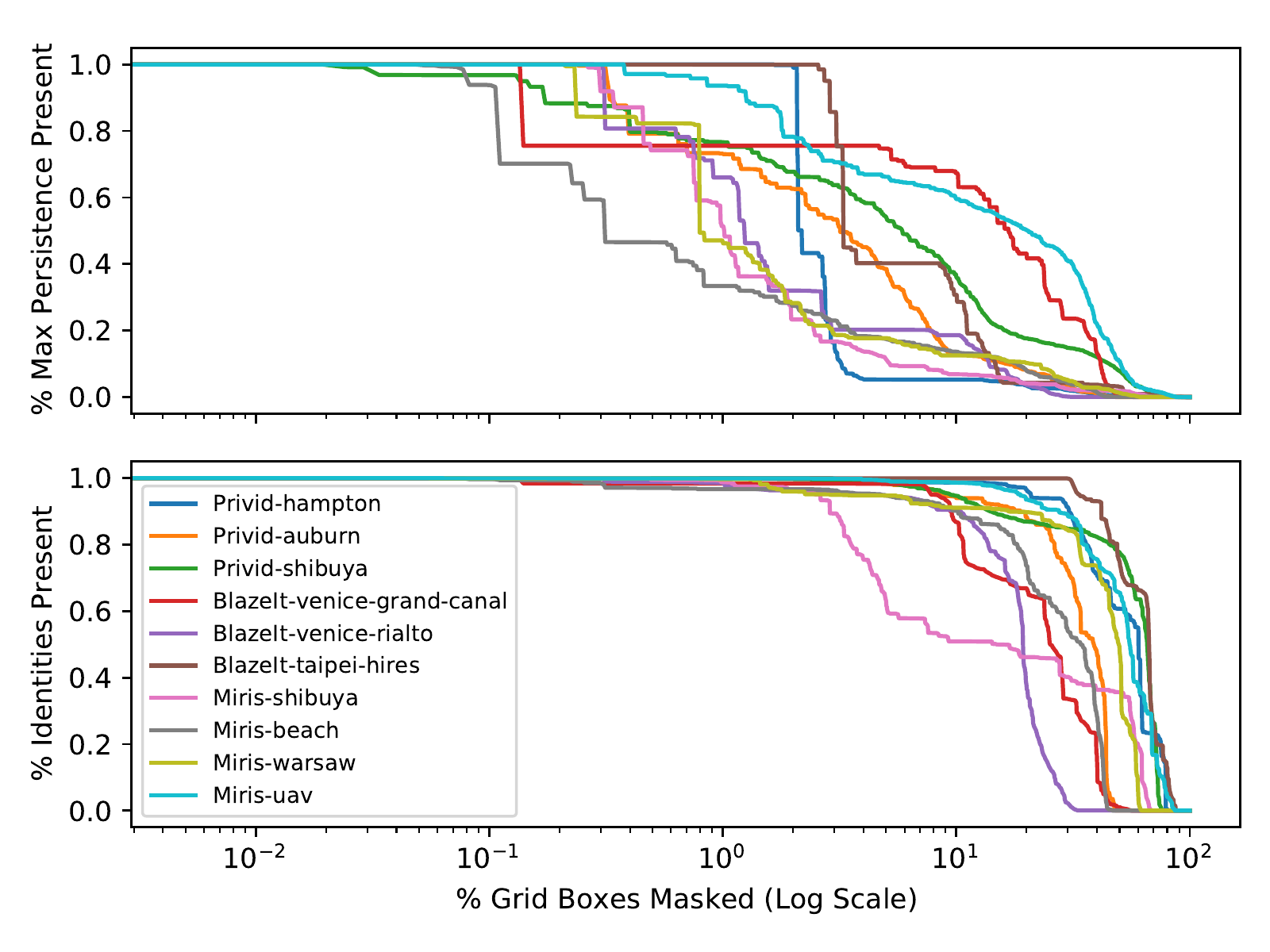}
  \tightcaption{Cumulative Effects of Masking Boxes On Max Persistence and \# Unique Identities Retained. Boxes are masked in the order calculated by Alg.~\ref{alg:gen-masks}. \neil{todo: change names of videos to labels used in privid... change dataset name to p, b, or m}}
  \label{fig:masking-effects}
\end{figure}

In order to decide on the best mask for their specific query, the analyst needs to know how a given mask will impact the persistence, and thus the amount of noise added to their query. 
It is simple for the video owner, with access to the full historical persistence data collected from a camera, to compute the persistence bound after applying a given mask, but this dataset contains individual object tracks of individuals, which is clearly not safe to release to the analyst. Instead, the video owner can compute and releases to the analyst an intermediate data structure, which contains enough information for the analyst to compute the persistence for a given mask without violating privacy.

While there are many possible ways of doing this, we propose one such structure that can be used to choose masks that maximize the persistence decrease with the minimum number of pixels masked. 

A mask $M$ is defined as a set of pixels that will be removed from all frames of the video. 
We start with an empty mask $M_0$ and denote the persistence of the video with this mask to be $\rho_0$. 
The key insight in developing this structure is that, for any given mask (including the initial empty mask) there is a limited set of pixel(s) that one can add to the mask (which we'll call $M_1$ to decrease the overall persistence to $\rho_1$. A tree of masks and associated persistence values, but because of our optimization constraints the tree will be narrow (i.e. typically one child per node, at most a few children), thus tree will be size $\Theta(pixels)$ rather than $\Theta(2^{pixels})$.   

\emph{Claim.} Regardless of the data structure chosen, any structure that maps from potential masks to persistence values does not break our privacy guarantee, because, while it may leak some information, it cannot leak any more information than the analyst already knows about an individual.

\emph{Proof sketch.}
Suppose we have a video in which only a single individual ever appears. In order for us to be confident that a mask lowering persistence indicates the presence of that individual, we have to already know that they are the only individual present, otherwise we don't learn anything. Further, it only tells us that the individual was present in the historical data, but it does not tell us whether or not an individual is present in any particular data we query over, and thus the privacy policy is never violated. 

\Para{Evaluation.} In addition to our heat map visualized in \Fig{heatmap}, we provide an additional way for analysts to understand the effects of masking. We split a frame into a grid of $10x10$ pixel boxes and using the algorithm described in Algorithm~\ref{alg:gen-masks}, generate an ordered list of boxes such the first box decreases the maximum persistence the most, the second boxes decreases it the second most, and so on. As the analyst walks through this list, she can see the resulting cumulative effect on the maximum persistence and the number of unique identities retained. \Fig{masking-effects} plots these relationships, using videos from our dataset as well as videos from BlazeIt~\cite{blazeit} and Miris~\cite{miris}.

\end{appendix}

\end{document}